\documentclass[11pt,a4paper]{article}

\usepackage[mathscr]{euscript}\usepackage{times}
\usepackage{cite}
\usepackage{fullpage} 
\usepackage{amsmath}
\usepackage{amsfonts}
\usepackage{amssymb}
\usepackage{amsthm}
\usepackage{filecontents}  
\usepackage{algorithm}
\usepackage[noend]{algpseudocode}

\begin{filecontents}{\jobname.bbl}

\end{filecontents}

\begin{filecontents}{\jobname.bib}
@article {Bra96,
    AUTHOR = {A. Brandst{\"a}dt},
     TITLE = {Partitions of graphs into one or two independent stable sets
              and cliques},
   JOURNAL = {Discrete Math.},
    VOLUME = {152},
      YEAR = {1996},
     PAGES = {47--54},
}

@inproceedings{CH94,
    AUTHOR = {A. Cournier and M. Habib},
     TITLE = {A new linear algorithm for modular decomposition},
 booktitle = {Proc. 19th International Colloquium on Trees in Algebra
              and Programming (CAAP'94)},
   series  = {LNCS},
 publisher = {Springer},
    VOLUME = 787,
      YEAR = 1994,
     PAGES = {68--84}
}

@book{GJ,
 author = {M. R. Garey and D. S. Johnson},
 title = {Computers and Intractability: A Guide to the Theory of NP-Completeness},
 year = 1979,
 publisher = {W. H. Freeman and Company, New York}
}

@book{CSPbook,
 author = {C. Lecoutre},
 title = {Constraint Networks: Techniques and Algorithms},
 year = 2009,
 publisher = {Wiley--IEEE Press}
}

@article {FHFull,
    AUTHOR = {T. Feder and P. Hell},
     TITLE = {Full constraint satisfaction problems},
   JOURNAL = {SIAM J. Comput.},
  FJOURNAL = {SIAM Journal on Computing},
    VOLUME = 36,
    NUMBER = 1,
     PAGES = {230--246},
      YEAR = 2006
}

@article {FV,
    AUTHOR = {T. Feder and M. Vardi},
     TITLE = {The computational structure of monotone monadic {SNP} and
              constraint satisfaction: a study through {D}atalog and group
              theory},
   JOURNAL = {SIAM J. Comput.},
  FJOURNAL = {SIAM Journal on Computing},
    VOLUME = 28,
    NUMBER = 1,
     PAGES = {57--104},
      YEAR = 1999
}

@article {FH,
    AUTHOR = {T. Feder and P. Hell},
     TITLE = {List homomorphisms to reflexive graphs},
   JOURNAL = {J. Combin. Theory Ser. B},
  FJOURNAL = {Journal of Combinatorial Theory. Series B},
    VOLUME = 72,
    NUMBER = 2,
     PAGES = {236--250},
      YEAR = 1998
}

@article {FHHList,
    AUTHOR = {T. Feder and P. Hell and J. Huang},
     TITLE = {List homomorphisms and circular arc graphs},
   JOURNAL = {Combinatorica},
    VOLUME = 19,
    NUMBER = 4,
     PAGES = {487--505},
      YEAR = 1999
}

@article {CRST,
    AUTHOR = {M. Chudnovsky and N. Robertson and P. Seymour and
              R. Thomas},
     TITLE = {The strong perfect graph theorem},
   JOURNAL = {Ann. Math. (2)},
  FJOURNAL = {Annals of Mathematics. Second Series},
    VOLUME = 164,
    NUMBER = 1,
     PAGES = {51--229},
      YEAR = 2006
}

@article {lovasz,
    AUTHOR = {L. Lov{\'a}sz},
     TITLE = {Normal hypergraphs and the perfect graph conjecture},
   JOURNAL = {Discrete Math.},
  FJOURNAL = {Discrete Mathematics},
    VOLUME = 2,
      YEAR = 1972,
    NUMBER = 3,
     PAGES = {253--267}
}

@article{BT00,
  author   = {B. Bollob\'as and A. Thomason},
  title    = {The Structure of Hereditary Properties and Colourings of
              Random Graphs},
  journal  = {Combinatorica},
  volume   = 20,
  issue    = 2,
  pages    = {173--202},
  year     = 2000,
}

@article{BD,
  author    = {A. Bulatov and V. Dalmau},
  title     = {Towards a dichotomy theorem for the counting constraint
               satisfaction problem},
  journal   = {Inform. Comput.},
  volume    = 205,
  number    = 5,
  pages     = {651--678},
  year      = 2007,
}

@inproceedings{Bul08,
  author    = {A. Bulatov},
  title     = {The Complexity of the Counting Constraint Satisfaction Problem},
  booktitle = {Proc. 35th International Colloquium on Automata, Languages and Programming (ICALP 2008)},
  series    = {LNCS},
  publisher = {Springer},
  volume    = 5125,
  pages     = {646--661},
  year      = 2008,
}

@article{DG,
  author    = {M. Dyer and C. Greenhill},
  title     = {The complexity of counting graph homomorphisms},
  journal   = {Random Struct. Algorithms},
  volume    = 17,
  number    = {3--4},
  pages     = {260--289},
  year      = 2000,
}

@article{DRfull,
  author    = {M. Dyer and D. Richerby},
  title     = {An Effective Dichotomy for the Counting Constraint
               Satisfaction Problem},
  journal   = {SIAM J. Comput},
  volume    = 42,
  number    = 3,
  year      = 2013,
  pages     = {1245--1274}
}

@inproceedings {EKR,
    AUTHOR = {H. Everett and S. Klein and B. Reed},
     TITLE = {An optimal algorithm for finding clique-cross partitions},
 BOOKTITLE = {Proc. 29th Southeastern International Conference on
              Combinatorics, Graph Theory and Computing},
   JOURNAL = {Congr. Numer.},
  FJOURNAL = {Congressus Numerantium. A Conference Journal on Numerical
              Themes},
    VOLUME = 135,
      YEAR = 1998,
     PAGES = {171--177},
}

@book{Golumbic,
  title    = {Algorithmic Graph Theory and Perfect Graphs},
  author   = {M. Golumbic},
  publisher= {Elsevier Science},
  edition   ={second},
  year     = 2004
}

@article{FHH,
  author    = {T. Feder and P. Hell and J. Huang},
  title     = {Bi-arc graphs and the complexity of list homomorphisms},
  journal   = {J. Graph Theory},
  volume    = 42,
  number    = 1,
  pages     = {61--80},
  year      = 2003
}

@article{FHKM,
  author    = {T. Feder and P. Hell and S. Klein and R. Motwani},
  title     = {List partitions},
  journal   = {SIAM J. Discrete Math.},
  volume    = 16,
  number    = 3,
  pages     = {449--478},
  year      = 2003
}

@inproceedings{HHN,
  author    = {P. Hell and M. Hermann and M. Nevisi},
  title     = {Counting partitions of graphs},
  booktitle = {Proc. 23rd International Symposium on Algorithms and Computation (ISAAC 2012)},
  series    = {LNCS},
  publisher = {Springer},
  volume    = 7676,
  pages     = {227--236},
  year      = 2012
}

@article{PB83,
  author    = {J. Provan and M. Ball},
  title     = {The Complexity of counting cuts and of computing the
               probability that a graph is connected},
  journal   = {SIAM J. Comput.},
  volume    = 12,
  number    = 4,
  pages     = {777--788},
  year      = 1983
}

@article{CS1987:Bull-free,
    author   = {V. Chv\'atal and N. Sbihi},
    title    = {Bull-Free {B}erge Graphs are Perfect},
    journal  = {Graph. Combinator.},
    volume   = 3,
    pages    = {127--139},
    year     = 1987
}

@techreport{DST:List-Hcol,
    author   = {J. D{\'i}az and M. Serna and D. Thilikos},
    title    = {Counting List {$H$}-Colorings and Variants},
    number   = {TR LSI-0127-R},
    institution = {Universitat Polit{\`e}cnica de Catalunya},
    year     = 2001
}

@article{EKR1997:Hom-pair,
    author   = {H. Everett and S. Klein and B. Reed},
    title    = {An Algorithm for Finding Homogeneous Pairs},
    journal  = {Discrete Appl. Math.},
    volume   = 72,
    number   = 3,
    pages    = {209--218},
    year     = 1997
}
\end{filecontents}

\newcounter{thm}
\newtheorem{theorem}[thm]{Theorem}

\newtheorem{proposition}[thm]{Proposition}
\newtheorem{lemma}[thm]{Lemma}
\newtheorem{corollary}[thm]{Corollary}
\theoremstyle{definition}
\newtheorem{definition}[thm]{Definition}

\makeatletter
\newtheorem*{rep@theorem}{\rep@title}
\newcommand{\newreptheorem}[2]{
\newenvironment{rep#1}[1]{
 \def\rep@title{#2 \ref{##1}}
 \begin{rep@theorem}}
 {\end{rep@theorem}}}
 \newtheorem*{rep@corollary}{\rep@title}
\newcommand{\newrepcorollary}[2]{
\newenvironment{rep#1}[1]{
 \def\rep@title{#2 \ref{##1}}
 \begin{rep@corollary}}
 {\end{rep@corollary}}}
\makeatother

\newcommand\prob[3]{
  \begin{description}
    \item[\it Name.] #1
    \vspace*{-2mm}
    \item[\it Instance.] #2
    \vspace*{-2mm}
    \item[\it Output.] #3
  \end{description}
}

\newreptheorem{theorem}{Theorem}
\newreptheorem{lemma}{Lemma}
\newreptheorem{corollary}{Corollary}

\newcommand{\CSP}{\ensuremath{\mathrm{CSP}}}
\newcommand\nCSP{\ensuremath{\mathrm{\#CSP}}}
\newcommand\FP{\ensuremath{\mathrm{FP}}}

\newcommand\NP{\ensuremath{\mathrm{NP}}}

\newcommand\numP{\ensuremath{\mathrm{\#P}}}
\def\nP{\numP}

\newcommand\defn[1]{\emph{#1}}
\newcommand\LMPartitions[2]{{\sc \#$#1$-$#2$-partitions}}
\newcommand\nListPartitions[1]{{\sc \#List-$#1$-partitions}}
\newcommand\nPartitions[1]{{\sc \#$#1$-partitions}}
\newcommand{\CMPartitions}[2]{{\sc \#$#1$-$#2$-partitions}}
\newcommand\ListPartitions[1]{{\sc List-$#1$-partitions}}
\newcommand\LMPurify[2]{{\sc \#$#1$-$#2$-purify}}
\newcommand\LMPurifyStep[2]{{\sc \#$#1$-$#2$-purify-step}}
\newcommand\ExistsDerectSeq{{\sc ExistsDerectSeq}}
\newcommand\MatrixHasDerectSeq{{\sc MatrixHasDerectSeq}}

\newcommand\powerset[1]{{\mathcal{P}(#1)}}
\newcommand{\subclo}[1]{{\mathscr{S}(#1)}}
\newcommand\lists{\ensuremath{\mathcal L}}
\newcommand{\const}[1]{\langle #1 \rangle}
\newcommand{\GammaLM}{\Gamma_{\!\lists, M}}
\newcommand{\Gammabar}{\overline{\Gamma}}
\newcommand{\nats}{\mathbb{Z}_{\geq 0}}

\newcommand{\Mstart}{M_\mathrm{start}}
\newcommand{\Mend}{M_\mathrm{end}}
\newcommand{\Mbij}{M_\mathrm{bij}}
\newcommand{\Mzero}{\mathbf{0}}
\newcommand{\Mid}{\mathrm{Id}}
\newcommand{\Mhp}{\ensuremath{M_{\mathrm{hp}}}}  
\newcommand{\Mhs}{\ensuremath{M_{\mathrm{hs}}}}  
\def\newX{X}

\begin{document}

\title{Counting list matrix partitions of graphs\thanks{A preliminary version of this 
paper  appeared in the proceedings of CCC 2014.
The research leading to these results has
received funding from the 
MEXT Grants-in-Aid for
  Scientific Research and the EPSRC and  
the European Research Council under the European Union's Seventh Framework Programme (FP7/2007--2013) ERC grant agreement no.\ 334828. The paper 
reflects only the authors' views and not   the views of the ERC or the European Commission. The European Union is not liable for any use that may be made of the information contained therein. 
  }}
\author{Andreas G\"obel\thanks{Department of Computer Science,
University of Oxford, Wolfson Building, Parks Road, Oxford, OX1 3QD,
United Kingdom.} \and
Leslie Ann Goldberg\footnotemark[2] \and
Colin McQuillan\thanks{Department of Computer Science,
Ashton Building, University of Liverpool, Liverpool, L69 3BX,
United Kingdom.}  \and
David Richerby\footnotemark[2] \and
Tomoyuki Yamakami\thanks{
Department of Information Science, University of Fukui,
3-9-1 Bunkyo, Fukui City,
Fukui 910-8507, Japan.
 }}

\date{}

\maketitle
\begin{abstract} 
Given a symmetric $D\times D$ matrix $M$ over $\{0,1,*\}$, 
a list $M$-partition of a graph~$G$   
  is a partition of the vertices of~$G$ into~$D$ parts
which are associated with the rows of~$M$. 
The part of each vertex is chosen from
a given list in such a way that no edge of~$G$ is mapped to a $0$ in~$M$
and no non-edge of~$G$ is mapped to a $1$ in~$M$.
Many important graph-theoretic structures can be represented
as list $M$-partitions including graph colourings, split graphs 
and homogeneous sets and pairs, which arise in the proofs of the weak and strong perfect
graph conjectures. Thus, there has been quite a bit of work
on determining for which matrices $M$ computations involving list $M$-partitions are tractable.
This paper focuses on the problem of counting   list $M$-partitions, given a graph~$G$
and given 
a list
for each vertex of~$G$. 
We identify a certain set of ``tractable'' matrices~$M$.
We give an algorithm that counts list $M$-partitions in polynomial time
 for every (fixed)  matrix~$M$ in this set.
The algorithm relies on data structures such as sparse-dense partitions and
subcube decompositions to reduce each problem instance to a sequence
of problem instances in which the lists have a certain  useful structure
that restricts access to portions of~$M$ in which the interactions of $0$s and $1$s is controlled.
We show how to solve the resulting  restricted instances by converting
them into particular counting constraint satisfaction problems ($\nCSP$s)
which we show how to solve using a constraint satisfaction technique
known as ``arc-consistency''.
For every matrix~$M$ for which our algorithm fails,
we show that the problem of counting list $M$-partitions   is \numP{}-complete.
Furthermore, we give an explicit characterisation of the dichotomy theorem ---
counting list $M$-partitions is tractable (in \FP{}) if   the
matrix~$M$ has a structure called a derectangularising sequence.
If $M$ has no derectangularising sequence, we show that counting list $M$-partitions 
is \numP{}-hard.
We  show that
  the meta-problem of determining whether a given matrix has a
  derectangularising sequence is  \NP-complete.
Finally, we show that list $M$-partitions can be used to encode
cardinality restrictions in $M$-partitions problems and we use this to
give a polynomial-time algorithm for counting homogeneous pairs in graphs.
\end{abstract}

\section{Introduction}

A matrix partition of an undirected graph is a partition of its
vertices according to a matrix  which specifies adjacency and non-adjacency
conditions on
the vertices, depending on the parts to which they are assigned. 
For finite sets~$D$ and~$D'$,
the set $\{0,1,*\}^{D\times D'}$
is the set of matrices with rows 
indexed by~$D$
and columns indexed by~$D'$
where each $M_{i,j} \in \{0,1,*\}$.
For any symmetric matrix $M\in\{0,1,*\}^{D\times D}$, an
\defn{$M$-partition} of an undirected graph $G=(V,E)$ is a function $\sigma\colon V\to D$ such
that, for distinct vertices $u$ and~$v$,
\begin{itemize}
\item $M_{\sigma(u),\sigma(v)}\neq 0$ if $(u,v)\in E$ and
\item $M_{\sigma(u),\sigma(v)}\neq 1$ if $(u,v)\not\in E$.
\end{itemize}
Thus, $M_{i,j}=0$ means that no edges are allowed between 
vertices in
parts $i$ and~$j$, $M_{i,j}=1$ means 
that there must be an edge
between every pair of vertices in the two parts and $M_{i,j}=*$
means that any set of edges is allowed between the parts.  
For entries $M_{i,i}$ on the
diagonal of~$M$, the conditions only apply to distinct vertices in
part~$i$.   Thus, $M_{i,i}=1$ requires that the vertices in part~$i$ form a
clique in~$G$ and $M_{i,i}=0$ requires that they form an independent set.

For example, if $D=\{i,c\}$, $M_{i,i} = 0$, $M_{c,c}=1$ and $M_{c,i} =
M_{i,c} = *$, i.e., $M=\left(\begin{smallmatrix}0 & *\\ * &
1\end{smallmatrix}\right)$, then an $M$-partition of a graph is a
partition of its vertices into an independent set (whose vertices are mapped
to~$i$) and a clique (whose vertices are mapped to~$c$). The independent
set and the clique may have
arbitrary edges between them.  A graph that has such an $M$-partition is
known as a split graph~\cite{Golumbic}.

As Feder, Hell, Klein and Motwani  describe~\cite{FHKM},
many important graph-theoretic structures can be represented as 
$M$-partitions, including graph colourings, split graphs,
$(a,b)$-graphs~\cite{Bra96}, clique-cross partitions~\cite{EKR},
and their generalisations. $M$-partitions also arise as ``type partitions'' in extremal graph theory~\cite{BT00}.
In the special case where $M$ is a $\{0,*\}$-matrix (that is, it has no 1~entries), 
$M$-partitions of~$G$ correspond to homomorphisms 
from~$G$ to the (potentially looped) graph~$H$ whose  
adjacency matrix is obtained from $M$ by turning every $*$ into a~1.
Thus, proper $|D|$-colourings of~$G$ are exactly $M$-partitions
for the matrix~$M$ which has 0s on the diagonal and $*$s elsewhere.

To represent more complicated graph-theoretic structures, such as homogeneous sets
and their generalisations, which arise in the proofs of the weak and 
strong perfect graph conjectures~\cite{lovasz,CRST}, 
it is necessary to generalise $M$-partitions by introducing lists.
Details of these applications are given by Feder et~al.~\cite{FHKM}, who define the notion of a 
list $M$-partition.

A \defn{list $M$-partition} is an $M$-partition~$\sigma$ that is also required to
satisfy constraints on the values of each $\sigma(v)$.  
Let $\powerset{D}$ denote the powerset of~$D$.
We say that $\sigma$
\defn{respects} a function $L\colon V(G)\to \powerset{D}$ if $\sigma(v)\in
L(v)$ for all $v\in V(G)$.  Thus, for each vertex $v$, $L(v)$ serves
as a list of allowable parts for~$v$ and a \emph{list $M$-partition}
of~$G$ is an $M$-partition that respects the given list function.
We allow empty lists for technical convenience, although
there are no $M$-partitions that respect any list function~$L$ where $L(v)=\emptyset$
for some vertex~$v$.

Feder et~al.\cite{FHKM} study the computational complexity of the following
decision problem, which is parameterised by
a symmetric matrix $M\in\{0,1,*\}^{D\times D}\!$.
\prob{\ListPartitions{M}.}
 {A pair $(G,L)$ in which $G$ is a
  graph and $L$ is a function $V(G)\to\powerset{D}$.}  
  {``Yes'', if $G$ has an $M$-partition that respects $L$;
  ``no'', otherwise.}
Note that $M$ is a parameter of the problem rather than an input of the problem.
Thus, its size is a constant which does not vary with the input.

A series of papers~\cite{FH,FHHList,FHH} described in~\cite{FHKM}
presents a complete dichotomy for the
special case of homomorphism problems, which are  \ListPartitions{M} 
problems in which $M$ is a $\{0,*\}$-matrix.
In particular, Feder, Hell and Huang~\cite{FHH} show that, for every $\{0,*\}$-matrix~$M$
(and symmetrically, for every $\{1,*\}$-matrix~$M$), 
the problem \ListPartitions{M} is either polynomial-time solvable or
\NP{}-complete.

It is important to note that both of these special cases of \ListPartitions{M}
are constraint satisfaction problems (CSPs) and a famous conjecture of
Feder and Vardi~\cite{FV} is that a P versus \NP{}-complete dichotomy
also exists for every CSP.  
Although general \ListPartitions{M}
problems can also be coded as CSPs with restrictions on the input,\footnote{
For the reader who is familiar with CSPs, it might be useful to see how a \ListPartitions{M}
problem can be coded as a CSP with restrictions on the input.
Given a symmetric $M \in \{0,1,*\}^{D\times D}$,
let $M_0$ be the relation on $D\times D$ containing all pairs $(i,j) \in D\times D$
for which $M_{i,j} \neq 1$.
Let $M_1$ be the relation on $D\times D$ containing all pairs $(i,j)\in D\times D$
for which $M_{i,j} \neq 0$.
Then a \ListPartitions{M} problem with input $G,L$
can be encoded as a CSP whose constraint language includes the binary relations~$M_0$
and~$M_1$ and also the unary relations corresponding to the sets in the image of~$L$.
Each vertex $v$ of~$G$ is a variable in the CSP instance with the unary constraint $L(v)$.
If $(u,v)$ is an edge of~$G$ then it is constrained by~$M_1$. If it is a non-edge of~$G$,
it is constrained by~$M_0$.
Note that the CSP instance satisfies the restriction that every pair of distinct variables 
has exactly one constraint, which is either $M_0$ or $M_1$.
In a general CSP instance, a pair of variables could be constrained 
by $M_0$ and $M_1$ or one of them, or neither.
It is not clear how to code such a general CSP instance as a list partitions problem.} 
it
is not known how to code them without such restrictions.  Since the
Feder--Vardi conjecture applies only to CSPs with unrestricted inputs,
even if proved, it would not necessarily apply to  \ListPartitions{M}.

Given the many applications of  \ListPartitions{M}, it is important to know whether there is a dichotomy for this problem.
This is part of a major ongoing research effort which has the goal 
of understanding the boundaries of tractability by identifying classes of problems,
as wide as possible, where dichotomy theorems arise and where
the precise boundary between tractability and intractability can be specified.
  
Significant progress has been made on identifying dichotomies for
\ListPartitions{M}.
Feder et~al.~\cite[Theorem~6.1]{FHKM} give a 
complete dichotomy for the special case in which $M$ is at most $3\times 3$,
by showing that \ListPartitions{M} is polynomial-time solvable or \NP{}-complete
for each such matrix.   
Later, Feder and Hell studied the \ListPartitions{M} problem under the name
CSP$^*_{1,2}(H)$ and showed \cite[Corollary 3.4]{FHFull} that, for every~$M$,
\ListPartitions{M} is either \NP{}-complete, or is solvable in quasi-polynomial time.
 In the latter case, they showed that
\ListPartitions{M} is solvable in $n^{O(\log n)}$ time, given an 
$n$-vertex graph. Feder and Hell refer to this result as a ``quasi-dichotomy''.

Although the Feder--Vardi conjecture remains open,
a complete dichotomy is now known for counting CSPs.
In particular, Bulatov~\cite{Bul08} (see also \cite{DRfull})   
has shown that, for every constraint language~$\Gamma$,
the counting constraint satisfaction problem $\nCSP(\Gamma)$ is either polynomial-time
solvable, or \numP{}-complete.  
It is natural to ask whether a similar situation arises for counting list $M$-partition problems.
We study the following computational problem, which is parameterised by 
a finite symmetric matrix
$M\in\{0,1,*\}^{D\times D}\!$.

\prob{\nListPartitions{M}.}
 {A pair $(G,L)$ in which $G$ is a
  graph and $L$ is a function $V(G)\to\powerset{D}$.}  
  {The number of
  $M$-partitions of $G$ that respect $L$.}

Hell, Hermann and Nevisi~\cite{HHN} have considered the related problem \nPartitions{M}
without lists, which can be seen as
\nListPartitions{M} restricted to the case that $L(v)=D$ for every vertex~$v$.
This problem is defined as follows.

\prob{\nPartitions{M}.}
 {A  graph $G$.}  
  {The number of
  $M$-partitions of $G$.}
  
  In the problems \ListPartitions{M}, \nListPartitions{M} and \nPartitions{M},
 the matrix $M$ is fixed and its size does not vary with the input.

 Hell et~al.\ gave a   dichotomy for small matrices~$M$ (of size at most $3\times 3$).
 In particular, \cite[Theorem 10]{HHN} together with the graph-homomorphism dichotomy of
 Dyer and Greenhill~\cite{DG} shows that, for every such~$M$, \nPartitions{M} is either polynomial-time
 solvable or $\numP$-complete.
An interesting feature of counting $M$-partitions, identified by Hell et~al.\@
is that, unlike the situation for homomorphism-counting problems, there are tractable $M$-partition problems with non-trivial counting algorithms.
Indeed the main contribution of the present paper, as described below, is
to identify a set of ``tractable'' matrices~$M$
and to give
a non-trivial   algorithm which solves   \nListPartitions{M}
for every such~$M$.
We combine this with a proof
that \nListPartitions{M} is $\numP$-complete for every other~$M$.

\subsection{Dichotomy theorems for counting list $M$-partitions}
\label{subsec:dichotomy}

Our main theorem is  a general dichotomy for the counting list $M$-partition problem,
for matrices~$M$ of all sizes.  As noted above, since there is no
known coding of list $M$-partition problems as CSPs without input
restrictions, our theorem is not known to be implied by the
dichotomy for \nCSP{}.

Recall that \FP{} is the class of functions computed by
polynomial-time deterministic Turing machines.  
\numP{} is the
class of functions~$f$ for which there is a nondeterministic
polynomial-time Turing machine that has exactly $f(X)$ accepting paths for
every input~$X$; this class can be thought of as the natural analogue of \NP{} for counting problems.
Our main theorem is the following.

\newcommand\dichotomy{
For any symmetric matrix
  $M\in\{0,1,*\}^{D\times D}\!$, \nListPartitions{M} is
  either in $\FP$ or $\numP$-complete.}
\begin{theorem}\label{thm:dichotomy}\dichotomy\end{theorem}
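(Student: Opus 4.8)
The plan is to prove the dichotomy by exhibiting an explicit tractability criterion on the matrix~$M$ and showing that it cleanly splits the matrices into the $\FP$ case and the $\numP$-complete case. As the abstract announces, the criterion is the existence of a \emph{derectangularising sequence} for~$M$. So the overall structure is: (i)~define derectangularising sequences precisely; (ii)~show that if $M$ has such a sequence then $\nListPartitions{M}\in\FP$; and (iii)~show that if $M$ has no such sequence then $\nListPartitions{M}$ is $\numP$-complete. Since these two cases are exhaustive and mutually exclusive, they immediately yield Theorem~\ref{thm:dichotomy}.

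For the tractability half~(ii), the plan is to follow the algorithmic pipeline sketched in the abstract. First I would reduce an arbitrary instance $(G,L)$ to a bounded collection of \emph{structured} instances in which the lists have a controlled form, using sparse--dense partitions and subcube decompositions so that the interaction between the $0$-entries and $1$-entries of~$M$ visible to any vertex is restricted. The point of this preprocessing is that it confines each subproblem to a region of~$M$ where the combinatorics of forced edges and forced non-edges does not tangle. Next I would encode each structured instance as a counting constraint satisfaction problem $\nCSP(\Gamma)$ for an appropriate constraint language~$\Gamma$ built from~$M$ and the lists. The derectangularising sequence is exactly the structural ingredient that guarantees the resulting relations are ``rectangular'' enough to be solved by \emph{arc-consistency}: I would argue that arc-consistency propagation computes the partition function exactly (rather than merely deciding satisfiability), giving a polynomial-time counting algorithm. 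Care is needed to keep the number of structured subinstances polynomial and to verify that summing their counts reconstructs the count for the original instance without double-counting.

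For the hardness half~(iii), the strategy is the standard one for counting dichotomies: assume $M$ admits no derectangularising sequence and extract from this failure a gadget that lets me interpolate or directly reduce a known $\numP$-complete counting problem into $\nListPartitions{M}$. The natural source of hardness here is a configuration of $0$s and $1$s in~$M$ that cannot be ``rectangularised'', which one expects to encode a non-rectangular relation whose associated counting CSP is $\numP$-complete by the Bulatov / Dyer--Richerby dichotomy for $\nCSP$ (cited as~\cite{Bul08,DRfull}); one would then realise that relation as list partitions using the lists to pin vertices into the relevant parts. I would also need to handle the \emph{meta}-question, showing (as the abstract states) that deciding whether $M$ has a derectangularising sequence is itself decidable, so that the dichotomy is effective; this is separate from Theorem~\ref{thm:dichotomy} itself but underpins the claim that the criterion is explicit.

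The main obstacle, I expect, is the tractability direction, specifically the step of proving that arc-consistency \emph{counts} correctly on the CSP instances produced by the decomposition. Deciding satisfiability via arc-consistency is classical, but showing that the number of solutions is preserved and computable requires that the relations obtained from a derectangularising sequence have a genuine product (rectangular) structure after propagation, and that this structure is stable under the conditioning imposed by the lists. Getting the decomposition fine enough to guarantee this, while keeping the total number of subinstances polynomial and ensuring their counts add up exactly to the answer on $(G,L)$, is where the real work lies; the hardness direction, by contrast, should reduce fairly directly to the known $\nCSP$ dichotomy once the obstructing sub-configuration of~$M$ is isolated.
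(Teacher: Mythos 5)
Your overall pipeline --- purify the lists via sparse--dense partitions and subcube decompositions, encode each purified subinstance as a counting CSP, solve it by arc-consistency, and get hardness from the Bulatov/Dyer--Richerby \nCSP{} dichotomy applied to a non-rectangular composed relation --- is exactly the paper's route. But you have the dichotomy criterion \emph{inverted}, and this is fatal as written. You propose to show that if $M$ \emph{has} a derectangularising sequence then $\nListPartitions{M}\in\FP$, and that if it has \emph{no} such sequence then the problem is $\numP$-complete. The correct statement (Theorem~\ref{thm:explicitdichotomy}) is the opposite: a derectangularising sequence is an \emph{obstruction} to tractability. By Definition~\ref{def:derect} it is a sequence $D_1,\dots,D_k$ whose associated composition $H^M_{D_1,D_2}\circ\cdots\circ H^M_{D_{k-1},D_k}$ is \emph{not} rectangular; its existence makes the constraint language $\Gamma_{\!\lists'\!,M}$ fail strong rectangularity, whence $\nCSP$ is $\numP$-complete by Bulatov--Dalmau/Dyer--Richerby, and this transfers to \LMPartitions{\lists}{M} via Proposition~\ref{prop:purifiediscsp}. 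Conversely, it is the \emph{absence} of derectangularising sequences that makes the algorithm work: Lemma~\ref{lem:claim} needs no length-$2$ such sequence for the purification step, and Lemma~\ref{lem:quickarc} needs no such sequence of any length to guarantee that the arc-consistent domains strictly shrink at each recursive call, bounding the recursion depth by $|D|$. (The abstract's final sentences in this source have the two cases swapped, which is presumably where you picked this up; the theorem statements and proofs in the body are the authoritative version.)

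Your write-up is also internally inconsistent in a way that signals the same confusion: in your tractability half you say the derectangularising sequence ``guarantees the resulting relations are rectangular enough'' --- by definition it witnesses the opposite --- and in your hardness half you assume no derectangularising sequence exists yet propose to extract ``a configuration that cannot be rectangularised,'' which is precisely a derectangularising sequence and hence cannot exist under that assumption. If you swap the two cases, the rest of your plan aligns with the paper, with one further caveat: arc-consistency alone does not ``compute the partition function exactly''; the paper's Algorithm~\ref{alg:AC} interleaves arc-consistency with factoring of binary constraints into unary ones, decomposition into connected components of the constraint graph, and recursive pinning of a chosen variable, and the counting correctness comes from that recursion, not from propagation by itself.
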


To prove Theorem~\ref{thm:dichotomy}, we investigate the complexity of the
more general counting problem \LMPartitions{\lists}{M},
which has two parameters --- a  matrix $M\in\{0,1,*\}^{D\times D}$
and a (not necessarily proper) subset~\lists{} of~$\powerset{D}$.   
In this problem, we only allow sets in~\lists{} to be used as lists.

\prob{\LMPartitions{\lists}{M}.} 
{A pair $(G,L)$ where $G$ is a
  graph and $L$ is a function $V(G)\to \lists$.}  
  {The number of
  $M$-partitions of $G$ that respect $L$.}

Note that $M$ and~\lists{} are fixed parameters  
of \LMPartitions{\lists}{M} --- they
are not part of
the input instance.
The problem  \nListPartitions{M} is just the 
special
case of \LMPartitions{\lists}{M}
where $\lists = \powerset{D}$.

We say that a set $\lists \subseteq \powerset{D}$
is \defn{subset-closed} if $A\in \lists$
implies that every subset of $A$ is in \lists{}.
This closure property is referred to as the ``inclusive'' case in~\cite{FHFull}.

\begin{definition}\label{def:closure}
Given a set $\lists\subseteq \powerset{D}$, we write $\subclo{\lists}$
for its subset-closure,
which is the set 
$$\subclo{\lists}=\{X \mid \mbox{for some $Y\in \lists$,  $X\subseteq Y$}\}.$$
\end{definition}

We prove the following theorem,
which immediately implies Theorem~\ref{thm:dichotomy}.
  
  \newcommand\fulldichotomy
  {Let $M$ be a symmetric matrix in 
  $\{0,1,*\}^{D\times D}$
  and let $\lists\subseteq\powerset{D}$ be
  subset-closed.   The problem
  \LMPartitions{\lists}{M} is
  either in $\FP$ or $\numP$-complete.}
\begin{theorem}\label{thm:fulldichotomy}\fulldichotomy\end{theorem}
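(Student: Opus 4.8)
The plan is to prove the classification by identifying an explicit structural criterion on the pair $(M,\lists)$ that separates the two cases, and then proving each half. Membership in $\numP$ holds uniformly: on input $(G,L)$ a nondeterministic machine guesses a map $\sigma\colon V(G)\to D$, checks in polynomial time that $\sigma(v)\in L(v)$ for every $v$ and that $\sigma$ violates none of the adjacency/non-adjacency constraints of $M$, and accepts exactly on the valid list $M$-partitions; its accepting paths then count the desired quantity. So the whole content lies in splitting the remaining matrices into an $\FP$ class and a $\numP$-hard class. I would do this by attaching to $(M,\lists)$ a combinatorial object, a \emph{derectangularising sequence}, and proving that $\LMPartitions{\lists}{M}\in\FP$ exactly when one exists. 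The hypothesis that $\lists$ is subset-closed is what keeps the whole argument self-contained: whenever the algorithm shrinks a list by intersecting it with a smaller set, the new list still lies in $\lists$, so every instance produced during the recursion is again an instance of $\LMPartitions{\lists}{M}$.

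For the tractable direction I would build the algorithm in two layers. The outer layer reduces an arbitrary instance $(G,L)$ to a number of \emph{restricted} instances that is polynomial in $|V(G)|$ (for fixed $M$ and $\lists$), in which every list lies inside a portion of $M$ whose interaction of $0$s and $1$s is tightly controlled. The engine here is the sparse--dense partition machinery: because the diagonal of $M$ forces vertices in ``dense'' parts to induce cliques and vertices in ``sparse'' parts to induce independent sets, only polynomially many essentially different ways of splitting $V(G)$ across the sparse--dense boundary arise, and these can be enumerated in polynomial time. Combining this enumeration with a subcube decomposition of the space of list functions lets me express the required count as a sum over the resulting restricted instances, each having the promised list structure, with the number of summands kept polynomial.

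The inner layer solves each restricted instance by encoding it as a counting constraint satisfaction problem $\nCSP(\Gamma)$, where $\Gamma$ consists of the two binary relations extracted from $M$ (the ``no~$1$'' relation $M_0$ and the ``no~$0$'' relation $M_1$) together with the unary relations arising from the restricted lists. The role of a derectangularising sequence is to certify that, on these restricted instances, $\Gamma$ carries exactly the closure structure under which arc-consistency computes the solution count rather than merely deciding satisfiability: applying arc-consistency prunes each instance into independent pieces, and the structure recorded by the sequence lets the overall count be assembled in polynomial time from the pruned domains. Each restricted instance is therefore solved in polynomial time, and summing over the polynomially many of them places $\LMPartitions{\lists}{M}$ in $\FP$. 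For the converse I would show that when no derectangularising sequence exists, this failure exposes a fixed submatrix pattern of $M$ that can be realised by a gadget; using copies of this gadget I would reduce a known $\numP$-complete problem --- for instance counting independent sets, or a counting constraint satisfaction problem that is $\numP$-complete by the dichotomy of Bulatov and Dyer--Richerby --- to $\LMPartitions{\lists}{M}$, so the problem is $\numP$-hard.

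The main obstacle is the tractable direction, and within it the inner layer. Checking that the sparse--dense enumeration and the subcube decomposition are faithful and polynomial is technical but routine; the real difficulty is to define a derectangularising sequence so that it simultaneously (i) guarantees that arc-consistency \emph{counts}, rather than merely decides, on every restricted instance, and (ii) is \emph{absent} exactly when a hardness gadget can be built. Making a single property do both jobs --- so that its presence drives the arc-consistency counting argument while its absence forces a reduction from a hard counting problem --- is what turns the two one-sided arguments into a genuine dichotomy, and is the crux of the proof.
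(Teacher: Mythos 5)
Your overall architecture matches the paper's: reduce to ``purified'' instances using sparse--dense partitions and subcube decompositions, solve each purified instance via an arc-consistency algorithm on a counting-CSP encoding, and derive hardness from the counting-CSP dichotomy. But there is a genuine logical gap in how you deploy the dichotomy criterion: you have the direction of the derectangularising sequence inverted, and this is not merely a naming issue. In the paper, a derectangularising sequence is a \emph{hardness} witness --- a concrete sequence $D_1,\dots,D_k$ of lists whose composed $*$-relations $H^M_{D_1,D_2}\circ\dots\circ H^M_{D_{k-1},D_k}$ fail to be rectangular --- and tractability is the \emph{absence} of any such sequence, i.e.\ a universally quantified rectangularity property over all sequences the algorithm might encounter. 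That universal property is exactly what the arc-consistency analysis needs: every composed relation along every path in the constraint graph decomposes into at least two blocks, which forces domains to shrink at each recursive call and bounds the recursion depth. A single positive witness, as you propose, cannot ``certify the closure structure'' for all the relations arising during the run. Worse, your hardness direction becomes logically incoherent under the inversion: you cannot build a gadget from the \emph{non-existence} of a structure. The correct logic is that the \emph{existence} of a non-rectangular composition is a finite concrete pattern, which makes the associated constraint language not strongly rectangular, whence $\nCSP$-hardness follows from Bulatov--Dalmau or Dyer--Richerby and transfers back through the CSP equivalence.

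A second, related gap is your choice of constraint language. You encode each restricted instance using the two binary relations $M_0$ (``no $1$'') and $M_1$ (``no $0$'') plus unary lists. That is precisely the restricted-input encoding the paper discusses in its introduction to explain why the theorem does \emph{not} follow from the $\nCSP$ dichotomy: such an encoding requires that every pair of distinct variables carry exactly one of $M_0,M_1$, and no known CSP dichotomy or arc-consistency counting result applies to CSPs with that input restriction. The paper's resolution is to form the CSP only \emph{after} purification, and with a different language: the relations $H^M_{X,Y}=\{(i,j)\in X\times Y\mid M_{i,j}=*\}$ for $X,Y$ in the ($M$-purifying, subset-closed) list set, together with all unary relations. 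Purity is what makes this faithful --- for each pair of vertices at most one of the $0$/$1$ constraints is active, so only the $*$-structure needs to be imposed, and the result is a genuine unrestricted CSP instance. Without this step your inner layer has no theorem to invoke. You would need to (i) restate the criterion as the absence of any sequence of lists whose composed $*$-relations are non-rectangular, (ii) replace $\{M_0,M_1\}$ by the purified $*$-relations, and (iii) supply the actual arc-consistency counting argument (block decomposition of rectangular relations, shrinking domains, polynomial recursion depth), which your sketch defers entirely.
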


Note that this does not imply a dichotomy for the counting $M$-partitions
problem without lists.  The problem with no lists corresponds to the case where every
vertex of the input graph~$G$ is assigned the list~$D$, allowing the
vertex to be potentially placed in any part.  Thus, the problem
without lists is equivalent to 
the problem \LMPartitions{\lists}{M} with
$\lists=\{D\}$, but Theorem~\ref{thm:fulldichotomy} applies only to
the case where \lists{}~is subset-closed.

\subsection{Polynomial-time algorithms and an explicit dichotomy}

We now introduce the concepts needed to  give 
an explicit
criterion for the dichotomy in Theorem~\ref{thm:fulldichotomy}
and to provide polynomial-time algorithms for all tractable cases.
We use standard definitions of relations and their arities,
compositions and inverses.

\begin{definition} For any symmetric $M\in\{0,1,*\}^{D\times D}$ and any 
sets $X,Y\in\powerset{D}$, define the binary relation
\begin{equation*}
    H^M_{X,Y}=\{(i,j)\in X\times Y\mid M_{i,j}=*\}.
\end{equation*}
\end{definition}

The intractability condition for 
the problem \LMPartitions{\lists}{M} begins with the following notion of rectangularity, 
which was 
introduced by Bulatov and Dalmau~\cite{BD}.

\begin{definition}
A relation $R\subseteq D\times D'$ is \defn{rectangular} if, for all $i,j\in D$, and $i'\!,j'\in D'\!$,
$$(i,i'),(i,j'),(j,i')\in R\implies (j,j')\in R\,.$$
\end{definition}

Note that the intersection of two rectangular relations is itself
rectangular.  However, the composition of two rectangular relations
is not necessarily rectangular: for example, $\{(1,1), (1,2),
(3,3)\}\circ \{(1,1), (2,3), (3,1)\} = \{(1,1), (1,3), (3,1)\}$.

Our dichotomy criterion will be based on what we call
\lists{}-$M$-derectangularising sequences.
In order to define these, we introduce the notions of pure matrices and $M$-purifying sets.
\begin{definition}
Given index sets $X$ and~$Y$, a matrix $M\in\{0,1,*\}^{X\times Y}$ is
\defn{pure} if it has no $0$s or has no $1$s.
\end{definition}

Pure matrices correspond to ordinary graph homomorphism problems.   
As we noted above, 
$M$-partitions of~$G$ correspond to homomorphisms of~$G$ when
$G$ is
a $\{0,*\}$-matrix. The same is true (by complementation)
when
$G$ is
a $\{1,*\}$-matrix.

\begin{definition}
For any
$M\in\{0,1,*\}^{D\times D}$, a 
set $\lists \subseteq \powerset{D}$ 
is \defn{$M$-purifying} if, for all $X,Y\in\lists$, the $X$-by-$Y$
submatrix $M|_{X\times Y}$ is pure.
\end{definition}

For example, consider the matrix
\begin{equation*}
    M = \left(\begin{matrix}
                  1 & * & 0 \\
                  * & 1 & * \\
                  0 & * & 1
              \end{matrix}\right)
\end{equation*}
with rows and columns indexed by $\{0,1,2\}$ in the obvious way.  The
matrix~$M$ is not pure but for
 $\lists =  \{\{0,1\},
\{2\}\}$,
the set $\lists$ is $M$-purifying
and so is the closure~$\subclo{\lists}$.

\begin{definition}
\label{def:derect}
An \defn{\lists{}-$M$-derectangularising sequence} of length $k$ is a
sequence $D_1,\dots,D_k$  with each $D_i \in\lists$ such that:
\begin{itemize}
\item  $\{D_1,\ldots,D_k\}$ is $M$-purifying and
\item the relation $H^M_{D_1,D_2} \circ H^M_{D_2, D_3} \circ \dots \circ
H^M_{D_{k-1}, D_k}$ is not rectangular.
\end{itemize}
\end{definition}
 
If there is an $i\in \{1,\ldots,k\}$ such that 
$D_i$ is the empty set 
then the relation $H=H^M_{D_1,D_2} \circ H^M_{D_2, D_3} \circ \dots \circ
H^M_{D_{k-1}, D_k}$
is the empty relation, which is trivially
rectangular. 
If there is an~$i$ such that $|D_i|=1$ then
$H$ is a Cartesian product, and is therefore rectangular.
It follows   that $|D_i|\geq 2$ for each~$i$ in a derectangularising sequence.

We can now state our explicit dichotomy theorem, which implies Theorem~\ref{thm:fulldichotomy} and,
hence, Theorem~\ref{thm:dichotomy}.

\newcommand\explicitdichotomy{
Let $M$ be a symmetric matrix in 
  $\{0,1,*\}^{D\times D}$
  and let $\lists{}\subseteq\powerset{D}$ be
  subset-closed.
If there is an
\lists{}-$M$-derectangularising sequence then
the problem \LMPartitions{\lists}{M} is $\numP$-complete.  Otherwise,
 it is in $\FP$.  
}
\begin{theorem}\label{thm:explicitdichotomy}\explicitdichotomy\end{theorem}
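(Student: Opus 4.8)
The plan is to prove the dichotomy of Theorem~\ref{thm:explicitdichotomy} by establishing the two directions separately, and then to show that its hardness/tractability criterion coincides with the (possibly non-explicit) dichotomy of Theorem~\ref{thm:fulldichotomy}. First I would treat the hardness direction: assuming an $\lists{}$-$M$-derectangularising sequence $D_1,\dots,D_k$ exists, I would show that \LMPartitions{\lists}{M} is $\numP$-complete. The natural strategy is to use the sequence to build a gadget reduction from a known $\numP$-complete counting problem. Since the relation $H=H^M_{D_1,D_2}\circ\dots\circ H^M_{D_{k-1},D_k}$ is by hypothesis \emph{not} rectangular, there exist witnesses $(i,i'),(i,j'),(j,i')\in H$ with $(j,j')\notin H$. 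Because $\{D_1,\dots,D_k\}$ is $M$-purifying, every submatrix $M|_{D_\ell\times D_{\ell+1}}$ is pure, so each factor $H^M_{D_\ell,D_{\ell+1}}$ encodes an honest homomorphism-type constraint; composing them along a path of vertices (with lists $D_1,\dots,D_k$ assigned in order) lets me realise $H$ as a gadget relating the two endpoints. The non-rectangularity of $H$ is precisely the algebraic feature that drives $\numP$-hardness in the Bulatov--Dalmau framework for counting CSPs~\cite{BD,Bul08}, so I expect to reduce from counting problems such as counting independent sets or a suitable $\nCSP(\Gamma)$ whose hardness is known, using the non-rectangular $H$ to simulate the hard binary relation.

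For the tractability direction I would assume that \emph{no} $\lists{}$-$M$-derectangularising sequence exists and exhibit a polynomial-time algorithm for \LMPartitions{\lists}{M}. As the abstract and introduction outline, the algorithm proceeds by first reducing a general instance to instances whose lists form an $M$-purifying set --- this is where sparse-dense partitions and subcube decompositions come in, allowing one to refine the lists $L(v)$ until, on each resulting subinstance, the collection of lists that actually occur is $M$-purifying. On such a purified instance every relevant submatrix of $M$ is pure, so the constraints become genuine homomorphism constraints and the instance can be encoded as a counting CSP $\nCSP(\Gamma)$ over a constraint language $\Gamma$ built from the relations $H^M_{X,Y}$ together with the unary list relations. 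The absence of a derectangularising sequence guarantees that every relation obtainable by composing these $H^M_{D_\ell,D_{\ell+1}}$ along any path is rectangular; rectangularity of all such relations is exactly the condition under which the arc-consistency technique solves the associated $\nCSP$ in polynomial time. I would therefore invoke (or reprove) the fact that arc-consistency computes the exact count when the relevant binary relations are rectangular, and verify that the reductions preserve the count up to easily computable factors.

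The main obstacle I anticipate is the purification step and the passage from the list-partition instance to a well-behaved $\nCSP$: one must show that refining the lists via sparse-dense/subcube decompositions terminates with polynomially many subinstances, that each has an $M$-purifying set of lists, and that summing the counts over subinstances recovers the original count. A second, subtler difficulty is proving that the \emph{negation} of the derectangularising-sequence condition is strong enough to make \emph{all} compositions arising in the CSP encoding rectangular --- a priori the definition only forbids non-rectangular compositions over purifying index sets, so I must check that the index sets produced by the algorithm are always among those constrained by the definition. Establishing this correspondence carefully is what lets the two theorems line up, so that Theorem~\ref{thm:explicitdichotomy} both refines Theorem~\ref{thm:fulldichotomy} (by making the tractability criterion explicit) and, together with it, completes the classification.
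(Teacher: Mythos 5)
Your proposal follows essentially the same route as the paper: for hardness, restrict to the $M$-purifying subset-closure generated by the derectangularising sequence, pass to the equivalent counting CSP over the relations $H^M_{X,Y}$ (Proposition~\ref{prop:purifiediscsp}), and invoke the Bulatov--Dalmau criterion that a non-rectangular composition of the binary relations forces $\numP$-completeness; for tractability, purify the lists via sparse-dense partitions and subcube decompositions, encode each purified subinstance as a counting CSP, and solve it by arc-consistency. You also correctly flag the two points the paper must (and does) handle carefully --- that purification produces polynomially many count-disjoint subinstances (Lemmas~\ref{lem:claim} and~\ref{lem:reducetopure}) and that the arc-consistent domains encountered always lie in the subset-closed family $\lists$, so the no-derectangularising-sequence hypothesis really does make every composition along a constraint-graph path rectangular (Lemma~\ref{lem:quickarc}).
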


Sections~\ref{sec:purifiedcsp}, \ref{sec:arc} and 
\ref{sec:dichotomy}  develop a polynomial-time
algorithm which solves the problem \LMPartitions{\lists}{M}
whenever there is no \lists{}-$M$-derectangularising sequence.
The algorithm involves several steps.

First, consider the case in which 
$\lists$ is subset-closed and $M$-purifying.
In this case, Proposition~\ref{prop:purifiediscsp}
presents a polynomial-time transformation
from an instance of the problem \LMPartitions{\lists}{M}
to an instance of a related counting CSP.
Algorithm~\ref{alg:AC} exploits
special properties of the constructed CSP instance
so that it can be solved in polynomial time
using a CSP technique called arc-consistency.
(This is proved in Lemma~\ref{lem:quickarc}.)
This provides a solution to the original  \LMPartitions{\lists}{M}
problem for the $M$-purifying case.

The case in which $\lists$ is not $M$-purifying is
tackled in Section~\ref{sec:dichotomy}.
Section~\ref{sec:DS} gives algorithms for constructing the relevant
data structures, which include a special case of sparse-dense partitions
and also subcube decompositions.
Algorithm~\ref{alg:purify}
uses these data structures (via Algorithms~\ref{alg:purifystep},
\ref{alg:Case1}, \ref{alg:Case2}, \ref{alg:Case3} and~\ref{alg:purifytriv}) to reduce
the \LMPartitions{\lists}{M} problem 
to a sequence of problems
\LMPartitions{\lists_i}{M} where $\lists_i$ is $M$-purifying. 
Finally, the 
polynomial-time algorithm is presented in Algorithms~\ref{alg:mainpurifying}
and~\ref{alg:main}. For every   
$\lists$ and~$M$ 
 where there is no $\lists$-$M$-derectangularising sequence, 
either Algorithm~\ref{alg:mainpurifying} or Algorithm~\ref{alg:main}
defines a polynomial-time
function \LMPartitions{\lists}{M} for
solving the \LMPartitions{\lists}{M} problem, 
given an input $(G,L)$.
The function \LMPartitions{\lists}{M} is not recursive.
However, its \emph{definition} is recursive in the sense that
the function \LMPartitions{\lists}{M} defined in Algorithm~\ref{alg:main}
calls
a function \LMPartitions{\lists_i}{M} where $\lists_i$ is
a subset of $\powerset{D}$ whose cardinality is smaller than $\lists$.
The function 
\LMPartitions{\lists_i}{M} is, in turn, defined either in Algorithm~\ref{alg:mainpurifying}
or in~\ref{alg:main}.

The proof of Theorem~\ref{thm:explicitdichotomy}
shows that, when Algorithms~\ref{alg:mainpurifying} and~\ref{alg:main}
fail to solve 
the problem \LMPartitions{\lists}{M}, the problem
is $\numP$-complete.

\subsection{Complexity of the dichotomy criterion}

Theorem~\ref{thm:explicitdichotomy} gives a precise criterion under which
the problem
\LMPartitions{\lists}{M} is in $\FP$ or $\numP$-complete, where \lists{} and~$M$ are considered to be fixed parameters.
In Section~\ref{sec:meta}, we address the 
computational problem of determining which is the case, now treating \lists{} and~$M$ as inputs to this ``meta-problem''.
Dyer and Richerby~\cite{DRfull} studied the corresponding problem for the \nCSP{} dichotomy, showing that determining whether a constraint language~$\Gamma$ satisfies
the criterion for their $\nCSP(\Gamma)$ dichotomy is reducible to the graph automorphism problem, which is in \NP{}.
We are interested in the following computational problem, which we show to be \NP{}-complete.

\prob{\ExistsDerectSeq.} 
{An index set $D$, a symmetric matrix $M$ in $\{0,1,*\}^{D\times D}$ (represented as an array)
and 
a set $\lists{}\subseteq\powerset{D}$ (represented as a list of lists).}
  {``Yes'', if there is an $\subclo{\lists}$-$M$-derectangularising sequence; ``no'', otherwise. }

\newcommand\metathm{
\ExistsDerectSeq\ is \NP{}-complete under polynomial-time many-one reductions.}
\begin{theorem}\label{thm:meta}\metathm\end{theorem}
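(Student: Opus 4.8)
The plan is to prove the two directions separately: membership in \NP{} and \NP{}-hardness. For membership I would take a derectangularising sequence itself as the certificate. Given a candidate sequence $D_1,\dots,D_k$ with each $D_i\subseteq D$, verification is plainly polynomial: I check $D_i\in\subclo{\lists}$ by testing whether $D_i\subseteq Y$ for some $Y\in\lists$; I check that $\{D_1,\dots,D_k\}$ is $M$-purifying by inspecting each submatrix $M|_{D_i\times D_j}$ for a simultaneous occurrence of a $0$ and a $1$; I compute the composition $R = H^M_{D_1,D_2}\circ\cdots\circ H^M_{D_{k-1},D_k}$ by $k-1$ Boolean relation products; and I test $R$ for rectangularity by checking all quadruples of elements. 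The one nonobvious ingredient is a polynomial bound on the length of a \emph{shortest} derectangularising sequence, which is what makes the certificate polynomial-size. Here I would use the reformulation that $R$ fails to be rectangular exactly when it realises the forbidden pattern on two rows $i,j$ and two columns $i',j'$ (with $(i,i'),(i,j'),(j,i')\in R$ but $(j,j')\notin R$). Realising the three ``in'' pairs amounts to a bounded number of synchronised walks through the layered sequence of sets, which — once the purifying family actually used is fixed — becomes a reachability question in a product graph on tuples over $D$ of bounded arity, and hence admits short witnesses.

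The delicate feature is the single ``out'' pair $(j,j')\notin R$: this is a non-reachability condition that couples the whole sequence and so does not reduce to plain product-reachability. I expect to control it using the restricted shape of the relations $H^M_{X,Y}$ inside an $M$-purifying family (where each $M|_{X\times Y}$ omits either all $0$s or all $1$s), arguing that only polynomially many ``relevant'' reachability configurations can arise before non-rectangularity is forced, so that a shortest derectangularising sequence has length polynomial in $|D|$. This gives membership in \NP{}.

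For \NP{}-hardness I would reduce from an \NP{}-complete problem whose constraints are naturally \emph{negative}, since the purity requirement forbids combinations of sets rather than demanding them; a monotone not-all-equal satisfiability problem (or an independent-set/clique formulation) is a convenient source. From an instance I would build an index set $D$, a symmetric $M$, and a family $\lists$ from three kinds of gadgets. A \emph{choice} gadget offers, for each variable, two candidate sets in $\lists$ whose pairwise submatrix in $M$ mixes a $0$ and a $1$, so that $M$-purity forbids using both and thereby forces a consistent Boolean choice. A \emph{consistency} gadget places further $0$/$1$ conflicts among the sets associated with the literals of a clause, so that a purifying family corresponds exactly to an admissible assignment. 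Finally a \emph{target} gadget contributes a small fixed configuration in $M$ whose composed relation, taken along a sequence that threads the chosen sets, is non-rectangular precisely when a full admissible assignment has been selected. The forward direction then turns a satisfying assignment into an explicit derectangularising sequence, and the reverse direction extracts a satisfying assignment from any derectangularising sequence, using that purity forces the selection to be consistent and that non-rectangularity forces the clauses to be satisfied.

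I expect the engineering of the target gadget together with the reverse direction to be the main obstacle. Making purity encode consistency is routine, but one must ensure that non-rectangularity of the composition can be \emph{achieved} exactly when a valid global assignment exists and — harder — that there is no \emph{unintended} derectangularising sequence producing non-rectangularity without corresponding to a satisfying assignment. Ruling out such spurious witnesses requires understanding all the ways a composition of the $H^M_{X,Y}$ relations can fail to be rectangular, which is the technically demanding part of the construction. Combined with the polynomial length bound above, this yields both membership and hardness, establishing Theorem~\ref{thm:meta}.
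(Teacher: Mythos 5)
Your high-level plan matches the paper's: membership via a polynomially bounded shortest derectangularising sequence used as a certificate, and hardness via a reduction in which purity encodes a combinatorial constraint and non-rectangularity is the target. But both of the steps you yourself flag as "the delicate feature" and "the main obstacle" are exactly the steps the proof consists of, and neither is supplied, so as it stands this is a plan rather than a proof.

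On membership, the length bound is not a reachability argument at all, and your worry about the single ``out'' pair $(j,j')\notin R$ being a non-reachability condition that "couples the whole sequence" is a sign that the key idea is missing. The paper's argument (Lemma~\ref{lem:small_derect}) takes a \emph{minimal}-length derectangularising sequence, uses the three witness tuples realising $(a,b),(a'\!,b),(a,b')$ to shrink every $D_i$ to a set $D'_i$ of size at most $3$, and then pigeonholes not on reachable configurations but on the pair consisting of $D'_i$ and the \emph{entire prefix composition} $R_{i_0,i}$, of which there are at most $(|D|^3+1)\cdot 511$ possibilities. If two indices carry the same set and the same prefix composition, the intervening block can be spliced out \emph{without changing the final relation} $R_{1,k}$ at all; hence non-rectangularity — including the out-pair — is preserved for free, contradicting minimality. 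Without taking the whole prefix relation as the pigeonholed state, your "polynomially many relevant reachability configurations" claim has no justification, and I do not see how to make the out-pair condition local by any purely forward-reachability bookkeeping.

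On hardness, the paper reduces from \textsc{Independent Set} (one of the sources you name), with $D=[n]\times[k]\times[3]$ and a block matrix over the $3\times3$ blocks $\Mstart,\Mend,\Mbij,\Mzero,\Mid$: impure $\Mid$ blocks sit between sets corresponding to adjacent vertices (so purity forces an independent set), $\Mzero$ blocks between non-consecutive layers force the sequence to advance one layer at a time, $\Mbij$ blocks make the interior composition a bijection, and the non-rectangular target is created only by composing an $\Mstart$ block at layer~$1$ with an $\Mend$ block at layer~$k$ through a bijection. The reverse direction — ruling out spurious witnesses — is then a concrete structural argument: every $D_i$ in a derectangularising sequence has $|D_i|\ge 2$, so it selects a unique block $D[v_i,j_i]$; purity forces $\{v_i\}$ independent; $R\neq\emptyset$ forces $|j_i-j_{i+1}|\le 1$; and non-rectangularity forces both an $\Mstart$ and an $\Mend$ occurrence, so the layers $j_i$ sweep all of $[k]$ and yield $k$ distinct vertices. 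Your proposal names these gadget roles abstractly but gives neither the matrix nor the case analysis showing that compositions of $\Mbij$- and $\Mend$-type relations alone are always rectangular, which is precisely what excludes unintended witnesses. Both halves need to be filled in before this constitutes a proof of Theorem~\ref{thm:meta}.
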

 
Note that, in the definition of the problem \ExistsDerectSeq, 
the input $\lists$ is not necessarily  subset-closed.  
Subset-closedness
allows a concise representation of some
inputs: for example, $\powerset{D}$ has exponential size but it can be
represented as $\subclo{\{D\}}$, so the corresponding
input is just $\lists=\{D\}$.
In fact, our proof of Theorem~\ref{thm:meta} uses a set of lists \lists{} where
$|X|\leq 3$ for all $X\in\lists$.  Since there are at most $|D|^3+1$
such sets, our \NP{}-completeness proof would still hold if we
insisted that the input \lists\ to \ExistsDerectSeq{} must be subset-closed.

Let us return to the original problem \nListPartitions{M}, which is the special case of the problem \LMPartitions{\lists}{M} where $\lists=\powerset{D}$.  This leads us to be interested in the following computational problem.
\prob{\MatrixHasDerectSeq.} 
{An index set $D$ and a symmetric matrix $M$ in $\{0,1,*\}^{D\times D}$ (represented as an array).}
  {``Yes'', if there is a $\powerset{D}$-$M$-derectangularising sequence; ``no'', otherwise. }

Theorem~\ref{thm:meta} does not quantify the complexity
of \MatrixHasDerectSeq{} because its proof relies on a specific choice
of \lists{}
which, as we have noted, is not $\powerset{D}$.
Nevertheless, the proof of Theorem~\ref{thm:meta}
has the following corollary.
 
\newcommand\metacor{
\MatrixHasDerectSeq\ is in \NP{}.}
\begin{corollary}\label{cor:meta}\metacor\end{corollary}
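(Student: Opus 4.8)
The plan is to show that any ``yes''-instance has a $\powerset{D}$-$M$-derectangularising sequence of length polynomial in $n=|D|$; an $\NP$ procedure can then guess such a sequence and verify it in polynomial time. This is exactly the membership argument underlying Theorem~\ref{thm:meta}, specialised to $\lists=\powerset{D}$. The guessed object is a sequence $D_1,\dots,D_k$ of subsets of $D$, and one checks in polynomial time that $\{D_1,\dots,D_k\}$ is $M$-purifying (test each of the $O(k^2)$ pairs $M|_{D_a\times D_b}$ for purity) and that $H=H^M_{D_1,D_2}\circ\cdots\circ H^M_{D_{k-1},D_k}$ is not rectangular (form each $H^M_{D_t,D_{t+1}}$, compose from the left, and apply the defining condition of rectangularity to $H\subseteq D_1\times D_k$). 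Because the sets may be arbitrary members of $\powerset{D}$, no explicit encoding of $\lists$ is required, which is precisely why the restriction to a particular $\lists$ in the proof of Theorem~\ref{thm:meta} is irrelevant for this corollary. Everything thus reduces to a polynomial bound on the length of a shortest derectangularising sequence.

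To obtain that bound I would first shrink the sets without destroying the witness. Since $H$ is not rectangular, the defining implication fails, so there are $i,j\in D_1$ and $i',j'\in D_k$, necessarily with $i\neq j$ and $i'\neq j'$, such that $(i,i'),(i,j'),(j,i')\in H$ but $(j,j')\notin H$. Each of the three memberships is witnessed by a $*$-path, i.e.\ a choice of one vertex $x^r_t\in D_t$ in every layer with $M_{x^r_t,x^r_{t+1}}=*$; fix one path for each membership. Replacing each $D_t$ by $D'_t=\{x^1_t,x^2_t,x^3_t\}$ keeps the family $M$-purifying (purity is inherited by submatrices), retains the three positive paths, and only deletes edges of the layered graph, so $(j,j')$ remains non-reachable. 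Hence $D'_1,\dots,D'_k$ is again derectangularising, now with $|D'_t|\le 3$ for all~$t$.

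The second move cuts the length. Writing $H(i)$ for the set of $i''$ with $(i,i'')\in H$, track only the two relevant sources: let $A_t,B_t\subseteq D'_t$ be the sets reachable from $i$ and from $j$ using the first $t$ layers, so $A_k=H(i)$ and $B_k=H(j)$. The update from step $t$ to step $t+1$ depends only on the pair $(A_t,B_t)$ and on $D'_{t+1}$, so if $(A_s,B_s)=(A_t,B_t)$ for some $s<t$, deleting the layers $D'_{s+1},\dots,D'_t$ leaves $A_k,B_k$ unchanged and hence leaves the sequence derectangularising (still $M$-purifying, still with sets of size at most~$3$). Endpoints are handled by truncation: if some $(A_s,B_s)$ already equals the final pair, the prefix $D'_1,\dots,D'_s$ is itself derectangularising, with $i',j'\in A_s\subseteq D'_s$. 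Since each of $A_t,B_t$ is a subset of $D$ of size at most~$3$, there are only $O(n^6)$ possible pairs, so repeatedly excising repeated states yields a derectangularising sequence of length $O(n^6)$. This polynomial bound turns the guess-and-check routine into an $\NP$ algorithm, proving Corollary~\ref{cor:meta}.

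The crux is the length bound, and in particular the order of the two moves. Pumping directly on the full reachability relation $L_t\subseteq D_1\times D_t$ bounds the length only by the number of such relations, which is $2^{\Theta(n^2)}$ and useless for $\NP$; it is the prior reduction to sets of size at most~$3$ that collapses the reachability state to polynomially many values and thereby produces a polynomial-length certificate. The delicate point to verify carefully is that this shrinking simultaneously preserves all three positive reachabilities and the single non-reachability of $(j,j')$, which is what makes the two-row pumping argument legitimate.
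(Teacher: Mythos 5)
Your argument is correct and its skeleton is the paper's: \NP{} membership by guessing a derectangularising sequence and verifying purity and non-rectangularity, with the whole burden resting on a polynomial bound for the length of a shortest $\powerset{D}$-$M$-derectangularising sequence. That bound is the paper's Lemma~\ref{lem:small_derect}, and the paper proves the corollary simply by taking $\lists=\{D\}$ there. Your first reduction --- fixing three $*$-paths witnessing $(i,i'),(i,j'),(j,i')$ and shrinking every layer to the at most three elements they use, which preserves purity and keeps $(j,j')$ unreachable --- is exactly the paper's. Where you genuinely differ is the pumping step. The paper takes $k$ minimal and applies pigeonhole twice: if $k>512(|D|^3+1)$ then some $3$-element set occurs at $513$ positions, and among the at most $2^9-1=511$ non-empty relations $R_{i_0,i_t}$ on that set two coincide, so splicing contradicts minimality. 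You instead track only the two rows of the composed relation that the witness actually uses, namely the reachability pairs $(A_t,B_t)$ of the sources $i$ and $j$, and excise between repeated states; this is sound because the update of $(A_t,B_t)$ depends only on these sets as subsets of $D$ and on the next layer, so splicing preserves $A_k$ and $B_k$ and hence the four-point witness of non-rectangularity, while the spliced family stays $M$-purifying as a subfamily of the original. The result is a single pigeonhole on a state space of size $O(|D|^6)$ --- conceptually a little cleaner than the paper's double pigeonhole, numerically a little weaker than its $512(|D|^3+1)$, and either bound suffices for \NP{} membership.
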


\subsection{Cardinality constraints}

Many combinatorial structures can be represented as $M$-partitions
with the addition of cardinality constraints on the parts.  For
example, it might be required that certain parts be non-empty or, more
generally, that they contain at least~$k$ vertices for some fixed~$k$.

Feder et~al.~\cite{FHKM} showed that
the problem of determining whether such a structure exists in 
a given graph can be reduced to a  \ListPartitions{M} problem
in which the cardinality constraints are expressed using lists.  In
Section~\ref{sec:card}, we extend this to counting.  We show that any
\nPartitions{M} problem with additional cardinality constraints of the
form, ``part~$d$ must contain at least $k_d$ vertices'' is
polynomial-time Turing reducible to \nListPartitions{M}.  As a
corollary, we show that the ``homogeneous pairs'' introduced by
Chv\'atal and Sbihi~\cite{CS1987:Bull-free} can be counted in
polynomial time.  Homogeneous pairs can be expressed as an
$M$-partitions problem for a certain $6\times 6$~matrix, with
cardinality constraints on the parts.  
 
\section{Preliminaries}
\label{sec:prelim}

For a positive integer~$k$, we write~$[k]$ to denote the set
$\{1,\dots,k\}$.  
If $\mathcal{S}$ is a set of sets then we use
$\bigcap \mathcal{S}$ to denote the intersection of all sets in~$\mathcal{S}$.
The vertex set of a graph~$G$ is denoted $V(G)$ and
its edge set is
$E(G)$. We write $\{0,1,*\}^{D}$ for the set of all
functions $\sigma\colon D\to\{0,1,*\}$ and 
$\{0,1,*\}^{D\times D'}$ for the set of all matrices $M=(M_{i,j})_{i\in
  D,j\in D'}$, where each $M_{i,j}\in\{0,1,*\}$.

We always use the term ``$M$-partition'' when talking about a
partition of the vertices of a graph according to a
$\{0,1,*\}$-matrix~$M$.  When we use the term ``partition'' without
referring to a matrix, we mean it in the conventional sense of partitioning a
set~$X$ into disjoint subsets $X_1, \dots, X_k$ with $X_1\cup
\dots \cup X_k = X$.

We   view computational counting problems  as
functions mapping strings over input alphabets to natural numbers.  Our
model of computation is the standard multi-tape Turing machine. We say
that a counting problem~$P$ is polynomial-time Turing-reducible to
another counting problem~$Q$ if there is a polynomial-time
deterministic oracle Turing machine $M$ such that, on every instance
$x$ of $P$, $M$ outputs $P(x)$ by making queries to oracle~$Q$. We say
that $P$ is polynomial-time Turing-equivalent to~$Q$ if each is
polynomial-time Turing-reducible to the other.  For decision problems
(languages), we use the standard many-one reducibility: language $A$
is many-one reducible to language $B$ if there exists a function~$f$
that is computable in polynomial time such that $x\in A$ if and only if
$f(x)\in B$.

\section{Counting list $M$-partition problems and counting CSPs}
\label{sec:purifiedcsp}

Toward the 
development of our algorithms and the
proof of our dichotomy, we study a
special case of the problem \LMPartitions{\lists}{M}, in which \lists{} is
$M$-purifying and subset-closed. For such \lists{} and~$M$,
we show that 
the problem \LMPartitions{\lists}{M} is polynomial-time Turing-equivalent to a
counting constraint satisfaction problem (\nCSP{}).
To give the equivalence, we  
introduce the notation needed to specify \#CSP{}s.

A \emph{constraint language} is a finite set $\Gamma$ of named relations over some set~$D$.  For such a language, we define the
counting problem $\nCSP(\Gamma)$ as follows.

\prob{$\nCSP(\Gamma)$.} 
{A set $V$ of variables and a
  set $C$ of constraints of the form $\const{(v_1,\dots,v_k),R}$, where 
    $(v_1,\dots,v_k)\in V^k$ and $R$ is an
  arity-$k$ relation in $\Gamma$.}  
  {The number of assignments
  $\sigma\colon V\to D$ such that
\begin{equation}\label{eq:satisfying}(\sigma(v_1),\dots,\sigma(v_k))\in
  R\text{ for all }\const{(v_1,\dots,v_k),R}\in C\,.
\end{equation}}
The tuple of variables $v_1, \dots, v_k$ in a constraint is referred
to as the constraint's \emph{scope}.
The assignments $\sigma\colon V\to D$ for which
\eqref{eq:satisfying} holds are called the \defn{satisfying assignments} of 
the instance $(V,C)$.  Note that a unary constraint $\const{v,R}$ has
the same effect as a list: it directly restricts the possible values
of the variable~$v$.  As before, we allow the possibility that
$\emptyset\in\Gamma$; 
any instance that includes a constraint
$\const{(v_1, \dots, v_k), \emptyset}$ has no satisfying assignments.

\begin{definition} 
\label{defgammaprime}
Let $M$ be a symmetric matrix in $\{0,1,*\}^{D\times D}$
and let \lists{} be a
subset-closed $M$-purifying set.
Define the constraint language
$$\Gamma'_{\!\lists,M} = \{H^M_{X,Y}\mid X,Y\in\lists\}$$
and let $\GammaLM = \Gamma'_{\!\lists,M} \cup \powerset{D}$, where
$\powerset{D}$ represents the set of all unary relations on~$D$. 
\end{definition}

The unary constraints in $\GammaLM$ will be useful in our study of the
complexity of the dichotomy criterion, in Section~\ref{sec:meta}.
First, we define a convenient restriction on instances of
$\nCSP(\GammaLM)$.

\begin{definition}
\label{def:simple}
    An instance of $\nCSP(\GammaLM)$ is \emph{simple} if:
    \begin{itemize}
    \item there is exactly one unary constraint $\const{v,X_v}$ for each
        variable $v\in V\!$,
    \item there are no binary constraints $\const{(v,v),R}$, and
    \item each pair $u$,~$v$ of distinct variables appears in at most
        one constraint of the form $\const{(u,v),R}$ or $\const{(v,u),R}$.
    \end{itemize}
\end{definition}

\begin{lemma}
\label{lemma:simple}
    For every instance $(V,C)$ of $\nCSP(\GammaLM)$, there is a simple
    instance $(V,C')$ such that an assignment $\sigma\colon V\to D$
    satisfies $(V,C)$ if and only if it satisfies $(V,C')$.
    Further, such an instance can be computed in polynomial time.
\end{lemma}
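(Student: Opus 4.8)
The goal is to take an arbitrary instance $(V,C)$ of $\nCSP(\GammaLM)$ and transform it into a simple one satisfying the three conditions in Definition~\ref{def:simple}, preserving the set of satisfying assignments and doing so in polynomial time. The plan is to handle each of the three simplicity conditions in turn, observing that the key tool is that the constraint language $\GammaLM$ is closed (up to the operations we need) under intersection of relations of the same arity: the binary relations $H^M_{X,Y}$ and the unary relations in $\powerset{D}$ are both closed under intersection, since $H^M_{X,Y}\cap H^M_{X',Y'}=H^M_{X\cap X',Y\cap Y'}$ (using that $\GammaLM$ contains $H^M_{X,Y}$ for all $X,Y\in\lists$ and $\lists$ is subset-closed, so $X\cap X',Y\cap Y'\in\lists$), and the intersection of unary relations from $\powerset{D}$ is again a member of $\powerset{D}\subseteq\GammaLM$.

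I would proceed as follows. First, to guarantee exactly one unary constraint per variable, I combine all the unary constraints on a given variable $v$ into their intersection: if the unary constraints on $v$ are $\const{v,X_1},\dots,\const{v,X_m}$, replace them by the single constraint $\const{v,X_v}$ with $X_v=\bigcap_{\ell} X_\ell$; and if $v$ has no unary constraint in~$C$, introduce $\const{v,D}$ (note $D\in\powerset{D}\subseteq\GammaLM$). Since an assignment satisfies all of $\const{v,X_1},\dots,\const{v,X_m}$ if and only if $\sigma(v)\in X_v$, the satisfying assignments are unchanged. Second, to eliminate binary constraints of the form $\const{(v,v),R}$: such a constraint requires $(\sigma(v),\sigma(v))\in R$, which is equivalent to $\sigma(v)\in R'$ where $R'=\{i\mid(i,i)\in R\}$ is a unary relation; I therefore delete $\const{(v,v),R}$ and intersect $R'$ into the unary constraint on~$v$, again using that $\powerset{D}$ is fully contained in $\GammaLM$. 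Third, to ensure each unordered pair of distinct variables is covered by at most one binary constraint, I merge all binary constraints on a given ordered or reversed pair. If $u,v$ are distinct and appear in constraints $\const{(u,v),R_1},\dots$ and $\const{(v,u),S_1},\dots$, I note that $\const{(v,u),S}$ is equivalent to $\const{(u,v),S^{-1}}$, so after replacing each reversed constraint by its converse I may intersect all of them into a single relation on the scope $(u,v)$; here I use that the intersection of the relevant binary relations is again expressible in $\GammaLM$.

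The one point requiring care is the third step: I must check that the converse $S^{-1}$ of a relation in $\GammaLM$ and the intersection of several binary relations on the same scope are still relations available in $\GammaLM$ (so that the resulting instance is genuinely an instance of $\nCSP(\GammaLM)$, not merely of some larger language). For the converse, $(H^M_{X,Y})^{-1}=H^M_{Y,X}$ because $M$ is symmetric, so converses of the generators stay in $\Gamma'_{\!\lists,M}$. For intersections, I use the identity $H^M_{X,Y}\cap H^M_{X',Y'}=H^M_{X\cap X',Y\cap Y'}$ together with subset-closedness of $\lists$ to keep the result in $\Gamma'_{\!\lists,M}$; an intersection involving a unary relation can instead be folded into the unary constraints as in the first two steps. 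I expect this bookkeeping — confirming closure under converse and intersection and tracking how mixed unary/binary intersections are redistributed — to be the main (though routine) obstacle.

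Finally, the polynomial-time bound is immediate: each of the three steps scans the constraint list $C$ once and, for each variable or each pair of variables, computes a constant number of intersections or converses of relations over the fixed domain~$D$ (whose size is a constant independent of the input). The number of distinct variables and of distinct pairs of variables appearing in $C$ is polynomial in $|C|$, so the whole transformation runs in time polynomial in the size of $(V,C)$, which establishes the claim.
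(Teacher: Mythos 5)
Your proposal is correct and follows essentially the same route as the paper's proof: establish closure of $\GammaLM$ under intersection of binary relations (via $H^M_{X,Y}\cap H^M_{X'\!,Y'}=H^M_{X\cap X'\!,Y\cap Y'}$ and subset-closedness), under relational inverse (via symmetry of~$M$), and under intersection of unary relations, then merge diagonal binary constraints into unary ones, merge all binary constraints on each pair, and collapse the unary constraints per variable. The only difference is the order in which you perform the three merging steps, which is immaterial.
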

\begin{proof}
    Observe that the set of binary relations in $\GammaLM$ is closed
    under intersections: $H^M_{X,Y} \cap H^M_{X'\!,Y'} =
    H^M_{X\cap X'\!,Y\cap Y'}$ and this relation is in~$\GammaLM$
    because \lists{} is subset-closed.  The binary part of $\GammaLM$
    is also closed under relational inverse because $M$ is symmetric,
    so
    \begin{equation*}
        \left(H^M_{X,Y}\right)^{-1} = \{(b,a) \mid (a,b)\in H^M_{X,Y}\}
                                              = H^M_{Y,X}\in \GammaLM\,.
    \end{equation*}
    Since $\powerset{D}\subseteq \GammaLM$, the set of unary relations
    is also closed under intersections.

    We construct $C'$ as follows, starting with~$C$.  Any binary
    constraint $\const{(v,v), R}$ can be replaced by the unary
    constraint $\const{v, \{d\mid (d,d)\in R\}}$.  All the binary
    constraints between distinct variables $u$ and~$v$ can be replaced
    by the single constraint
    \begin{equation*}
        \left\langle
            (u,v), \bigcap 
                \{R \mid \const{(u,v), R}\in C
                    \text{ or } \const{(v,u), R^{-1}}\in C\}
        \right\rangle\,.
    \end{equation*}
    Let the set of constraints produced so far be $C''\!$.  For each
    variable $v$ in turn, if there are no unary constraints applied
    to~$v$ in $C''\!$, add the constraint $\const{v, D}$; otherwise,
    replace all the unary constraints involving $v$ in~$C''$ with
    the single constraint
    \begin{equation*}
        \left\langle
            v, \bigcap \{R \mid \const{v, R}\in C''\}
        \right\rangle\,.
    \end{equation*}
    $C'$ is the resulting constraint set.  The closure properties
    established above guarantee that $(V,C')$ is a $\nCSP(\GammaLM)$
    instance.  It is clear that it has the same satisfying assignments
    as $(V,C)$ and that it can be produced in polynomial time.
\end{proof}

Our main result connecting the counting list $M$-partitions problem with
counting CSPs is the following.

\newcommand\purifiediscsp{For any symmetric $M\in\{0,1,*\}^{D\times
    D}$ and any subset-closed, $M$-purifying set \lists{}, the problem
  \LMPartitions{\lists}{M} is polynomial-time Turing-equivalent to
  $\nCSP(\GammaLM)$.}
\begin{proposition}\label{prop:purifiediscsp}\purifiediscsp\end{proposition}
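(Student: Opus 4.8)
The plan is to establish the equivalence by giving polynomial-time Turing reductions in both directions between $\LMPartitions{\lists}{M}$ and $\nCSP(\GammaLM)$. The forward direction (list partitions to \#CSP) is the conceptually important one: I would take an instance $(G,L)$ of $\LMPartitions{\lists}{M}$ and build an $\nCSP(\GammaLM)$ instance $(V,C)$ whose variables are the vertices of $G$ and whose satisfying assignments correspond exactly to the $M$-partitions of $G$ respecting $L$. For each vertex $v$, I would add the unary constraint $\langle v, L(v)\rangle$, which is legal since $L(v)\in\lists\subseteq\powerset{D}$ and $\powerset{D}\subseteq\GammaLM$. For each pair of distinct vertices $u,v$, I would encode the adjacency condition using the binary relations in $\Gamma'_{\!\lists,M}$.

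The key technical point is why a single relation $H^M_{X,Y}$ suffices to encode the $M$-partition condition on each pair, and this is exactly where $M$-purity is used. Since \lists{} is $M$-purifying, for any $X=L(u)$ and $Y=L(v)$ the submatrix $M|_{X\times Y}$ is pure, i.e.\ contains no $0$ or contains no $1$. If $(u,v)\in E(G)$, the $M$-partition condition forbids assigning $(u,v)$ to a pair with $M_{\sigma(u),\sigma(v)}=0$; if $M|_{X\times Y}$ has no $0$s, this is automatic, and otherwise (no $1$s) the constraint $M_{i,j}\neq 0$ is equivalent to $M_{i,j}=*$, which is precisely the relation $H^M_{X,Y}$. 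A symmetric argument handles the non-edge case, where $M_{i,j}\neq 1$ reduces to $M_{i,j}=*$ on a pure submatrix. In both cases the allowed pairs $(\sigma(u),\sigma(v))$ are exactly those in $H^M_{X,Y}$ (intersected with $X\times Y$, which is enforced anyway by the unary lists), so I would add $\langle(u,v),H^M_{X,Y}\rangle$. The resulting instance has the same number of satisfying assignments as $(G,L)$ has respecting $M$-partitions, giving the reduction (in fact a parsimonious many-one reduction).

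For the reverse direction I would start from an $\nCSP(\GammaLM)$ instance, first applying Lemma~\ref{lemma:simple} to assume it is simple, so that each variable $v$ carries a single unary list $X_v$ and each pair of distinct variables carries at most one binary constraint of the form $\langle(u,v),H^M_{A,B}\rangle$, with no constraints of the form $\langle(v,v),R\rangle$. I would build a graph $G$ on the variable set, setting $L(v)=X_v\cap A\cap\cdots$ to absorb the endpoints of the binary relations into the lists, and then place an edge or non-edge between $u$ and $v$ according to whether the binary relation forces $*=M_{i,j}\neq 0$ or $*=M_{i,j}\neq 1$ behaviour on $H^M_{A,B}$; here again purity guarantees that $H^M_{A,B}$ records exactly one of these two conditions, so the choice of edge versus non-edge is well defined. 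Pairs with no binary constraint can be handled by choosing edge or non-edge arbitrarily on a pure submatrix where both choices agree; the potential obstruction is that an arbitrary simple instance might not be directly realisable as a graph, which is why I expect the main work to lie in verifying that the purity of all relevant submatrices makes every pair consistently assignable as an edge or non-edge, and in using the unary constraints to restrict each variable to a list in \lists{} (legitimate because \lists{} is subset-closed). Matching satisfying assignments to respecting $M$-partitions then completes the equivalence.
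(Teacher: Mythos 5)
Your overall strategy coincides with the paper's: two reductions, with purity invoked at exactly the right place, and your partitions-to-CSP direction (add $\const{(u,v),H^M_{L(u),L(v)}}$ only when the edge/non-edge condition is non-vacuous on the pure submatrix, otherwise add nothing) is precisely the paper's construction. Two points in your CSP-to-partitions direction need repair, however. First, for a pair $u,v$ with no binary constraint it is not true that edge and non-edge ``agree'' on a pure submatrix: if $M|_{L(u)\times L(v)}$ contains a $1$ (hence no $0$), declaring $\{u,v\}$ a non-edge imposes the spurious condition $M_{\sigma(u),\sigma(v)}\neq 1$, which the CSP instance does not impose. The choice is forced, not arbitrary: make $\{u,v\}$ an edge exactly when the pure submatrix contains a $1$, and a non-edge otherwise; then the graph condition is vacuous either way, which is what an unconstrained pair requires. (The analogous rule with ``contains a $0$'' is what makes the constrained pairs encode exactly $H^M\!$.) Second, a variable $v$ occurring in no binary constraint ends up with $L(v)=X_v$, and $X_v$ is an arbitrary subset of $D$ because the unary part of $\GammaLM$ is all of $\powerset{D}$; such a list need not belong to $\lists$, and subset-closure only rescues you when $L(v)$ sits inside some endpoint $X\in\lists$ of a binary relation. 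The paper strips these variables out first, multiplying the count by $\prod_v |X_v|$ --- which is also why this direction is stated as a Turing reduction rather than the parsimonious one you might hope for. Both fixes are routine, but as written the ``arbitrary choice'' claim is false and the claim that every $L(v)$ lies in $\lists$ is unjustified.
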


Because of its length, we split the proof of the proposition into two
lemmas.

\begin{lemma}
  For any symmetric $M\in\{0,1,*\}^{D\times D}$ and any subset-closed,
  $M$-purifying set \lists{}, $\nCSP(\GammaLM)$
  is polynomial-time Turing-reducible to \LMPartitions{\lists}{M}.
\end{lemma}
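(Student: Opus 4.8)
The plan is, given an instance $(V,C)$ of $\nCSP(\GammaLM)$, to construct in polynomial time an instance $(G,L)$ of \LMPartitions{\lists}{M} whose list $M$-partitions are in bijection with the satisfying assignments of $(V,C)$, so that a single oracle call (followed by a multiplication) computes the answer. First I would apply Lemma~\ref{lemma:simple} to assume that $(V,C)$ is simple: each variable $v$ carries exactly one unary constraint $\const{v,X_v}$ with $X_v\subseteq D$, there is no constraint $\const{(v,v),R}$, and each unordered pair of distinct variables carries at most one binary constraint $\const{(u,v),H^M_{X,Y}}$.

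The graph $G$ has vertex set $V$, and the crux of the construction is the choice of list. I would define $L(v)$ to be $X_v$ intersected with every set contributed by the binary constraints at $v$ — the set $X$ for each constraint $\const{(v,w),H^M_{X,Y}}$ in which $v$ is the first coordinate, and $Y$ for each in which it is the second. Every contributed set lies in $\lists$ by the definition of $\GammaLM$, so for any variable occurring in at least one binary constraint $L(v)$ is a subset of a member of $\lists$ and hence itself lies in $\lists$ by subset-closure. A variable occurring in no binary constraint is isolated in the CSP and its value ranges freely over $X_v$; I would simply delete all such variables and multiply the final count by $\prod_v|X_v|$ (outputting $0$ at once if any $X_v=\emptyset$), which is exactly what removes the possibility that $X_v\notin\lists$ for a retained variable.

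It remains to wire up the pairs, and here is where the $M$-purifying hypothesis does the work. For distinct retained $u,v$ we have $L(u),L(v)\in\lists$, so $M|_{L(u)\times L(v)}$ is pure. I would use this freedom to pick, for each pair, an edge or a non-edge so that the $M$-partition adjacency condition realises precisely the relation I intend: for a pair carrying a binary constraint I want the surviving relation to be $H^M_{L(u),L(v)}$, which an edge achieves when the submatrix has no $1$s and a non-edge achieves when it has no $0$s; for a pair carrying no constraint I want no restriction whatsoever, which an edge achieves when the submatrix has no $0$s and a non-edge when it has no $1$s. Purity guarantees that at least one option is available in every case (and either works when the submatrix is all-$*$).

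To finish I would verify the bijection: an assignment $\sigma\colon V\to D$ respects $L$ exactly when it satisfies all unary constraints together with the coordinate restrictions of the binary ones, and by the edge/non-edge choice each adjacency condition collapses to either $H^M_{L(u),L(v)}$ or the full product $L(u)\times L(v)$, so $\sigma$ is a list $M$-partition of $(G,L)$ iff it satisfies $(V,C)$. The step I expect to require the most care — and the reason $M$-purifying is essential rather than cosmetic — is the handling of pairs carrying no CSP constraint and of coincidences $\sigma(u)=\sigma(v)$. Every pair of vertices of $G$ is forced to be an edge or a non-edge and so always imposes some $M$-condition, in particular a diagonal condition on $M_{i,i}$ when $\sigma(u)=\sigma(v)=i$, whereas the CSP constrains such pairs not at all; purity is exactly what lets me choose the adjacency so that this forced condition is vacuous on $L(u)\times L(v)$, diagonal included. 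A final check that simplification, list computation, the $O(|V|^2)$ purity tests on constant-size submatrices, the oracle call and the closing multiplication all run in polynomial time completes the reduction.
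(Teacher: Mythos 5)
Your construction is the same as the paper's: simplify the instance, drop unconstrained variables (multiplying by $\prod_v |X_v|$), set $L(v)$ to the intersection of $X_v$ with the sets contributed by the binary constraints at $v$, and use purity of $M|_{L(u)\times L(v)}$ to choose an edge or non-edge so that the adjacency condition realises either $H^M_{L(u),L(v)}$ or no condition at all. The verification of the bijection, including the observation that purity handles the forced diagonal conditions, matches the paper's argument, so the proposal is correct and essentially identical in approach.
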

\begin{proof}
Consider an input $(V,C)$ to $\nCSP(\GammaLM)$, which we may assume to
be simple.  Each variable appears in exactly one unary constraint,
$\const{v,X_v}\in C$.  Any variable $v$ that is not used in a binary
constraint can take any value in~$X_v$ so just introduces a
multiplicative factor of $|X_v|$ to the output of the counting CSP.
Thus, we will assume without loss of generality that every variable is
used in at least one constraint with a relation from
$\Gamma'_{\!\lists,M}$ and, by simplicity, there are no constraints of
the form $\const{(v,v),R}$.

We now define a corresponding instance $(G,L)$ of 
the problem \LMPartitions{\lists}{M}.
The vertices of $G$ are the variables~$V$ of the \nCSP{} instance.
For each variable $v\in V\!$, set
\begin{equation*}
    L(v) = X_v \cap \bigcap\left\{
               X \mid 
               \mbox{for some $u$ and $Y$,
              $ \const{(v,u),H^M_{X,Y}}\in C$  or
                     $ \const{(u,v),H^M_{Y,X}}\in C$}
           \right\}.
\end{equation*}
The edges $E(G)$ of our instance are the 
unordered pairs $\{u,v\}$ that satisfy one of the following conditions:
\begin{itemize}
\item there is a constraint between $u$ and $v$ in $C$ and $M|_{L(u)\times L(v)}$
  has a $0$ entry, or
\item there is no constraint between $u$ and $v$ in $C$ and $M|_{L(u)\times L(v)}$
has a $1$ entry.
\end{itemize}

Since every vertex~$v$ is used in at least one constraint with a relation
$H^M_{X,Y}$ where, by definition, $X$ and $Y$ are in $\lists$, every 
set~$L(v)$
is a subset of some 
set~$W\in\lists$.  
\lists{} is subset-closed so $L(v)\in\lists$
for all $v\in V$, as required.

We claim that a function
$\sigma\colon V\to D$ is a satisfying assignment of $(V,C)$ if and
only if it is an $M$-partition of $G$ that respects $L$.  Note that,
since \lists{} is $M$-purifying, no submatrix $M|_{X\times Y}$
($X,Y\in\lists)$ contains both 0s and~1s.

First, suppose that $\sigma$ is a satisfying assignment of $(V,C)$.
For each variable~$v$, $\sigma$ satisfies all the constraints
$\const{v,X_v}$, $\const{(v,u),H^M_{X,Y}}$ and
$\const{(u,v),H^M_{Y,X}}$ containing~$v$.  Therefore,
$\sigma(v)\in X_v$ and $\sigma(v)\in X$ for each binary constraint
$\const{(v,u),H^M_{X,Y}}$ or $\const{(u,v),H^M_{Y,X}}$, so
$\sigma$~satisfies all the list requirements.

To show that $\sigma$ is an $M$-partition of $G$,
consider any pair of distinct vertices $u,v\in V$.  If
there is a constraint $\const{(u,v), H^M_{X,Y}}\in C$, then
$\sigma$~satisfies this constraint so $M_{\sigma(u),\sigma(v)}=*$ and
$u$ and~$v$ cannot stop $\sigma$~being an $M$-partition.  Conversely,
suppose there is no constraint between~$u$ and~$v$ in~$C$.  If
$M|_{L(u)\times L(v)}$~contains a~0, there is no edge $(u,v)\in E(G)$ by
construction; otherwise, if $M|_{L(u)\times L(v)}$ contains a~1,
there is an edge $(u,v)\in E(G)$ by construction; otherwise, $M_{x,y}=*$ for
all $x\in L(u)$, $y\in L(v)$.  In all three cases, the assignment to
$u$ and~$v$ is consistent with $\sigma$~being an $M$-partition.

Conversely, suppose that~$\sigma$ is not a satisfying assignment
of~$(V,C)$.  If $\sigma$~does not satisfy some unary constraint
$\const{v,X}$ then $\sigma(v)\notin L(v)$ so $\sigma$ does not
respect~\lists{}.  If $\sigma$ does not satisfy some binary constraint
$\const{(u,v), H^M_{X,Y}}$ where $u$ and~$v$ are distinct then, by
definition of the relation $H^M_{X,Y}$, $M_{\sigma(u),\sigma(v)}\neq
*$.  If $M_{\sigma(u),\sigma(v)}=0$, there is an edge $(u,v)\in E(G)$ by
construction, which is forbidden in $M$-partitions; if
$M_{\sigma(u),\sigma(v)}=1$, there is no edge $(u,v)\in E(G)$
but this edge is required in $M$-partitions.  Hence,
$\sigma$~is not an $M$-partition.
\end{proof}
 
\begin{lemma}
  For any symmetric $M\in\{0,1,*\}^{D\times D}$ and any subset-closed,
  $M$-purifying set \lists{}, the problem \LMPartitions{\lists}{M} is
  polynomial-time Turing-reducible to $\nCSP(\GammaLM)$.
\end{lemma}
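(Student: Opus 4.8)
The plan is to give a parsimonious reduction: from an instance $(G,L)$ of \LMPartitions{\lists}{M} I would build a single instance $(V,C)$ of $\nCSP(\GammaLM)$ whose satisfying assignments are exactly the $M$-partitions of~$G$ that respect~$L$, and then answer the list-partition query with one oracle call. Take $V=V(G)$, and for each vertex~$v$ impose the unary constraint $\langle v, L(v)\rangle$; this lies in $\GammaLM$ because $L(v)\in\lists\subseteq\powerset{D}$. The remaining work is to encode, for each unordered pair $\{u,v\}$ of distinct vertices, the $M$-partition requirement, which depends on whether $\{u,v\}$ is an edge of~$G$.

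The key observation is that, since \lists{} is $M$-purifying, each submatrix $M|_{L(u)\times L(v)}$ has no $0$s or no $1$s. I would split into the two relevant cases. For an edge $\{u,v\}\in E(G)$ the requirement is $M_{\sigma(u),\sigma(v)}\neq 0$: if $M|_{L(u)\times L(v)}$ has no $0$s this holds automatically and no binary constraint is needed, whereas if it has no $1$s then ``$\neq 0$'' is equivalent to ``$=*$'', i.e.\ to membership in $H^M_{L(u),L(v)}$, so I add the constraint $\langle (u,v), H^M_{L(u),L(v)}\rangle$. Symmetrically, for a non-edge the requirement $M_{\sigma(u),\sigma(v)}\neq 1$ is automatic when $M|_{L(u)\times L(v)}$ has no $1$s and equivalent to $H^M_{L(u),L(v)}$-membership when it has no $0$s. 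Every binary relation used is $H^M_{X,Y}$ with $X,Y\in\lists$, hence in $\Gamma'_{\!\lists,M}\subseteq\GammaLM$. The resulting instance is simple in the sense of Definition~\ref{def:simple} and is computable in polynomial time: since $M$ has constant size, for each of the $O(|V(G)|^2)$ pairs, deciding whether $M|_{L(u)\times L(v)}$ contains a $0$ or a $1$ and whether $\{u,v\}$ is an edge, and thus which constraint (if any) to add, takes constant time.

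Finally I would verify that $\sigma\colon V\to D$ satisfies $(V,C)$ if and only if it is an $M$-partition of~$G$ that respects~$L$. Respecting~$L$ is captured exactly by the unary constraints, and for each distinct pair the added (or deliberately omitted) binary constraint was constructed to coincide with the edge/non-edge condition of the $M$-partition definition. The one point I would be careful about is the case $\sigma(u)=\sigma(v)=d$ for distinct $u,v$: here the relevant matrix entry is the diagonal entry $M_{d,d}$, but this entry occurs inside the submatrix $M|_{L(u)\times L(v)}$, so the purity-based case analysis above treats it correctly. For instance, if $M_{d,d}=0$ and $\{u,v\}\in E(G)$, then $M|_{L(u)\times L(v)}$ contains a $0$, hence no $1$, so the constraint $\langle (u,v), H^M_{L(u),L(v)}\rangle$ is present and is violated by $(d,d)$, exactly as the $M$-partition definition demands. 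This is the main subtlety; the rest is a routine check that ``$\neq 0$'' aligns with ``$=*$'' (when there are no $1$s) and ``$\neq 1$'' aligns with ``$=*$'' (when there are no $0$s). Once the equivalence is established the two counts agree, so a single query to $\nCSP(\GammaLM)$ yields the value of \LMPartitions{\lists}{M} on $(G,L)$, completing the reduction.
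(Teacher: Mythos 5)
Your construction is the same as the paper's: one oracle call on the CSP instance with unary constraints $\langle v,L(v)\rangle$ and binary constraints $\langle (u,v),H^M_{L(u),L(v)}\rangle$ exactly for edges whose submatrix $M|_{L(u)\times L(v)}$ contains a $0$ and non-edges whose submatrix contains a $1$, with purity guaranteeing that ``$\neq 0$'' (resp.\ ``$\neq 1$'') coincides with ``$=*$'' in those cases. The argument is correct; the only cosmetic difference is that you phrase the non-edge trigger as ``the submatrix has no $0$s'' rather than ``has a $1$'', which differs from the paper only when the submatrix is all $*$s, where the added constraint is vacuous anyway.
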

\begin{proof}
We now essentially reverse the construction of the previous lemma to
give a reduction from \LMPartitions{\lists}{M} to $\nCSP(\GammaLM)$.
For any instance ($G,L)$ of
\LMPartitions{\lists}{M},
we construct a corresponding instance $(V,C)$ of $\nCSP(\GammaLM)$ as follows.
The set of variables~$V$ is $V(G)$.
The set of constraints~$C$ consists of a constraint $\const{v,
L(v)}$ for each vertex $v\in V(G)$ and a constraint $\const{(u,v),
H^M_{L(u),L(v)}}$ for every pair of distinct vertices $u$, $v$ such that:
\begin{itemize}
\item $(u,v)\in E(G)$  and $M|_{L(u)\times L(v)}$
  has a 0 entry, or
\item  $(u,v)\not\in E(G)$ and $M|_{L(u)\times L(v)}$  has a 1 entry.
\end{itemize}

We show that a function
$\sigma\colon V\to D$ is a satisfying assignment of $(V,C)$ if and
only if it is an $M$-partition of $G$ that respects $L$.  It is clear
that $\sigma$~satisfies the unary constraints if and only if it
respects~$L$.

If $\sigma$ satisfies $(V,C)$ then consider any pair of distinct
vertices $u,v\in V$.  If there is a binary constraint involving $u$
and~$v$, then $M_{\sigma(u),\sigma(v)} = M_{\sigma(v),\sigma(u)} = *$
so the existence or non-existence of the edge $(u,v)$ of~$G$ does not
affect whether $\sigma$~is an $M$-partition.  If there is no binary
constraint involving $u$ and~$v$, then either there is an edge $(u,v)\in
E(G)$ and $M_{\sigma(u),\sigma(v)}\neq 0$ or there is no edge~$(u,v)$ and
$M_{\sigma(u),\sigma(v)}\neq 1$.  In all three cases, $\sigma$~maps $u$
and~$v$ consistently with it being an $M$-partition.

Conversely, if $\sigma$ does not satisfy $(V,C)$, either it fails to
satisfy a unary constraint, in which case it does not respect~$L$, or
it satisfies all unary constraints (so it respects~$L$), but it fails to satisfy a binary constraint $\const{(u,v),H^M_{L(u),L(v)}}$.  In the
latter case, by construction, $M_{\sigma(u),\sigma(v)}\neq *$ so
either $M_{\sigma(u),\sigma(v)}=0$ but there is an edge $(u,v)\in E(G)$, or
$M_{\sigma(u),\sigma(v)}=1$ and there is no edge $(u,v)\in E(G)$.  In either
case, $\sigma$~is not an $M$-partition of~$G$.
\end{proof}

\section{An arc-consistency based algorithm for $\nCSP(\GammaLM)$}
\label{sec:arc}

In the previous section, we showed that a class of
\LMPartitions{\lists}{M} problems is equivalent to a certain class of
counting CSPs, where the constraint language consists of binary
relations and all unary relations over the domain~$D$.  We now
investigate the complexity of such \nCSP{}s.

Arc-consistency is a standard solution technique for constraint
satisfaction problems~\cite{CSPbook}.  It is, essentially, a local
search method which initially assumes that each variable may take any
value in the domain and iteratively reduces the range of values that
can be assigned to each variable, based on the constraints applied to
it and the values that can be taken by other variables in the scopes
of those constraints.

For any simple $\nCSP(\GammaLM)$ instance~$(V,C)$,
define the vector of \defn{arc-consistent domains} $(D_v)_{v\in V}$ by the 
procedure in Algorithm~\ref{alg:ACComp}.
\begin{algorithm}[t]
\caption{The algorithm for computing arc-consistent domains 
for a simple $\nCSP(\GammaLM)$ instance~$(V,C)$ where, for each $v\in V$, $\const{v,X_v}\in C$ is the unary constraint involving~$v$.}
\label{alg:ACComp}
\begin{algorithmic}
\For{$v\in V$} 
\State{$D_v \gets X_v$}
\EndFor
\Repeat
 \For{$v\in V$}
   \State{$D'_v \gets D_v$}
 \EndFor
 \For{$\const{(u,v), R}\in C$} 
   \State{$D_u \gets \{d\in D_u \mid 
   \mbox{for some $d'\in D_v$, $(d,d')\in R$}\}$}
   \State{$D_v \gets \{d\in D_v \mid 
   \mbox{for some $d'\in D_u$, $(d',d)\in R$}\}$}
 \EndFor
\Until {$\forall {v\in V}$, $D_v=D'_v$}
\State{\Return {$(D_v)_{v\in V}$}}
\end{algorithmic}
\end{algorithm}
 At no point in the execution of the algorithm can any domain~$D_v$ increase in
size so, for fixed~$D$, the running time of the algorithm is at most a polynomial in 
$|V|+|C|$.

It is clear that, if $(D_v)_{v\in V}$ is the vector of arc-consistent
domains for a simple $\nCSP(\GammaLM)$   instance $(V,C)$, then every satisfying
assignment~$\sigma$ for that instance must have $\sigma(v)\in D_v$ for
each variable~$v$.  In particular, if some $D_v=\emptyset$, then the
instance is unsatisfiable.  (Note, though, that the converse does not
hold.  If $D=\{0,1\}$ and $R=\{(0,1),(1,0)\}$, the instance with constraints
$\const{x,D}$, $\const{y,D}$, $\const{z,D}$,
$\const{(x,y),R}$, $\const{(y,z),R}$ and $\const{(z,x),R}$ is
unsatisfiable but arc-consistency assigns $D_x = D_y = D_z =
\{0,1\}$.)

The arc-consistent domains computed for a simple instance $(V,C)$ can
yield further simplification of the constraint structure, which we
refer to as \defn{factoring}.  
The factoring applies when the arc-consistent domains restrict a binary
relation to a Cartesian product. In this case, the binary relation
can be replaced with corresponding unary relations.
Algorithm~\ref{alg:factor}  
factors a simple instance with
respect to a vector $(D_v)_{v\in V}$ of arc-consistent domains, producing a
set~$F$ of factored constraints.
\begin{algorithm}[t]
\caption{The algorithm for factoring
a simple $\nCSP(\GammaLM)$ instance~$(V,C)$
with respect to a vector $(D_v)_{v\in V}$ of arc-consistent domains.
$F$ is the set of factored constraints.}
\label{alg:factor}
\begin{algorithmic}
 \State{$F \gets C$}
 \For{$\const{(u,v), R} \in C$} 
   \If{$R\cap (D_u\times D_v)$ is a Cartesian product $D'_u\times
    D'_v$} 
        \State{Let $\const{u,\newX_u}$ and $\const{v,\newX_v}$ be the unary constraints 
        involving~$u$ and~$v$ in~$F$.}
         \State{$F \gets 
        (F \cup \{\const{u,\newX_u \cap D'_u}, \const{v,\newX_u \cap D'_v}\})
            \setminus \{ \const{(u,v), R}, \const{u,\newX_u}, \const{v,\newX_v}\}$}
     \EndIf
 \EndFor
 \State{\Return{$F$}}
\end{algorithmic}
\end{algorithm}
Recall that there is at most one constraint in~$C$ between distinct
variables and there are no binary constraints $\const{(v,v), R}$
because the instance is simple.  
Note also that, if $|D_u|\leq 1$ or
$|D_v|\leq 1$, then $R\cap (D_u\times D_v)$ is necessarily a Cartesian product.
It is easy to see that 
the result of factoring a simple instance is
simple, that Algorithm~\ref{alg:factor} runs in polynomial time and that  the instance~$(V,F)$ has the same satisfying assignments as~$(V,C)$.

The \emph{constraint graph} of a \CSP{} instance $(V,C)$ (in any
constraint language) is the undirected graph with vertex set~$V$
that contains an
edge between every pair of distinct variables that appear 
together in the scope
of some constraint.

\begin{algorithm}[t]
\caption{The arc-consistency based algorithm for counting satisfying assignments to
  simple instances of $\nCSP(\GammaLM)$.
  The input is a
simple
   instance $(V,C)$ of $\nCSP(\GammaLM)$.}
\label{alg:AC}
\begin{algorithmic}
    \Function{AC}{\text{variable set }V, \text{constraint set } C}
    \State{Use Algorithm~\ref{alg:ACComp} to compute the vector of arc-consistent domains $(D_v)_{v\in V}$}
    \State{Use Algorithm~\ref{alg:factor} to construct the set $F$ of factored constraints}
     \If{$D_v=\emptyset$ for some $v\in V$}
     \State{\Return{0}}
    \EndIf      
    \State{Compute the constraint graph $H$ of $(V,F)$}
        \State{Let  $H_1, \dots, H_\kappa$ be the components of $H$ with $V_i=V(H_i)$}
    \State{Let $F_i$ be the set of constraints in $F$ involving variables in $V_i$}
    \For{$i\in [\kappa]$}
        \If{$|D_w|=1$ for some $w\in V_i$}
            \State{$Z_i\gets 1$}
        \Else
            \State{Choose $w_i\in V_i$}\label{line:choice}
            \State{Let $\theta_i$ be the unary constraint involving $w_i$ in $F_i$}
            \For{$d\in D_{w_i}$}
               \State{$F'_{i,d} \gets (F_i \cup \{\const{w_i, \{d\}}\} ) \setminus \{\theta_i\}$}
            \EndFor            
            \State{$Z_i\gets \sum_{d\in D_{w_i}}
                   \mathrm{AC}(V_i,  F'_{i,d})$}
        \EndIf
    \EndFor
    \State{\Return{$\prod_{i=1}^\kappa Z_i$}}
    \EndFunction
\end{algorithmic}
\end{algorithm}

 Algorithm~\ref{alg:AC}  uses
arc-consistency to count the
satisfying assignments of simple $\nCSP(\GammaLM)$ instances.  It is
straightforward to see that the algorithm terminates, since each
recursive call is either on an instance with strictly fewer variables
or on one in which at least one variable has had its unary constraint
reduced to a singleton and no variable's unary constraint has
increased.  For general inputs, the algorithm may take exponential
time to run but, in Lemma~\ref{lem:quickarc} we show that the running
time is polynomial for the inputs we are interested in.

We first argue that the algorithm is correct.  By
Lemma~\ref{lemma:simple}, we may assume that the given instance
$(V,C)$ is simple.  Every satisfying assignment $\sigma\colon V\to D$
satisfies $\sigma(v)\in D_v$ for all $v\in V$ so restricting our
attention to arc-consistent domains does not alter the output.
Factoring the constraints also does not change the number of
satisfying assignments: it merely replaces some binary constraints
with equivalent unary ones.  The constraints are factored, so any
variable $v$ with $|D_v|=1$ must, in fact, be an isolated vertex in
the constraint graph because, as noted above, any binary constraint
involving it has been replaced by unary constraints.  Therefore, if a
component $H_i$ contains a variable $v$ with $|D_v|=1$, that component
is the single vertex~$v$, which is constrained to take a single value,
so the number of satisfying assignments for this component, which we denote~$Z_i$,
is equal to~$1$.  
(So we have now shown that the if branch in the for loop is correct.)
For components that contain more than one variable, it is
clear that we can choose one of those variables, $w_i$, and group the
set of $M$-partitions~$\sigma$ according to the value of~$\sigma(w_i)$.
(So we have now shown that the else branch is correct.)
Because there are no constraints between variables in different
components of the constraint graph, the number of satisfying
assignments factorises as $\prod_{i=1}^\kappa Z_i$.

For a binary relation $R$, we write
\begin{align*}
    \pi_1(R) &= \{a \mid (a,b)\in R \text{ for some }b\} \\
    \pi_2(R) &= \{b \mid (a,b)\in R \text{ for some }a\}\,.
\end{align*}

For 
the following
proof, we will also need the observation of Dyer and Richerby
\cite[Lemma~1]{DRfull} that any rectangular relation $R\subseteq
\pi_1(R)\times \pi_2(R)$ can be written as $(A_1\times B_1) \cup \dots
\cup (A_\lambda \times B_\lambda)$, where the $A_i$ and $B_i$
partition $\pi_1(R)$ and~$\pi_2(R)$, respectively.  The subrelations
$A_i\times B_i$ are referred to as \defn{blocks}.  A rectangular
relation $R\neq \pi_1(R)\times \pi_2(R)$ must have at least two
blocks.

\begin{lemma}\label{lem:quickarc}
Suppose that $\lists$ is subset-closed and $M$-purifying.
If  there is no \lists{}-$M$-derectangularising sequence, 
then Algorithm~\ref{alg:AC}
runs in polynomial time.
\end{lemma}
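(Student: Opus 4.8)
The plan is to bound the size of the recursion tree generated by Algorithm~\ref{alg:AC}, since we already know that each recursive call performs only polynomially much work (arc-consistency via Algorithm~\ref{alg:ACComp}, factoring via Algorithm~\ref{alg:factor}, and the computation of connected components) and that the recursion terminates. By Lemma~\ref{lemma:simple} we may assume the input is simple. The branching factor at any call is $|D_{w_i}|\le|D|$, a constant, so it suffices to show that the recursion tree has only polynomially many nodes. For this I will show that, along each connected component, the recursion has depth bounded by a small constant; the saving grace is that the $\prod_i Z_i$ step lets us treat components independently rather than enumerating joint assignments, so constant branching does not compound into an exponential blow-up.

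First I would record what the hypothesis buys us. Since $\lists$ is $M$-purifying, every subset of $\lists$ is automatically $M$-purifying, so the first clause of Definition~\ref{def:derect} is vacuous for sequences drawn from $\lists$; hence the absence of an $\lists$-$M$-derectangularising sequence is equivalent to the statement that every composition $H^M_{D_1,D_2}\circ\dots\circ H^M_{D_{k-1},D_k}$ with all $D_j\in\lists$ is rectangular. Because $\lists$ is subset-closed, all arc-consistent domains $D_v$ computed during the algorithm lie in $\lists$, and each factored binary constraint restricted to the arc-consistent domains is exactly some relation $H^M_{D_u,D_v}$. Taking $k=2$ shows every such binary relation is rectangular; since the factoring step of Algorithm~\ref{alg:factor} removes any constraint whose restriction is a Cartesian product, every \emph{surviving} binary constraint is a rectangular relation with full projections and at least two blocks in the sense of \cite[Lemma~1]{DRfull}.

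The heart of the argument --- and the step I expect to be the main obstacle --- is the following claim: if $(V_i,F_i)$ is a connected, simple, factored, arc-consistent instance in which every domain has size at least two, then fixing a single variable $w_i$ to any value $d\in D_{w_i}$ and re-running arc-consistency and factoring produces an \emph{edgeless} instance, i.e.\ every binary constraint becomes a Cartesian product on the reduced domains and is factored away. To prove this I would argue by contradiction: suppose some edge $(u,v)$ survives, so that the surviving relation $H^M_{D'_u,D'_v}$ is non-Cartesian with $|D'_u|,|D'_v|\ge 2$. Using connectivity, pick a path from $w_i$ to $u$ and consider the composition of the relations along that path followed by the edge to $v$; by the hypothesis this composition is rectangular, so the set of values it relates to the fixed value $d$ is a single block. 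Arc-consistency guarantees that the reduced domains $D'_u$ and $D'_v$ are contained in (and, one must check, equal to) the projections of this block, which forces the surviving relation between $u$ and $v$ to be Cartesian --- a contradiction. The delicate point, which is exactly where the full strength of rectangularity of \emph{all} compositions (rather than of single relations) is needed, is to show that arc-consistency genuinely reduces each domain to the block determined by $d$ even in the presence of cycles and of several incident constraints; reconciling the local propagation performed by Algorithm~\ref{alg:ACComp} with the global block structure of the composed relations is the technical crux.

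Granting the claim, the running-time bound follows quickly. Whenever Algorithm~\ref{alg:AC} reaches the else-branch on a connected component $V_i$, it branches over the $|D_{w_i}|\le|D|$ values of $w_i$; by the claim, each resulting recursive call is on an edgeless instance, whose connected components are single variables. Each such isolated variable $v$ is then resolved by at most $|D|$ further branches, each fixing $v$ to a singleton and hence terminating. Thus the subtree rooted at any connected component has depth bounded by a constant and contains $O(|D|\cdot|V|\cdot|D|)=O(|V|)$ nodes, since $|D|$ is fixed; summing over the at most $|V|$ components of the original instance and recalling that each node costs $\mathrm{poly}(|V|+|C|)$ time, Algorithm~\ref{alg:AC} runs in polynomial time.
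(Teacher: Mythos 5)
Your setup (reduction to bounding the recursion tree, simplicity, the observation that constraints restricted to arc-consistent domains are relations $H^M_{D_u,D_v}$ that are rectangular with at least two blocks after factoring) matches the paper, but your central claim is false, and it is exactly the claim you yourself flag as the ``technical crux''. The claim was: after pinning one variable and re-running arc-consistency and factoring, the instance becomes edgeless. Here is a counterexample consistent with the lemma's hypothesis. Let $D=\{p,q\}\cup\{1,2,3\}\cup\{x,y,z\}$ and let $M$ be the symmetric $\{0,*\}$-matrix with $M|_{\{p,q\}\times\{1,2,3\}}$ having $*$ exactly at $(p,1),(p,2),(q,3)$, with $M|_{\{1,2,3\}\times\{x,y,z\}}$ having $*$ exactly at $(1,x),(2,y),(3,z)$, and with every other entry of $M$ equal to $*$. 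Take $\lists=\subclo{\{\{p,q\},\{1,2,3\},\{x,y,z\}\}}$; since $M$ has no $1$s, $\lists$ is $M$-purifying, and one can check that every composition of relations $H^M_{A,B}$ with $A,B\in\lists$ is rectangular (any composition passing through a full-product block or a singleton is a Cartesian product; the remaining compositions are built from the two displayed relations, their inverses and their restrictions, and all of these compose rectangularly because the second relation is a partial bijection). So there is no \lists{}-$M$-derectangularising sequence. Now run Algorithm~\ref{alg:AC} on the path instance $w-u-v$ with unary constraints $\{p,q\},\{1,2,3\},\{x,y,z\}$ and the two binary constraints above: it is simple, connected, arc-consistent, factored, and all domains have size at least two. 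Pinning $w:=p$ and re-running arc-consistency gives domains $\{p\},\{1,2\},\{x,y\}$, and the $u$--$v$ constraint restricted to $\{1,2\}\times\{x,y\}$ is $\{(1,x),(2,y)\}$, a perfect matching --- not a Cartesian product, so it survives factoring and the recursion must pin again inside the component $\{u,v\}$. Nesting this pattern yields instances whose recursion depth grows with $|D|$, so no constant-depth bound of the kind your counting step relies on can hold.

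The error in your proposed argument for the claim is the final inference: even if one shows that $D'_u$ and $D'_v$ equal the projections of the block of the path-composition determined by the pinned value $d$ (in the example, $\{1,2\}$ and $\{x,y\}$ are exactly these projections), it does \emph{not} follow that the constraint between $u$ and $v$ restricted to $D'_u\times D'_v$ is Cartesian; the restriction of a rectangular relation to the block projections of a longer composition can itself still have several blocks. What the hypothesis actually buys --- and what the paper proves by induction along a path from the pinned variable, using the subset-closure of \lists{} to form a modified middle domain $D^*$ and derive a derectangularising sequence from any failure --- is the weaker but sufficient statement that every partial composition along the path has full projections and at least two blocks, so that pinning $w_i$ forces \emph{every} domain in the component into a proper sub-block: $D'_v\subset D_v$ for all $v$. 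This gives recursion depth at most $|D|$ (not $O(1)$) and hence at most $(|V|\cdot|D|)^{|D|}$ recursive calls, which is polynomial since $|D|$ is a constant. Your final counting argument would need to be replaced by this depth-$|D|$ bound.
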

\begin{proof}
We will argue that the 
number of recursive calls made by
the function AC in Algorithm~\ref{alg:AC}
is bounded above by a polynomial in $|V|$.  This suffices, since every other step of the procedure is obviously polynomial.

Consider a run of the algorithm on instance $(V,C)$ which, by
Lemma~\ref{lemma:simple}, we may assume to be simple.  Suppose the
run makes a recursive call with input 
$(V_i,F'_{i,d})$.
For each $v\in V_i$, let $D'_{v}$ denote the arc-consistent domain for~$v$
that is computed during the recursive call.
We will show below that $D'_v\subset D_v$ for every 
variable~$v\in V_i$.  
This implies that the recursion depth is at most~$|D|$.
As a crude bound, it follows that the number of recursive calls is at most
${(|V|\cdot |D|)}^{|D|},$
since each recursive call 
that is made is nested below a sequence of at most~$|D|$ previous calls, each of
which chose a vertex $v\in V$ and ``pinned'' it to a domain element $d\in D$ (i.e., introduced the constraint $\const{v,\{d\}}$).

Towards showing that the domains of all variables decrease at each
recursive call, suppose that we are computing $\mathrm{AC}(V,C)$ and the
arc-consistent domains are $(D_v)_{v\in V}$.  As observed above, for
any component $H_i$ of the constraint graph on which a
recursive call is made, we must have $|D_v|>1$ for every $v\in V_i$.
Fix such a component and, for each $v\in V_i$, let $D'_v$ be the
arc-consistent domain calculated for~$v$ in the recursive call on
$H_i$.  It is clear that $D'_v\subseteq D_v$; we will show that $D'_v
\subset D_v$.

Consider a path $v_1\dots v_\ell$ in~$H_i$, where $v_1=w_i$
and $v_\ell=v$.  For each $j\in[\ell-1]$, there is exactly one binary
constraint in~$F_i$ involving $v_j$ and~$v_{j+1}$.  This is either
$\const{(v_j, v_{j+1}), R_j}$ or $\const{(v_{j+1}, v_j), R_j^{-1}}$
and, without loss of generality, we may assume that it is the former.
For $j\in[\ell-1]$, let $R'_j = R_j \cap (D_{v_j} \times D_{v_{j+1}}) =
H^M_{D_{v_j},D_{v_{j+1}}}$.
The relation~$R'_j$ is 
pure
because $D_{v_j}$
and $D_{v_{j+1}}$ are in the subset-closed set~$\lists$ 
and,
since $\lists$ is $M$-purifying, 
so is $\{D_{v_j},D_{v_{j+1}}\}$. 
These
two domains do not form a
derectangularising sequence by the hypothesis of the lemma, so 
$H^M_{D_{v_j},D_{v_{j+1}}}$ is rectangular.  
If some
$R_j=\emptyset$ then $D_{v_j} = D_{v_{j+1}} = \emptyset$ by
arc-consistency, contradicting the fact that $|D_v|>1$ for all $v\in
V_i$.  If some $R'_j$ has just one block, $R_j\cap (D_{v_j}\times
D_{v_{j+1}})$ is a Cartesian product, contradicting the fact that~$F$
is a factored set of constraints.  Thus, every $R'_j$ has at least two
blocks.

For $j\in[\ell-1]$, let $\Phi_j = R'_1 \circ \dots \circ R'_j$.
As above, note that $\{D_{v_1}, \ldots, D_{v_{j+1}}\}$ is $M$-purifying
and the sequence
$D_{v_1}, \dots, D_{v_{j+1}}$ is not derectangularising,
so $\Phi_j$ is rectangular.
We will show by
induction on~$j$ that $\pi_1(\Phi_j) = D_{v_1}$, $\pi_2(\Phi_j) = D_{v_{j+1}}$
and $\Phi_j$ has at least two blocks.  Therefore, since 
the recursive call constrains $\sigma(w_i)$ to be
$d$ and $d\in A$ for some block $A\times B\subset \Phi_\ell$, we have $D'_v\subseteq B\subset
D_v$, which is what we set out to prove.

For the base case of the induction, take $j=1$
so $\Phi_1=R'_1$. We showed above that $R'_1$ has at least two blocks 
and that $R'_1= H^M_{D_{v_1},D_{v_2}}$.
By arc-consistency, 
$\pi_1(R'_1) = D_{v_1}$ and
$\pi_2(R'_1) = D_{v_2}$.

For the inductive step, take $j\in [\ell-2]$.
Suppose that $\pi_1(\Phi_j)=D_{v_1}$,
$\pi_2(\Phi_j)=D_{v_{j+1}}$ and $\Phi_j = \bigcup_{s=1}^\lambda (A_s\times
A'_s)$ has at least two blocks.  We have $\Phi_{j+1} = \Phi_j\circ R'_{j+1}$
and $R'_{j+1} = \bigcup_{t=1}^\mu (B_t\times B'_t)$ for some $\mu \geq 2$.

For every $d\in D_{v_1}$, there is a $d'\in D_{v_{j+1}}$ such that
$(d,d')\in \Phi_j$ by the inductive hypothesis, and a $d''\in
D_{v_{j+1}}$ such that $(d'\!, d'')\in D_{v_{j+2}}$, by
arc-consistency.  Therefore, $\pi_1(\Phi_{j+1}) = D_{v_1}$; a similar
argument shows that $\pi_2(\Phi_{j+1}) = D_{v_{j+2}}$.

Suppose, towards a contradiction, that 
$\Phi_{j+1} = D_{v_1}\times
D_{v_{j+2}}$.  For this to be the case, we must have 
$A'_s\cap B_t\neq\emptyset$   for every $s\in\{1,2\}$ and $t\in[\mu]$.
Now, let $D^*_{v_{j+1}}=D_{v_{j+1}}\setminus
                            (A'_2\cap B_2)$ 
                                                     and
consider the relation
\begin{equation*}
    R = \{(d_1, d_3) \mid 
    \mbox{for some $d_2\in D^*_{v_{j+1}} $, 
                            $(d_1, d_2)\in \Phi_j$  and $(d_2, d_3)\in R'_{j+1}$ 
                            }\}.
          \end{equation*}
Since 
$A'_1 \subseteq D^*_{v_{j+1}}$
the 
non-empty
sets $A'_1 \cap B_1$ 
and $A'_1 \cap B_2$ are both subsets of
$D^*_{v_{j+1}}$  so
$A_1\times B'_1\subseteq R$ and
 $A_1\times B'_2\subseteq
R$.
Similarly, $B_1 \subseteq D^*_{v_{j+1}}$, so
$A'_2 \cap B_1 \subseteq D^*_{v_{j+1}}$ so
$A_2\times B'_1\subseteq R$. However, 
$(A_2\times B'_2)\cap R =
\emptyset$, so $R$ is not rectangular.
We will now derive a contradiction by showing that 
$R$ is rectangular.
Note that
$$R = H^M_{D_{v_1},D_{v_2}} \circ \cdots \circ H^M_{D_{v_{j-1}},D_{v_j}}
\circ
H^M_{D_{v_j},D^*_{v_{j+1}}}
\circ
H^M_{D^*_{v_{j+1}},D_{v_{j+2}}}$$
but this relation is rectangular because 
the hypothesis of the lemma guarantees that the sequence
$$D_{v_1},\ldots,D_{v_{j}},D^*_{v_{j+1}},D_{v_{j+2}}$$ 
is not an $\lists$-$M$-derectangularising sequence
and all of the elements of this sequence are in~$\lists$,
and  
$\{D_{v_1},\ldots,D_{v_{j}},D^*_{v_{j+1}},D_{v_{j+2}}\}$ is  
$M$-purifying.   \end{proof}

\section{Polynomial-time algorithms and the dichotomy theorem}
\label{sec:dichotomy}

Bulatov \cite{Bul08} showed that every problem of the form
$\nCSP(\Gamma)$ is either in $\FP$ or $\nP$-complete.   Together with 
Proposition~\ref{prop:purifiediscsp}, his result
immediately shows that a similar dichotomy exists for 
the special case of the
problem \LMPartitions{\mathcal L}{M} 
  in which
$\mathcal L$ is $M$-purifying and is closed under subsets.  
Our algorithmic work in Section~\ref{sec:arc}
can be combined with 
Dyer and Richerby's explicit dichotomy for $\nCSP$ to
obtain an explicit dichotomy for this special case of 
\LMPartitions{\mathcal L}{M}. 
In particular, Lemma~\ref{lem:quickarc}
gives a polynomial-time algorithm for the case in which there is no 
\lists{}-$M$-derectangularising sequence. When there is such a sequence,
$\GammaLM$ is not ``strongly rectangular'' in the sense
of~\cite{DRfull}.  It follows immediately that $\nCSP(\GammaLM)$ is
\numP{}-complete \cite[Lemma~24]{DRfull}  
so  \LMPartitions{\lists}{M} is  also \numP{}-complete by Proposition~\ref{prop:purifiediscsp}.
In fact, the dichotomy for this special case does not require the full generality
of Dyer and Richerby's dichotomy. If
there is an \lists{}-$M$-derectangularising sequence then it follows immediately from
work of Bulatov and Dalmau \cite[Theorem 2 and Corollary 3]{BD} that $\nCSP(\GammaLM)$ is
\numP{}-complete.

In this section we will
move beyond the case in which $\lists$ is $M$-purifying
to provide a full dichotomy for the problem
 \LMPartitions{\lists}{M}. We will use
two data structures:
 \emph{sparse-dense partitions}
and a representation of the set of \emph{splits} of a bipartite graph.
Similar data structures  were used by Hell et al.~\cite{HHN}
in their dichotomy for 
the \nPartitions{M} 
problem 
for matrices of size at most $3$-by-$3$.

\subsection{Data Structures}
\label{sec:DS}
 
We use two types of graph partition. 
The first is a special case
of a sparse-dense partition~\cite{FHKM}
which is also called an $(a,b)$-graph with $a=b=2$.
\begin{definition}
\label{def:bsd}
A bipartite--cobipartite partition 
of a graph~$G$ is
a partition~$(B,C)$ of~$V(G)$
such that $B$ induces a bipartite graph and $C$ induces the complement
of a bipartite graph.
\end{definition}
 
\begin{lemma}\label{lem:sparsedense}\cite[Theorem 3.1; see also the remarks on $(a,b)$-graphs.]{FHKM}
There is a polynomial-time algorithm for finding all 
bipartite--cobipartite partitions of a graph~$G$.
\end{lemma}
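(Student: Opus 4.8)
The plan is to realise every bipartite--cobipartite partition as a special case of a sparse--dense partition, to enumerate the (polynomially many) sparse--dense partitions, and then to filter. First I would observe that if $(B,C)$ is a bipartite--cobipartite partition in the sense of Definition~\ref{def:bsd}, then $G[B]$, being bipartite, contains no triangle, while $G[C]$, being the complement of a bipartite graph, contains no independent set of size three (its complement is triangle-free). Hence every bipartite--cobipartite partition is in particular a partition $(B,C)$ in which $G[B]$ is $K_3$-free and $G[C]$ has independence number at most $2$; these are exactly the sparse--dense partitions of~$G$ associated with the thresholds $a=b=2$.

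The crux of the argument is the structural fact that there are only polynomially many such sparse--dense partitions and that they can all be generated in polynomial time. The mechanism behind this is a Ramsey-type rigidity: because the sparse side has bounded clique number and the dense side bounded independence number, the side of all but a bounded ``core'' of vertices is forced by the adjacency constraints once the core is placed. Concretely, I would show that fixing the assignment of a bounded-size set of vertices determines the side of every remaining vertex (any vertex joined to a large enough dense core must itself move to the sparse side, and dually), so that the number of partitions is bounded by $n^{O(1)}$ and a branching procedure over the choices for the core enumerates them all. I expect this polynomial-enumeration step to be the main obstacle; it is precisely the content of \cite[Theorem 3.1]{FHKM} for $(a,b)$-graphs, and I would either invoke that theorem directly or reprove the $a=b=2$ case by the rigidity argument just sketched.

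Finally I would filter the enumerated candidates. Given a candidate sparse--dense partition $(B,C)$ with $G[B]$ triangle-free and the independence number of $G[C]$ at most $2$, I check in polynomial time whether $G[B]$ is genuinely bipartite (a two-colouring found by breadth-first search) and whether $\overline{G[C]}$ is genuinely bipartite (construct the complement and two-colour it). A candidate is output precisely when both tests succeed. Since every bipartite--cobipartite partition appears among the candidates and passes both tests, while every partition passing both tests is by definition bipartite--cobipartite, this outputs exactly the bipartite--cobipartite partitions; and because there are polynomially many candidates, each tested in polynomial time, the whole procedure runs in polynomial time.
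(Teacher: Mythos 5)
The paper does not prove this lemma at all: it is quoted directly from \cite[Theorem 3.1]{FHKM}, applied to the pair (bipartite, cobipartite), which is exactly the ``$(a,b)$-graphs with $a=b=2$'' case mentioned in the citation. Your main line ultimately rests on the same citation, so the crux of your argument and the paper's coincide; the differences are worth noting, though. First, your detour through the relaxed pair (triangle-free, independence number at most $2$) plus filtering is sound --- since $R(3,3)=6$, a graph in both relaxed classes has at most $5$ vertices, so the cited theorem applies with $c=5$, and your two bipartiteness tests are exactly the right filter --- but it is unnecessary: bipartite graphs and cobipartite graphs are themselves classes closed under induced subgraphs whose intersection is bounded (a graph that is both has an independent set of size at least $n/2$ and independence number at most $2$, so $n\le 4$), so the theorem applies to the target classes directly, with no filtering; indeed this is precisely what the $(a,b)$-graph remark asserts, and your identification of the $a=b=2$ case with the triangle-free/$\alpha\le 2$ relaxation is a misreading of that remark. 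Second, and more importantly, your fallback ``rigidity'' sketch is not the argument behind the cited theorem, and as stated its mechanism is wrong: a vertex joined to a large dense core is \emph{not} thereby forced to the sparse side (adjacency is perfectly compatible with membership in the dense side; conflicts come from triangles with the sparse side and independent triples with the dense side), and it is not true in general that a bounded core pins down every remaining vertex in one step. The actual engine is a pairwise-difference bound: for any two sparse--dense partitions $(B_1,C_1)$ and $(B_2,C_2)$, the subgraph induced by $B_1\cap C_2$ lies in both classes and hence has at most $c$ vertices, so any two partitions differ on at most $2c$ vertices; this bounds the total number of partitions by $O(n^{2c})$, and polynomial-time enumeration then follows because both classes are hereditary (e.g., recursively: every partition of $G$ restricts to one of $G-v$, and the polynomial bound controls the recursion). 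If you intend a self-contained proof rather than the citation, that is the argument you would need to make rigorous.
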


The second decomposition is based on 
certain sub-hypercubes called subcubes.
For any finite set $U\!$, a \defn{subcube} of $\{0,1\}^U$ is a 
subset of $\{0,1\}^U$  that is a
Cartesian
product of the form $\prod_{u\in U} S_u$ where
$S_u\in\{\{0\},\{1\},\{0,1\}\}$ for each $u\in U\!$.  We can also
associate a subcube $\prod_{u\in U} S_u$ with the set of assignments
$\sigma\colon U\to \{0,1\}$ such that $\sigma(u)\in S_u$ for all $u\in
U\!$.  Subcubes can be represented
efficiently by listing the projections~$S_u$.

\begin{definition}
    Let $G=(U,U'\!,E)$ be a bipartite graph, where $U$ and $U'$ are
    disjoint vertex sets, and $E\subseteq U\times U'\!$.  A
    \defn{subcube decomposition} of~$G$ is a list $U_1,\dots,U_k$ of
    subcubes of $\{0,1\}^U$ and a list $U'_1,\dots, U'_k$ of subcubes
    of $\{0,1\}^{U'}$ such that the following hold.
\begin{itemize}
\item The union $(U_1\times U'_1)\cup \dots \cup (U_k\times U'_k)$ is
  the set of assignments $\sigma\colon
  U\cup U'\to\{0,1\}$ such that:
  \begin{align}
   & \label{todayone} \mbox{no edge $(u,u')\in E$ has $\sigma(u)=\sigma(u')=0$ and}\\
   & \label{todaytwo} \mbox{no pair $(u,u')\in (U\times U')\setminus E$ has $\sigma(u)=
    \sigma(u')=1$.}
  \end{align}
\item For distinct $i,j\in[k]$, $U_i\times U'_i$ and $U_j\times U'_j$ 
are disjoint.
\item For each $i\in[k]$, either $|U_i|=1$ or $|U'_i|=1$ (or both).
\end{itemize}
\end{definition}

Note that, although we require $U_i\times U'_i$ and $U_j\times U'_j$
to be disjoint for distinct $i,j\in[k]$, we allow $U_i\cap
U_j\neq\emptyset$ as long as $U'_i$ and~$U'_j$ are disjoint, and
vice-versa.
It is even possible that $U_i=U_j$, and indeed this will happen in our constructions below.

\begin{lemma}\label{lem:splittocubes}
   A subcube decomposition of a bipartite graph $G=(U,U'\!,E)$ can be
   computed in polynomial time, with the subcubes represented by their
   projections.
\end{lemma}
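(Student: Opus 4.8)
The plan is to parametrise each assignment $\sigma\colon U\cup U'\to\{0,1\}$ by the pair $(A,B)$ with $A=\{u\in U:\sigma(u)=1\}$ and $B=\{u'\in U':\sigma(u')=1\}$, and to pin down exactly which pairs are valid, i.e.\ satisfy \eqref{todayone} and \eqref{todaytwo}. Writing $N(u')=\{u\in U:(u,u')\in E\}$ for $u'\in U'$ and $N(u)=\{u'\in U':(u,u')\in E\}$ for $u\in U$, a short calculation shows that $(A,B)$ is valid if and only if $A\subseteq N(u')$ for every $u'\in B$ and $N(u')\subseteq A$ for every $u'\notin B$ (and, symmetrically, $B\subseteq N(u)$ for $u\in A$ and $N(u)\subseteq B$ for $u\notin A$). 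The key consequence is a \emph{one-sided rigidity} property: once $A$ is fixed, each coordinate $u'$ of the $U'$-side may independently be set to $0$ (allowed iff $N(u')\subseteq A$) or to $1$ (allowed iff $A\subseteq N(u')$), so the valid completions of $A$ form a subcube $U'_A$ of $\{0,1\}^{U'}$, and $u'$ is \emph{free} in $U'_A$ exactly when $A=N(u')$. The same holds with $U$ and $U'$ exchanged.

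First I would emit the subcubes in which the $U$-side is rigid. Let $\mathcal{N}'=\{N(u'):u'\in U'\}$, so $|\mathcal{N}'|\le|U'|$; for each $A\in\mathcal{N}'$ I output $\{A\}\times U'_A$, computing $U'_A$ coordinatewise and discarding it when some coordinate is forbidden. By rigidity, if $A\notin\mathcal{N}'$ then $U'_A$ has no free coordinate, so it is a single point or empty; hence these at most $|U'|$ subcubes already capture every valid pair with $A\in\mathcal{N}'$, and they are pairwise disjoint since they have distinct $U$-sides. Next I would cover the valid pairs with $A\notin\mathcal{N}'$, for which the completion $B$ is forced. Setting $\mathcal{N}=\{N(u):u\in U\}$ with $|\mathcal{N}|\le|U|$, I handle those whose forced $B$ lies in $\mathcal{N}$ by a $U'$-rigid family: for each $B\in\mathcal{N}$ the valid first coordinates form a subcube $U_B$, and I emit $(U_B\setminus\mathcal{N}')\times\{B\}$ written as a disjoint union of at most $|U'|\cdot|U|$ subcubes (removing one point from a subcube of $\{0,1\}^U$ splits it into at most $|U|$ subcubes, and $|\mathcal{N}'|\le|U'|$). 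Removing $\mathcal{N}'$ is precisely what makes this family disjoint from the $U$-rigid family, and distinct choices of $B$ are automatically disjoint because they differ on the (singleton) $U'$-side.

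The remaining and hardest case is the set $\mathcal{R}$ of \emph{doubly rigid} pairs: valid $(A,B)$ with $A\notin\mathcal{N}'$ and $B\notin\mathcal{N}$. By one-sided rigidity no larger subcube can contain such a pair together with any other valid pair, so the decomposition is forced to contain exactly the $|\mathcal{R}|$ singletons $\{A\}\times\{B\}$, and the whole lemma hinges on showing $|\mathcal{R}|$ is polynomially bounded. The structural claim I would prove is that the first coordinates of pairs in $\mathcal{R}$ form a chain under inclusion. Given distinct such $A_1,A_2$ and $x\in A_1\setminus A_2$, the hypothesis $B_1\notin\mathcal{N}$ forces $B_1\subsetneq N(x)$, producing some $u'\notin B_1$ with $x\in N(u')$ and hence $N(u')\subsetneq A_1$; since $A_2\notin\mathcal{N}'$, the set $A_2$ is comparable with $N(u')$, so either $N(u')\subseteq A_2$ (giving $x\in A_2$, a contradiction) or $A_2\subseteq N(u')\subsetneq A_1$. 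Thus $A_1$ and $A_2$ are comparable, a chain in $2^U$ has length at most $|U|+1$, and so $|\mathcal{R}|\le|U|+1$; moreover $\mathcal{R}$ can be enumerated in polynomial time by computing this chain directly. Emitting $\{A\}\times\{B\}$ for each $(A,B)\in\mathcal{R}$ completes the construction.

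Finally I would verify that the three families are pairwise disjoint and jointly exhaust the valid assignments (a straightforward case split on whether $A\in\mathcal{N}'$ and, when not, whether the forced $B$ lies in $\mathcal{N}$), that every emitted subcube has a rigid side, and that the total number of subcubes and the time to produce them (including all coordinatewise comparisons of neighbourhoods) are polynomial. I expect the chain argument bounding $|\mathcal{R}|$ to be the main obstacle, since without the polynomial bound on doubly rigid pairs the decomposition could in principle require exponentially many subcubes; everything else is bookkeeping built on the one-sided rigidity property.
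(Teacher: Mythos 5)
Your argument is correct, but it takes a genuinely different route from the paper. The paper's proof is recursive: it handles the edgeless and complete-bipartite graphs as base cases, otherwise picks a vertex $x$ with both a neighbour and a non-neighbour, propagates the two choices $\sigma(x)=0$ and $\sigma(x)=1$ to determine $\sigma$ on sets $S_{x=0}$ and $S_{x=1}$, recurses on the remaining graphs, and bounds the number of recursive calls by observing that each split partitions $U'$ into disjoint pieces. You instead give a direct, global description of the solution set: the characterisation of valid pairs $(A,B)$ via neighbourhood containments, the resulting ``one-sided rigidity'' (fixing one side forces the other side into a subcube whose free coordinates are exactly the $u'$ with $N(u')=A$), and an explicit partition into three families, with the chain lemma bounding the doubly rigid pairs by $|U|+1$. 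Your route yields an explicit polynomial bound on the number of subcubes and a structural picture of the solution set that the recursive proof leaves implicit; the paper's route avoids the case analysis and the chain lemma entirely at the cost of a termination argument. The one place you are too brisk is the claim that $\mathcal{R}$ ``can be enumerated in polynomial time by computing this chain directly'': a priori the candidate sets $A$ range over an exponential space, so you should say how the chain is found. This is fixable with your own machinery: for a doubly rigid pair every $N(u')$ is \emph{strictly} comparable to $A$, so $B=\{u'\mid |N(u')|>|A|\}$ is determined by the integer $k=|A|$; hence one can try each $k\in\{0,\dots,|U|\}$, set $B_k=\{u'\mid |N(u')|>k\}$ and $A_k=\{u\mid B_k\subsetneq N(u)\}$, and keep the pair if it is valid, has $|A_k|=k$, and avoids $\mathcal{N}'$ and $\mathcal{N}$. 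With that sentence added, the proof is complete.
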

\begin{proof}
For a vertex $x$ in a bipartite graph, let $\Gamma(x)$ be its set of
neighbours and let $\Gammabar(x)$ be its set of non-neighbours on the
other side of the graph.  Thus, for $x\in U\!$, $\Gammabar(x) =
U'\setminus \Gamma(x)$ and, for $x\in U'\!$, $\Gammabar(x) =
U\setminus \Gamma(x)$.

Observe that we can write $\{0,1\}^n\setminus
\{0\}^n$ as the disjoint union of $n$ subcubes $\{0\}^{k-1}\times
\{1\}^1\times \{0,1\}^{n-k}$ with $1\leq k\leq n$, and similarly for
any other cube minus a single point.

We first deal with two base cases.  If $G$ has no edges,
then the 
set of assignments $\sigma\colon  U\cup U'\to\{0,1\}$
satisfying (\ref{todayone}) and (\ref{todaytwo}) is the disjoint union of
\begin{equation*}
 \{0\}^U\times \{0\}^{U'}, \quad 
 (\{0,1\}^U\setminus\{0\}^U)\times \{0\}^{U'}, \quad \text{and} \quad
  \{0\}^U\times(\{0,1\}^{U'}\setminus\{0\}^{U'}).
  \end{equation*}
  The second and third terms can be decomposed into subcubes as described above to produce the output.
Similarly, if $G$ is is a complete bipartite graph, then the 
 set of assignments  
satisfying (\ref{todayone}) and (\ref{todaytwo}) 
is the disjoint union of
  \begin{equation*}
  \{1\}^U\times \{1\}^{U'}, \quad
  (\{0,1\}^U\setminus\{1\}^U)\times \{1\}^{U'}, \quad \text{and} \quad
  \{1\}^U\times(\{0,1\}^{U'}\setminus\{1\}^{U'}).
   \end{equation*}

If neither of these cases occurs then there is a vertex $x$ such that neither $\Gamma(x)$ nor
$\Gammabar(x)$ is empty.  If possible, choose $x\in U$; otherwise,
choose $x\in U'\!$.  To simplify the description of the algorithm, we
assume that $x\in U$; the other case is symmetric.  We consider
separately the assignments where $\sigma(x)=0$ and those where
$\sigma(x)=1$.  Note that, for any assignment, if $\sigma(y)=0$ for
some vertex~$y$, then $\sigma(z)=1$ for all $z\in\Gamma(y)$ and, if
$\sigma(y)=1$, then $\sigma(z)=0$ for all $z\in\Gammabar(y)$.
Applying this iteratively, setting $\sigma(x)=c$ for $c\in\{0,1\}$
also determines the value of $\sigma$ on some set $S_{x=c}\subseteq
U\cup U'$ of vertices.

Thus, we can compute a subcube decomposition for $G$ recursively.
First, compute $S_{x=0}$ and $S_{x=1}$.  Then, recursively compute
subcube decompositions of $G-S_{x=0}$ (the graph formed from~$G$ by
deleting the vertices in $S_{x=0}$) and $G-S_{x=1}$.  Translate these
subcube decompositions into a subcube decomposition of~$G$ by
extending each subcube $(U_i\times U'_i)$ of $G-S_{x=c}$ to a subcube
$(V_i\times V'_i)$ of $G$ whose restriction to $G-S_{x=c}$ is
$(U_i\times U'_i)$ and whose restriction to $S_{x=c}$ is an assignment
$\sigma$~with $\sigma(x)=c$ (in fact, all assignments that set $x$
to~$c$ agree on the set $S_{x=c}$, by construction).

It remains to show that the algorithm runs in polynomial time.  The
base cases are clearly computable in polynomial time, as are the
individual steps in the recursive cases, so we only need to show that the
number of recursive calls is polynomially bounded.  At the recursive
step, we only choose $x\in U'$ when $E(G) = U''\times U'$ for some
proper subset $\emptyset\subset U''\subset U$ and, in this case, the
two recursive calls are to base cases.  Since each recursive call when
$x\in U$ splits $U'$ into disjoint subsets, there can be at most
$|U'|-1$ such recursive calls, so the total number of recursive calls
is linear in $|V(G)|$.
\end{proof}

\subsection{Reduction to a problem with $M$-purifying lists}

Our algorithm for counting list $M$-partitions
uses the data structures from  Section~\ref{sec:DS}   to reduce problems where \lists{} is not $M$-purifying to
problems where it is (which we already know how to solve from Sections~\ref{sec:purifiedcsp} and~\ref{sec:arc}).
The algorithm is defined recursively
on the set $\lists$ of allowed
lists.  The algorithm for parameters~$\lists{}$ and $M$
calls the algorithm for~$\lists_i$ and $M$ 
where $\lists_i$ is a subset of~$\lists$.
The base case arises when $\lists_i$ is  
$M$-purifying.

We will use the following
computational problem to reduce \LMPartitions{\lists}{M} to a
collection of problems \LMPartitions{\lists'}{M} that are, in a sense,
disjoint.
\prob{ \LMPurify{\lists}{M}.}
{A graph $G$ and a
  function $L\colon V(G)\to\lists$.}
{Functions  $L_1,\dots,L_t\colon V(G)\to\lists$ such that
\begin{itemize}
\item for each 
$i\in[t]$, 
the set 
$\{L_i(v) \mid  v\in V(G)\}$ is $M$-purifying, 
\item for each $i\in [t]$ and $v \in V(G)$, $L_i(v) \subseteq L(v)$,
    and 
\item each
  $M$-partition of $G$ that respects $L$ respects exactly one of
  $L_1,\dots,L_t$.
  \end{itemize}
  }

We will give an algorithm for solving
the problem \LMPurify{\lists}{M} in polynomial time
when there is no \lists{}-$M$-derectangularising sequence of length
exactly~2. The following computational problem 
will be central to the inductive step. 

\prob{ \LMPurifyStep{\lists}{M}.}
{A graph $G$ and a function $L\colon V(G)\to \lists$.}
{Functions $L_1,\dots, L_k\colon
  V(G)\to \lists$ such that 
 \begin{itemize} 
\item for each $i\in [k]$ and $v \in V(G)$, $L_i(v) \subseteq L(v)$,
\item   every $M$-partition of~$G$ that respects $L$
  respects exactly one of $L_1,\dots,L_k$, and 
  \item for each $i\in[k]$,
there is a $W\in\lists{}$  
  which is inclusion-maximal in~\lists{}
but
does not occur in the image of $L_i$.
  \end{itemize}}

Note that we can trivially produce a solution to the problem
\LMPurifyStep{\lists}{M} by letting $L_1, \dots, L_k$ be an
enumeration of all possible functions 
such that all lists $L_i(v)$ have size~$1$ and satisfy  
$L_i(v) \subseteq L(v)$.
Such a function $L_i$ corresponds to an
assignment of vertices to parts so there is either exactly one
$L_i$-respecting $M$-partition or none, which means that every
$L$-respecting $M$-partition is $L_i$-respecting for exactly one~$i$.
However, this solution is exponentially large in $|V(G)|$ and we are
interested in solutions that can be produced in polynomial time.
Also, if $L(v)=\emptyset$ for some vertex~$v$, the algorithm is
entitled to output an empty list, since no $M$-partition
respects~$L$.

The following definition extends rectangularity to
$\{0,1,*\}$-matrices and is used in our proof.

\begin{definition}
A matrix $M\in\{0,1,*\}^{X\times Y}$ is
\defn{$*$-rectangular} if 
the relation $H^M_{X,Y}$ is rectangular.
\end{definition}
Thus, $M$ is $*$-rectangular if
and only if 
$M_{x,y}=M_{x'\!,y}=M_{x,y'}=*$ implies that $M_{x'\!,y'}=*$
for all 
$x,x'\in X'$ and all $y,y'\in Y''\!$.
 
We will show in Lemma~\ref{lem:claim}
that the function~\LMPurifyStep{\lists}{M} from
Algorithm~\ref{alg:purifystep} is a polynomial-time algorithm for 
the problem \LMPurifyStep{\lists}{M}
whenever \lists{} is not $M$-purifying and
 there is no length-2 \lists{}-$M$-derectangularising sequence. 
Note that a length-2
\lists{}-$M$-derectangularising sequence is a pair $X,Y\in\lists$ such
that $M|_{X\times Y}$, 
$M|_{X\times X}$ and 
$M|_{Y\times Y}$ 
are pure and $M|_{X\times Y}$
is 
not $*$-rectangular.   If $\lists\neq
\powerset{D}$, it is possible that a matrix that is not
$*$-rectangular has no length-2 \lists{}-$M$-derectangularising
sequence.  For example, let $D=\{1,2,3\}$ and $\lists =
\powerset{\{1,2\}}$ and let $M_{3,3}=0$ and $M_{i,j}=*$ for every other
pair $(i,j)\in D^2\!$.  $M$~is not $*$-rectangular but this fact is not
witnessed by any submatrix $M|_{X\times Y}$ for $X,Y\in\lists$.

\begin{algorithm}[t]
\caption{ 
A polynomial-time algorithm for the problem \LMPurifyStep{\lists}{M}
when $\lists{}\subseteq\powerset{D}$  is subset-closed,
   \lists{} is not $M$-purifying and
 there is no length-2 \lists{}-$M$-derectangularising sequence.
 The input is a pair $(G,L)$ with $V(G)=\{v_1,\ldots,v_n\}$. }
\label{alg:purifystep}
\begin{algorithmic}
\Function {\LMPurifyStep{\lists}{M}} {$G$,$L$}
\If{there is a $v_i\in V(G)$ with $L(v_i)=\emptyset$}
    \Return {the empty sequence} 
     \ElsIf{ there are $X,Y\in \lists$,  
   $a,b\in X$,  and $d\in Y$ 
 such that $M_{a,d}=0$ and $M_{b,d}=1$}
\State{Run Algorithm~\ref{alg:Case1} \quad /* Case 1 */}
\ElsIf{there is an $X\in \lists$ such that  $M|_{X\times X}$ is not pure}
\State{Run Algorithm~\ref{alg:Case2} \quad /* Case 2 */}
\Else 
\State{Run Algorithm~\ref{alg:Case3} \quad /* Case 3 */}
    \EndIf
    \EndFunction
\end{algorithmic}
\end{algorithm}

\begin{algorithm}[t]
\caption{Case 1 in Algorithm~\ref{alg:purifystep}.}
\label{alg:Case1}
\begin{algorithmic}
      \State{Choose 
      $X,Y\in \lists$,  
   $a,b\in X$,  and $d\in Y$ \\
 such that $M_{a,d}=0$, $M_{b,d}=1$ and $X$ and $Y$ are  inclusion-maximal in $\lists$}
      \For{$i\in[n]$}
        \State{$L_i(v_i)\gets
        L(v_i) \cap 
        \{d\}$}
        \For{$j<i$}
          \If{$(v_i,v_j)\in E(G)$}
             \State{$L_i(v_j)\gets\{d'\in L(v_j)\mid d'\neq d \text{ and } M_{d,d'}\neq 0\}$}
          \Else
             \State{$L_i(v_j)\gets\{d'\in L(v_j)\mid d'\neq d\text{ and } M_{d,d'}\neq 1\}$}             
                      \EndIf
        \EndFor
        \For{$j>i$}        
         \If{$(v_i,v_j)\in E(G)$}
              \State{$L_i(v_j)\gets\{ d'\in L(v_j)\mid M_{d,d'}\neq 0\}$}
          \Else
             \State{$L_i(v_j)\gets\{ d'\in L(v_j)\mid M_{d,d'}\neq 1\}$}
          \EndIf
        \EndFor
        \State{$L_{n+1}(v_i)\gets L(v_i)\setminus\{d\}$}
      \EndFor
      \Return {$L_1,\ldots,L_{n+1}$
(of course, if we have $L_i(v)=\emptyset$ for any $i$ and $v$ then $L_i$ can be omitted from the output)
      }
\end{algorithmic}
\end{algorithm}
 
\begin{algorithm}[t]
\caption{Case 2 in Algorithm~\ref{alg:purifystep}.}
\label{alg:Case2}
\begin{algorithmic}
\State{Choose $X\in \lists$ such that  $M|_{X\times X}$ is not pure and $X$ is inclusion-maximal in $\lists$}
      \State{Let $X_0\subseteq X$ be the set of rows of $M|_{X\times X}$ that contain a~$0$ }
      \State{$X_1 \gets X \setminus X_0$}
       \State{$V_X \gets \{v_j\in V(G)\mid L(v_j)=X\}$}
       \If{$V_X=\emptyset$}
           \Return {$L$}
        \Else
         \State{Use the algorithm promised in Lemma~\ref{lem:sparsedense} to compute the list $(B_1,C_1),\dots,(B_k,C_k)$ of all\\ bipartite--cobipartite partitions of $G[V_X]$}
         \For{$i\in[k],j\in [n]$} 
                \If{$v_j\notin V_X$}
                    \State{$L_i(v_j) \gets L(v_j)$}
                 \ElsIf{$v_j\in B_i$}
                        \State{$L_i(v_j)\gets X_0$}
                     \Else \quad /* $v_j\in C_i$*/
                       \State{$L_i(v_j)\gets X_1$}
                     \EndIf
            \EndFor 
  \Return {$L_1,\ldots,L_{k}$}
\EndIf
\end{algorithmic}
\end{algorithm}

\begin{algorithm}[t]
\caption{Case 3 in Algorithm~\ref{alg:purifystep}.}
\label{alg:Case3}
\begin{algorithmic}
     \State{Choose inclusion-maximal $X$ and $Y$ in \lists\ so that $M|_{X\times Y}$ is not pure}
     \State{Let $X_0\subseteq X$ be the set of rows of $M|_{X\times Y}$ that contain a~$0$ }
        \State{$X_1 \gets X \setminus X_0$}
             \State{Let $Y_0 \subseteq Y$ be the set of columns of $M|_{X\times Y}$ that contain a~$0$}
              \State{$Y_1 \gets Y \setminus Y_0$}  
     \State{$V_X \gets \{v_j\in V(G)\mid L(v_j)=X\}$} 
     \State{$V_Y \gets \{v_j\in V(G)\mid L(v_j)=Y\}$}  
      \If{$V_X=\emptyset$ or $V_Y=\emptyset$} \Return {$L$}     
      \Else
         \State{Let $E$ be the set of edges of $G$ between $V_X$ and $V_Y$}
         \State{Use the algorithm promised in Lemma~\ref{lem:splittocubes} to produce a subcube decomposition $(U_1,U'_1),\ldots,(U_k,U'_k)$ of $(V_X,V_Y,E)$}
           \For{$i\in[k],j\in [n]$} 
                 
                   \If{$v_j \in V_X$ and the projection of $U_i$ 
                   on
                   $v_j$ is $\{0\}$}
                       \State $L_i(v_j) \gets X_0$
                    \ElsIf{$v_j \in V_X$ and the projection of $U_i$ 
                    on $v_j$ is $\{1\}$}
                        \State $L_i(v_j) \gets X_1$ 
                                     
         \ElsIf{$v_j \in V_Y$ and the projection of $U'_i$ 
         on $v_j$ is $\{0\}$}
                       \State $L_i(v_j) \gets Y_0$
                    \ElsIf{$v_j \in V_Y$ and the projection of $U'_i$
                    on $v_j$ is $\{1\}$}
                        \State $L_i(v_j) \gets Y_1$

                   \Else
                         \State{$L_i(v_j) \gets L(v_j)$}
                   
                \EndIf      
           \EndFor
    \EndIf
          \Return{$L_1,\ldots,L_k$}
\end{algorithmic}
\end{algorithm}

 \begin{lemma}\label{lem:claim} 
 Let $M$ be a symmetric matrix in 
  $\{0,1,*\}^{D\times D}$ and
  let $\lists{}\subseteq\powerset{D}$ be subset-closed.
  If \lists{} is not $M$-purifying and
 there is no length-2 \lists{}-$M$-derectangularising sequence, then 
  Algorithm~\ref{alg:purifystep}  is a polynomial-time algorithm for
  the problem
\LMPurifyStep{\lists}{M}.
 \end{lemma}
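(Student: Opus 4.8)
The plan is to verify, for each of the three cases of Algorithm~\ref{alg:purifystep} together with the trivial empty-list base case, that the algorithm is well defined, that its output $L_1,\dots,L_k$ meets the three requirements of \LMPurifyStep{\lists}{M}, and that it runs in polynomial time. Well-definedness amounts to exhaustiveness: the empty-list test, Case~1 and Case~2 are explicit, and if none of them applies then, since \lists{} is not $M$-purifying, some $M|_{X\times Y}$ with $X,Y\in\lists$ is impure, so the maximal choice in Case~3 can be made. Impurity of $M|_{X\times Y}$ (and of $M|_{X\times X}$, and the Case~1 witnessing pattern) is preserved on passing to supersets in \lists{}, so the inclusion-maximal witnesses demanded in all three cases always exist. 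Throughout, each output value $L_i(v)$ is a subset of $L(v)\in\lists$ and hence lies in \lists{} by subset-closure, and polynomial running time is immediate once the number of output functions is known to be polynomial. The no-length-2 hypothesis is not used in Case~1; it is the engine driving Cases~2 and~3.

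For Case~1 I would argue purely combinatorially. The functions $L_1,\dots,L_n$ are indexed by ``the first vertex in the fixed order assigned the value $d$'', and $L_{n+1}$ handles the case that no vertex is assigned $d$. Given an $M$-partition $\sigma$ respecting $L$, if $d\notin\sigma(V(G))$ then $\sigma$ respects $L_{n+1}$ and no earlier function; otherwise, with $i$ least such that $\sigma(v_i)=d$, a direct check using the $M$-partition constraints at $v_i$ (an incident edge forbids a $0$, a non-edge forbids a $1$) shows $\sigma$ respects $L_i$ and, since every earlier list excludes $d$, respects no other function. The maximality requirement holds because $a,b\in X$ with $M_{a,d}=0$ and $M_{b,d}=1$ force $a$ or $b$ out of every list $L_i(v_j)$ (and $|L_i(v_i)|\le1<|X|$), so the maximal set $X$ never occurs in the image of any $L_i$ with $i\le n$, while $Y\ni d$ never occurs in the image of $L_{n+1}$.

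Case~2 uses the sparse-dense partitions of Lemma~\ref{lem:sparsedense}, and its key structural step is to show that for every $M$-partition $\sigma$ respecting $L$ the pair $(B_\sigma,C_\sigma)=(\{v\in V_X:\sigma(v)\in X_0\},\{v\in V_X:\sigma(v)\in X_1\})$ is a bipartite--cobipartite partition of $G[V_X]$; the algorithm is then correct since $\sigma$ respects exactly the $L_i$ corresponding to $(B_\sigma,C_\sigma)$. First I would record the block structure forced by ``not Case~1'': every row of $M|_{X\times X}$ avoids $0$ or avoids $1$, so $M|_{X_0\times X_0}$ is a $\{0,*\}$-matrix, $M|_{X_1\times X_1}$ is a $\{1,*\}$-matrix, and $M|_{X_0\times X_1}$ is all~$*$. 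Since $X_0\in\lists$ and $M|_{X_0\times X_0}$ is pure, the hypothesis applied to the pair $X_0,X_0$ makes $H^M_{X_0,X_0}$ rectangular; I would then show it is loopless, since a diagonal $*$ at some $a\in X_0$, together with the $0$ that row~$a$ must contain within $X_0$ and any element of the non-empty set $X_1$, yields an explicit length-2 derectangularising pair. A symmetric, loopless, rectangular relation is bipartite (a shortest odd cycle contradicts rectangularity), so any graph with a homomorphism into the $*$-graph on $X_0$ is bipartite; as $\sigma|_{B_\sigma}$ is such a homomorphism, $G[B_\sigma]$ is bipartite, and the dual argument exchanging $0$ and~$1$ shows $G[C_\sigma]$ is cobipartite. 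The maximality requirement holds because $X$ is inclusion-maximal while $X_0,X_1\subsetneq X$ and no vertex outside $V_X$ receives the list $X$; the $V_X=\emptyset$ branch returning $L$ is immediate.

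Case~3 uses the subcube decomposition of Lemma~\ref{lem:splittocubes} on the bipartite graph on $(V_X,V_Y)$ formed by the $G$-edges between them, with bit $0$ encoding membership in $X_0$ (resp.\ $Y_0$) and bit $1$ encoding membership in $X_1$ (resp.\ $Y_1$). The crucial — and I expect hardest — structural fact is that the two pure off-diagonal blocks collapse completely: $M|_{X_0\times Y_0}$ is identically $0$ and $M|_{X_1\times Y_1}$ is identically $1$, while $M|_{X_0\times Y_1}$ and $M|_{X_1\times Y_0}$ are all~$*$. To prove the first, I would suppose a $*$ in $M|_{X_0\times Y_0}$ at $(a,c)$; using a $0$ in row~$a$ at a column $c'\in Y_0$ and any $b\in X_1$ (whose row over $Y_0$ is all~$*$), the pair $\{a,b\},\{c,c'\}$ is a length-2 derectangularising sequence, the purity of the diagonal blocks $M|_{\{a,b\}\times\{a,b\}}$ and $M|_{\{c,c'\}\times\{c,c'\}}$ being exactly what ``not Case~2'' supplies. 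With this block structure, an edge between $u\in V_X$ and $v\in V_Y$ is forbidden by $\sigma$ precisely when both bits are $0$, and a non-edge precisely when both bits are $1$; these are the two defining conditions of the subcube decomposition, so the bit-vector of any $M$-partition $\sigma$ respecting $L$ lies in exactly one subcube $U_i\times U'_i$, and $\sigma$ respects exactly the corresponding $L_i$. The maximality requirement follows from the property that each subcube pins one entire side ($|U_i|=1$ or $|U'_i|=1$), eliminating $X$ or $Y$ from the image of $L_i$, and the branches $V_X=\emptyset$ or $V_Y=\emptyset$ are immediate. Polynomial time in all cases follows from Lemmas~\ref{lem:sparsedense} and~\ref{lem:splittocubes}, which bound the number of output functions polynomially.
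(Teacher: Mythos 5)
Your plan is the paper's: the same four-way case split, the same data structures (Lemma~\ref{lem:sparsedense} in Case~2, Lemma~\ref{lem:splittocubes} in Case~3), and the same target structural facts about the blocks of $M|_{X\times X}$ and $M|_{X\times Y}$. Case~1, the exhaustiveness/maximality bookkeeping, and the polynomial-time accounting are all correct, and some of your derivations are more direct than the paper's (a single non-rectangular pure $2\times 2$ witness kills a diagonal $*$ in $X_0$ or a $*$ in $M|_{X_0\times Y_0}$, where the paper propagates via $*$-rectangularity; ``a loopless symmetric rectangular relation is bipartite'' replaces the paper's explicit induction on odd closed walks).

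The gap is in both places where you invoke ``the dual argument exchanging $0$ and $1$''. The definitions of $X_0$ and $X_1$ are not dual: $X_0$ is the set of rows containing a $0$, but $X_1$ is merely its complement, so a row indexed by $X_1$ need not contain a $1$ --- it may be all $*$. In Case~2, your looplessness argument for $H^M_{X_1,X_1}$ needs, for $b\in X_1$ with $M_{b,b}=*$, an off-diagonal $1$ in row $b$ within $X_1$; if row $b$ is all $*$ over $X$ there is no such entry and your $2\times2$ witness cannot be built. The hypothesis does exclude this situation, but only via a detour: for $d\in X_0$ the pure submatrix $M|_{\{d,b\}\times\{d,b\}}$ has three $*$s, so $*$-rectangularity forces $M_{d,d}=*$, and then your own $X_0$ argument (applied with that $d$) gives the contradiction. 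This is exactly the paper's propagation step, and you cannot skip it: the assertion you are proving ($M_{b,b}\ne *$, hence $M_{b,b}=1$) is what would guarantee the $1$ you want in row $b$, so assuming that $1$ is circular. The same issue recurs in Case~3: to show $M|_{X_1\times Y_1}$ is identically $1$ you take a $*$ at $(a,c)\in X_1\times Y_1$ and ``a $1$ in row $a$ at a column $c'\in Y_1$'', but row $a\in X_1$ is only guaranteed to avoid $0$s over $Y$, not to contain a $1$. The paper instead propagates the single $*$ at $(a,c)$ across all of $Y_1$ using $*$-rectangularity of the pure matrix $M|_{X\times Y_1}$ together with the all-$*$ rows of $X_0$ over $Y_1$, then swaps the roles of $X$ and $Y$, and finally contradicts the existence of \emph{some} $1$ in $M|_{X\times Y}$ rather than one in a prescribed row. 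Both halves of your argument are repairable this way, but as written they rest on a premise that does not follow from the definitions.
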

 
\begin{proof}
We consider an
instance $(G,L)$ of the problem \LMPurifyStep{\lists}{M}
with $V(G)=\{v_1,\ldots,v_n\}$.
If there is a $v_i\in V(G)$ with $L(v_i)=\emptyset$ then  
no $M$-partition of~$G$ respects~$L$, so the output is correct.
Otherwise, we consider the three cases that can occur in the execution of the algorithm.

\paragraph{Case 1.} 

In this case column~$d$ of $M|_{X\times Y}$ contains both a zero and a one.
Equivalently, row~$d$ of $M|_{Y\times X}$ does.
Algorithm~\ref{alg:Case1} groups  the set of
$M$-partitions of~$G$ 
that respect $L$,
based on the first vertex that is placed in part~$d$.
For $i\in[n]$, $L_i$ requires that
$v_i$~is placed in part~$d$ and $v_1, \dots, v_{i-1}$ are not in
part~$d$; $L_{n+1}$ requires that part~$d$ is empty.  Thus,
no $M$-partition can  
respect
 more than one of
the~$L_i$.  Now consider an $L$-respecting $M$-partition $\sigma\colon
V(G)\to D$ and suppose that $i$~is minimal such that $\sigma(v_i)=d$.
We claim that $\sigma$~respects~$L_i$.  We have $\sigma(v_i)=d$, as
required.  For $j\neq i$, we must have $\sigma(v_j)\in L(v_j)$ since
$\sigma$~respects~$L$ and we must have $M_{d,\sigma(v_j)}\neq 1$ if
$(v_i, v_j)\notin E(G)$ and $M_{d,\sigma(v_j)}\neq 0$ if $(v_i,v_j)\in
E(G)$, since $\sigma$~is an $M$-partition.  In addition, by
construction, $\sigma(v_j)\neq d$ if $j<i$.  Therefore,
$\sigma$~respects~$L_i$.  A similar argument shows that $\sigma$
respects $L_{n+1}$ if $\sigma(v)\neq d$ for all $v\in V(G)$.  Hence,
any $M$-partition that respects~$L$ respects exactly one of the~$L_i$.

Finally, we show that, for each $i\in[n+1]$, there is a 
set~$W$  
which is
inclusion-maximal in~\lists{} and is not in the image of $L_i$.  For
$i\in [n]$, we cannot have both $a$ and~$b$ in $L_i(v_j)$ for any~$v_j$,
so~$X$ is not in the image of~$L_i$.  
$Y$~contains~$d$, so $Y$~is not
in the image of~$L_{n+1}$.

\paragraph{Case 2.} 
In this case, every row of 
   $M|_{X_0\times X}$ contains a~0,
while every 
row of $M|_{X_1\times X}$ 
fails to contain a zero.
Since $M|_{X\times X}$ is not
pure,  but no row of $M|_{X\times X}$ contains both 
a zero and a one
 (since we are not in Case~1),
$X_0$ and $X_1$ are non-empty.  Note that $M|_{X_0\times X_0}$ and
$M|_{X_1\times X_1}$ are both pure, while 
every entry of $M|_{X_0\times X_1}$  is a~$*$. 
 
If $V_X=\emptyset$ then $X$~is an inclusion-maximal
member of \lists{} that is not in the image of~$L$,
so the output of Algorithm~\ref{alg:Case2} is correct.
Otherwise, 
$(B_1,C_1),\dots,(B_k,C_k)$ is the list
containing all partitions $(B,C)$ of $V_X$
such that $B$ induces a bipartite graph in~$G$
and $C$ induces the complement of a bipartite graph.
The algorithm returns $L_1,\ldots, L_k$.
$X$ is not in the image of any $L_i$ so, 
to show that $\{L_1, \dots, L_k\}$ is a correct output for
the problem
\LMPurifyStep{\lists}{M}, we just need to show that every $M$-partition of $G$
that respects $L$ respects exactly one of $L_1,\dots,L_k$.
For $i\neq i'$, $(B_i,C_i)\neq (B_{i'},C_{i'})$ so there is at least one
vertex~$v_j$ such that $L_i(v_j)=X_0$ and $L_{i'}(v_j)=X_1$ or vice-versa.
Since $X_0$ and~$X_1$ are disjoint, no $M$-partition can
simultaneously respect $L_i$ and $L_{i'}$.  It remains to show that every
$M$-partition respects at least one of $L_1, \dots, L_k$.
To do this, we deduce two structural properties 
of $M|_{X\times X}$.

First, we show that $M|_{X\times X}$ has no $*$ on its diagonal.
Suppose towards a contradiction that $M_{d,d}=*$ for some $d\in X$.
If $d\in X_0$, then, for each $d'\in X_1$, $M_{d,d'}=M_{d'\!,d}=*$
because, as noted above, every entry of $M|_{X_0\times X_1}$ is a~$*$.
Therefore, the $2\times 2$ matrix $M'=M|_{\{d,d'\}\times \{d,d'\}}$
contains at least three~$*$s so it is pure.  $\{d,d'\}
\subseteq
X\in
\lists$ so, by the hypothesis of the lemma, the length-2 sequence
$\{d,d'\},\{d,d'\}$ is not \lists{}-$M$-derectangularising, so $M'$
must be $*$-rectangular, so $M_{d'\!,d'}=*$ for all $d'\in X_1$.
Similarly, if $M_{d'\!,d'}=*$ for some $d'\in X_1$, then $M_{d,d}=*$
for all $d\in X_0$.  Therefore, if $M|_{X\times X}$ has a~$*$ on its
diagonal, every entry on the diagonal is~$*$.  But $M$ contains a~0,
say $M_{i,j}=0$ with $i,j\in X_0$.  For any $k\in X_1$,
\begin{equation*}
    M|_{\{i,j\}\times \{j,k\}} = \begin{pmatrix} 0 & * \\ * & * \end{pmatrix},
\end{equation*}
so the length-2 sequence $\{i,j\}, \{j,k\}$ is
\lists{}-$M$-derectangularising, contradicting the hypothesis of the
lemma (note that $\{i,j\},\{j,k\}\subseteq X\in\lists$).

Second, we show that  there is no sequence
$d_1,\dots,d_\ell\in X_0$ of odd length such that
\begin{equation*}
    M_{d_1,d_2}=M_{d_2,d_3}=\dots=M_{d_{\ell-1},d_\ell}=M_{d_\ell,d_1}=*\,.
\end{equation*}
Suppose for a contradiction that such a sequence exists. 
Note that $M|_{X_0\times X_0}$ is
$*$-rectangular 
since $X_0,X_0$ is not an \lists{}-$M$-derectangularising sequence and 
$M|_{X_0\times X_0}$ is pure since
Case~1 does not apply.
We will show by induction that for every  non-negative integer $\kappa \leq (\ell-3)/2$,
$M_{d_1,d_{\ell-2\kappa-2}}=*$.
This gives a contradiction by taking $\kappa=(\ell-3)/2$ since 
$M_{d_1,d_1}=*$ and we have already shown that $M|_{X_0\times X_0}$
has no $*$ on its diagonal. 
For every~$\kappa$, the argument follows by considering the
matrix $M_\kappa = M|_{\{d_1,d_{\ell-2\kappa-1}\} \times \{d_{\ell-2\kappa-2},d_{\ell-2\kappa}\}}$.
The definition of the sequence $d_1,\ldots,d_\ell$ together with the symmetry of~$M$
guarantees that both entries in row $d_{\ell-2\kappa-1}$ of~$M_\kappa$ are equal to $*$.
It is also true that $M_{d_1,d_{\ell-2\kappa}}=*$: If $\kappa=0$ then this follows
from the definition of the sequence; otherwise it follows by induction.
The fact that $M_{d_1,d_{\ell-2\kappa-2}}=*$ then follows by $*$-rectangularity.
 
This second structural property implies that, for any 
$M|_{X\times X}$-partition of $G[V_X]$, the
graph induced by vertices assigned to $X_0$ has no odd cycles,
and is therefore bipartite. Similarly, the
vertices assigned to $X_1$ induce the complement of a bipartite graph.
Therefore, any $M$-partition of~$G$ that respects~$L$ must respect  at least
one of the $L_1, \dots, L_k$,
so it respects exactly one of them, as required.

\paragraph{Case 3.} 
Since Cases 1 and~2 do not apply and \lists{} is not $M$-purifying, 
there are  
distinct
$X,Y\in
\lists$ such that 
$X$ and $Y$ are inclusion-maximal in \lists\ and
 $M|_{X\times Y}$ is not pure. As in the
previous case, the sets $X_0$, $X_1$, $Y_0$ and~$Y_1$ are all non-empty.
 
If either $V_X$ or $V_Y$ is empty then 
either $X$ or~$Y$ is
an inclusion-maximal set in~\lists{} that is not in the image of~$L$
so the output of Algorithm~\ref{alg:Case3} is correct. Otherwise,
 $(U_1,U'_1),\dots,(U_k,U'_k)$
 is a subcube decomposition of  the bipartite subgraph $(V_X,V_Y,E)$. 
  The $U_i$s are subcubes of~$\{0,1\}^{V_X}$ and the 
$U'_i$s are subcubes of~$\{0,1\}^{V_Y}$.
The algorithm returns $L_1,\ldots,L_k$.

Note that if 
$|U'_i|=1$ then 
$Y$ is not in the image of~$L_i$.
Similarly, if $|U'_i|>1$ but
$|U_i|=1$ then
$X$ is not in the image of~$L_i$.
The definition of  subcube decompositions guarantees that, for every~$i$, at least one of these is the case.
To show this definition of $L_1,\ldots,L_k$ is a correct
 output for the problem \LMPurifyStep{\lists}{M}, we must show that any
$M$-partition of $G$ that respects $L$ also respects exactly one $L_i$.
Since   the sets in 
$\{U_i \times U'_i \mid  i\in[k]\}$ are disjoint subsets of $\{0,1\}^{V_X\cup V_Y}$, 
any $M$-partition of $G$ that respects $L$ respects at most one~$L_i$
so it remains to show that every $M$-partition of~$G$ respects at
least one $L_i$.
To do this, we deduce two structural properties of $M|_{X\times Y}$.

First, we show that every entry of $M|_{X_0\times Y_0}$ is $0$.
The definition of~$X_0$ guarantees that every row of
$M|_{X_0\times Y_0}$ contains a~$0$.
Since Case~1 does not apply, and $M$ is symmetric, 
every entry of $M|_{X_0\times Y_0}$ is either $0$ or $*$.
Suppose for a contradiction that $M_{i,j}=*$ for some $(i,j)\in
X_0\times Y_0$.  Pick $i'\in X_1$. For any $j'\in Y_0\setminus\{j\}$
we have $M_{i,j}=M_{i'\!,j}=M_{i'\!,j'}=*$, so by $*$-rectangularity of
$M|_{X\times Y_0}$ we have $M_{i,j'}=*$.  
Thus,  every entry of  $M|_{\{i\}\times Y_0}$  is $*$,
so there is a $*$ in every $Y_0$-indexed column of~$M$. By the same
argument, swapping the roles of $X$ and $Y$, every entry in 
$M|_{X_0\times Y_0}$ is $*$,  contradicting the fact that $M|_{X\times Y}$ contains a
$0$ since $M|_{X\times Y}$ is not pure.

Second, a similar argument shows 
that every entry of $M|_{X_1\times Y_1}$ is $1$.

Thus for all $M$-partitions $\sigma$ of $G$ respecting $L$, for all
$x\in V_X$ and $y\in V_Y$, if $(x,y)\in E$ then
$(\sigma(x),\sigma(y))\notin X_0\times Y_0$ while if $(x,y)\notin E$ then
$(\sigma(x),\sigma(y))\notin X_1\times Y_1$.  
Using the definition of subcube decompositions,
this shows that any
$M$-partition of $G$ respecting $L$ respects some $L_i$.
\end{proof}

We can now give an algorithm for  the problem \LMPurify{\lists}{M}.
The algorithm consists of the function \LMPurify{\lists}{M}, which
is defined in Algorithm~\ref{alg:purifytriv} 
for the trivial case in which $\lists$ is $M$-purifying
and in Algorithm~\ref{alg:purify} for the  
case in which it is not.
Note that for any fixed~$\lists$ and~$M$ the algorithm is
defined either in Algorithm~\ref{alg:purifytriv} or in Algorithm~\ref{alg:purify}
and the function \LMPurify{\lists}{M}
 is not recursive. However, the \emph{definition}
is recursive, so the function \LMPurify{\lists}{M} 
defined in Algorithm~\ref{alg:purify} does
make a call to a function \LMPurify{\lists_i}{M}
for some $\lists_i$ which is smaller than $\lists$.
The function \LMPurify{\lists_i}{M}
is in turn defined in Algorithm~\ref{alg:purifytriv} or~Algorithm~\ref{alg:purify}.
\begin{algorithm}[t]
\caption{A trivial algorithm for the problem \LMPurify{\lists}{M} for
the case in which $\lists$ is $M$-purifying.}
\label{alg:purifytriv}
\begin{algorithmic}
\Function{\LMPurify{\lists}{M}}{$G$,$L$}
\Return{$L$}
\EndFunction
\end{algorithmic}
\end{algorithm}
\begin{algorithm}[t]
\caption{A polynomial-time algorithm for  the problem \LMPurify{\lists}{M}
when $\lists\subseteq\powerset{D}$ is subset-closed and 
is not $M$-purifying and
there is no length-$2$ \lists{}-$M$-derectangularising sequence.
This algorithm calls the function \LMPurifyStep{\lists}{M} from Algorithm~\ref{alg:purifystep}.
It also calls the function \LMPurify{\lists_i}{M}
for various lists~$\lists_i$ which are shorter than~$\lists$.
These functions are defined inductively in Algorithm~\ref{alg:purifytriv} and here.}
\label{alg:purify}
\begin{algorithmic}
\Function{\LMPurify{\lists}{M}}{$G$,$L$}
  \State{ /* $\emptyset \in \lists$ since $\lists$ is subset-closed.
   Since $\lists$ is not $M$-purifying, $\lists \neq \{\emptyset\}$, hence
      $|\lists|>1$ */}
   \State{Let $B$ be the empty sequence of list functions}
   \State{$L_1,\ldots,L_k \gets$ \LMPurifyStep{\lists}{M}$(G,L)$}
       \For{$i\in[k]$}  
            \State{$\lists_i \gets {\bigcup_{v\in V(G)} \powerset{L_i(v)}}$}   
            \State{ $L'_1,\ldots,L'_j \gets$ \LMPurify{\lists_i}{M}$(G,L_i)$}
             \State{Add $L'_1,\ldots,L'_j$ to $B$}
     \EndFor
    \Return{$B$}
\EndFunction
\end{algorithmic}
\end{algorithm}
The correctness of  the algorithm follows from the definition of
the problem.
The following lemma 
bounds the running time. 

\begin{lemma}\label{lem:reducetopure}
Let  $M\in\{0,1,*\}^{D\times D}$ 
be a symmetric matrix and let 
 $\lists\subseteq\powerset{D}$ be subset-closed.   
If there is no length-$2$ \lists{}-$M$-derectangularising sequence,
then the function~\LMPurify{\lists}{M} 
as defined in Algorithms~\ref{alg:purifytriv} and~\ref{alg:purify}
is a polynomial-time algorithm for  the problem \LMPurify{\lists}{M}.
\end{lemma}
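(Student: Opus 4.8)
The plan is to prove both correctness and the polynomial-time bound simultaneously, by induction on $|\lists|$, with the base case handled by Algorithm~\ref{alg:purifytriv} (when $\lists$ is $M$-purifying) and the inductive step by Algorithm~\ref{alg:purify}. Before the induction can even be set up, I would verify that the recursion is well-founded and that the hypotheses needed to invoke Lemma~\ref{lem:claim} are preserved at every level. For each $L_i$ returned by \LMPurifyStep{\lists}{M}, the set $\lists_i=\bigcup_{v\in V(G)}\powerset{L_i(v)}$ is a union of powersets, hence subset-closed; and since each $L_i(v)\subseteq L(v)\in\lists$ with $\lists$ subset-closed, every $L_i(v)\in\lists$, giving $\lists_i\subseteq\lists$. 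Because $\lists_i\subseteq\lists$, any length-$2$ $\lists_i$-$M$-derectangularising sequence would also be one for $\lists$; so the hypothesis that no such sequence exists is inherited by every recursive call, and Lemma~\ref{lem:claim} applies throughout.

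The crucial structural point, which is the heart of the whole argument, is that $|\lists_i|<|\lists|$ strictly. Here I would exploit the guarantee of Lemma~\ref{lem:claim} that each $L_i$ omits from its image some set $W$ that is inclusion-maximal in $\lists$. Since $W$ is inclusion-maximal in $\lists$ and each $L_i(v)\in\lists$, the only way to have $W\in\lists_i$ would be $W=L_i(v)$ for some $v$, contradicting that $W$ does not occur in the image of $L_i$. Hence $W\in\lists\setminus\lists_i$, giving the strict decrease. This is exactly what makes the recursion terminate and, more importantly, bounds the recursion depth by $|\lists|\le 2^{|D|}$, a constant that does not depend on the input graph.

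For correctness I would argue inductively using the defining properties of \LMPurify{\lists}{M}. In the base case $\lists$ is $M$-purifying, so the image $\{L(v)\mid v\in V(G)\}\subseteq\lists$ is $M$-purifying and the single function $L$ is a valid output. In the inductive case, Lemma~\ref{lem:claim} gives $L_1,\dots,L_k$ with $L_i(v)\subseteq L(v)$ such that every $M$-partition respecting $L$ respects exactly one $L_i$; by the inductive hypothesis each recursive call \LMPurify{\lists_i}{M}$(G,L_i)$ returns functions whose images are $M$-purifying, that are pointwise contained in $L_i$ (hence in $L$), and such that every $M$-partition respecting $L_i$ respects exactly one of them. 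To combine the two ``exactly one'' properties I would note that a function $L'$ output in branch $i$ satisfies $L'(v)\subseteq L_i(v)$, so any $\sigma$ respecting $L'$ also respects $L_i$; since $\sigma$ respects only one of the $L_i$, it can respect an output function from only that one branch, and within it exactly one by the inductive hypothesis. This yields all three required properties of the concatenated output.

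The remaining, and main quantitative, obstacle is the running-time bound, where the real danger is that polynomial branching at each level could compound into an exponential blow-up; the constant-depth fact from the second paragraph is precisely what rules this out. At each node \LMPurifyStep{\lists}{M} runs in polynomial time by Lemma~\ref{lem:claim}, and its output length $k$ is polynomial in $n=|V(G)|$: Case~1 produces $n+1$ functions, Case~2 produces one per bipartite--cobipartite partition (polynomially many by Lemma~\ref{lem:sparsedense}), and Case~3 produces one per subcube in the decomposition (polynomially many by Lemma~\ref{lem:splittocubes}). Thus the recursion tree has branching factor bounded by a fixed polynomial $p(n)$ and depth at most $|\lists|$, so the total number of nodes is at most $(|\lists|+1)\,p(n)^{|\lists|}$, which is polynomial in $n$ because $|\lists|\le 2^{|D|}$ is a constant. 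Since computing each $\lists_i$ and the remaining bookkeeping at each node is polynomial, and the total output size is bounded by the number of leaves times $O(n)$, the whole procedure runs in polynomial time, completing the proof.
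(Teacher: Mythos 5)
Your proposal is correct and follows essentially the same route as the paper's proof: induction on $|\lists|$ with the $M$-purifying case as the base, Lemma~\ref{lem:claim} supplying the polynomial-time step, and the strict decrease $\lists_i\subsetneq\lists$ (via the omitted inclusion-maximal set) bounding the recursion depth by the constant $|\lists|\le 2^{|D|}$. You spell out several details the paper leaves implicit (the maximality argument for strict decrease, the combination of the ``exactly one'' properties, and the explicit recursion-tree bound), but the underlying argument is the same.
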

\begin{proof}

Note that $\lists$ is a fixed parameter of the problem \LMPurify{\lists}{M} ---
it is not part of the input.
The proof is by induction on $|\lists|$.
If $|\lists|=1$ then 
$\lists=\{\emptyset\}$ so it is $M$-purifying.
In this case,
function~\LMPurify{\lists}{M} 
is defined in Algorithm~\ref{alg:purifytriv}. It is clear that it is a polynomial-time
algorithm for the problem \LMPurify{\lists}{M}.

For the inductive step suppose that $|\lists|>1$.
If $\lists$ is $M$-purifying then 
function~\LMPurify{\lists}{M} 
is defined in Algorithm~\ref{alg:purifytriv}
and again the result is trivial. Otherwise,
function~\LMPurify{\lists}{M} 
is defined in Algorithm~\ref{alg:purify}.
Note that 
$\lists\subseteq\powerset{D}$ is subset-closed and
there is no length-$2$ \lists{}-$M$-derectangularising sequence. 
From this, we can conclude that,
for any subset-closed subset $\lists'$ of~$\lists$,
there is no length-$2$ $\lists'$-$M$-derectangularising sequence.
So we can assume by the inductive hypothesis that for all 
subset-closed $\lists'\subset \lists{}$, 
the function 
\LMPurify{\lists'}{M} runs in polynomial time.

The result now follows from the fact that 
the function
\LMPurifyStep{\lists}{M} runs in polynomial time (as guaranteed by Lemma~\ref{lem:claim})
and from the fact 
that each $\lists_i$ is a strict subset of~$\lists$,
which follows from the definition of 
problem \LMPurifyStep{\lists}{M}.
Each $M$-partition that respects $L$ respects exactly one of
$L_1,\dots,L_k$ and, hence, 
it respects exactly one of the list functions that is returned.
\end{proof}

\subsection{Algorithm for \LMPartitions{\lists}{M} and proof of the dichotomy}

We can now present our algorithm for 
the problem \LMPartitions{\lists}{M}.
The algorithm consists of the
function \LMPartitions{\lists}{M} 
which is defined in Algorithm~\ref{alg:mainpurifying}
for the   case in which $\lists$ is $M$-purifying
and in Algorithm~\ref{alg:main} when it is not.
\begin{algorithm}[t]
\caption{A polynomial-time algorithm for the problem \LMPartitions{\lists}{M}
when $\lists$ is subset-closed and
$M$-purifying and there is no $\lists$-$M$-derectangularising sequence.}
\label{alg:mainpurifying}
\begin{algorithmic}
\Function{\LMPartitions{\lists}{M}}{$G$,$L$}
\State{$(V,C) \gets$ the instance of $\nCSP(\GammaLM)$
obtained by applying the polynomial-time Turing reduction from Proposition~\ref{prop:purifiediscsp} to the input $(G,L)$}

\Return AC$(V,C)$ where AC is the function from Algorithm~\ref{alg:AC}
\EndFunction
\end{algorithmic}
\end{algorithm}
\begin{algorithm}[t]
\caption{A polynomial-time algorithm for the problem
\LMPartitions{\lists}{M}
when $\lists$ is subset-closed and
not $M$-purifying and
there is no $\lists$-$M$-derectangularising sequence.
The algorithm calls the function {\LMPurify{\lists}{M}}{$(G,L)$}
from 
Algorithm~\ref{alg:purify}.
}
\label{alg:main}
\begin{algorithmic}
\Function{\LMPartitions{\lists}{M}}{$G$,$L$}
\State{$L_1,\ldots,L_t \gets$ {\LMPurify{\lists}{M}}{$(G,L)$}}
\State{$Z\gets 0$}
\For{$i\in[t]$}
       \State{$\lists_i \gets \bigcup_{v \in V(G)} \powerset{L_i(v)}$}   
\State{$(V,C_i) \gets$ the instance of 
$\nCSP(\Gamma_{\lists_i,M})$
obtained by applying the polynomial-time Turing reduction from Proposition~\ref{prop:purifiediscsp} to the input $(G,L_i)$}

\State{ $Z_i \gets \text{AC$(V,C_i)$ where AC is the function from Algorithm~\ref{alg:AC}}$}
\State{$Z \gets Z + Z_i$}    
\EndFor
\Return{$Z$} 
\EndFunction
\end{algorithmic}
\end{algorithm}

\begin{lemma}
\label{lem:positive} 
Let  $M\in\{0,1,*\}^{D\times D}$ 
be a symmetric matrix and let 
 $\lists\subseteq\powerset{D}$ be subset-closed.   
If there is no   \lists{}-$M$-derectangularising sequence,
then 
the function~\LMPartitions{\lists}{M} 
as defined in Algorithms~\ref{alg:mainpurifying} and~\ref{alg:main}
is a polynomial-time algorithm for  the problem \LMPartitions{\lists}{M}.\end{lemma}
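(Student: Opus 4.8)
The plan is to verify correctness and the polynomial running-time bound separately for the two cases in the definition of the function \LMPartitions{\lists}{M}, namely the $M$-purifying case (Algorithm~\ref{alg:mainpurifying}) and the general case (Algorithm~\ref{alg:main}), combining the reduction underlying Proposition~\ref{prop:purifiediscsp}, the arc-consistency analysis of Lemma~\ref{lem:quickarc}, and the purification guarantee of Lemma~\ref{lem:reducetopure}. Since Algorithm~\ref{alg:main} is not itself recursive (the only recursion sits inside \LMPurify{\lists}{M}, already handled by Lemma~\ref{lem:reducetopure}), no fresh induction is needed here; the argument is a matter of assembling the earlier results.

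First I would dispose of the case in which $\lists$ is $M$-purifying. Here the algorithm applies the polynomial-time reduction of Proposition~\ref{prop:purifiediscsp} to turn $(G,L)$ into a simple instance $(V,C)$ of $\nCSP(\GammaLM)$ whose satisfying assignments are exactly the $M$-partitions of $G$ that respect $L$, and then returns $\mathrm{AC}(V,C)$. Correctness of the count is immediate from the correctness of $\mathrm{AC}$ on simple instances, which was argued directly after Algorithm~\ref{alg:AC}. For the running time, the reduction is polynomial and, since $\lists$ is subset-closed, $M$-purifying, and (by hypothesis) admits no \lists{}-$M$-derectangularising sequence, Lemma~\ref{lem:quickarc} guarantees that $\mathrm{AC}$ terminates in polynomial time.

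Next I would treat the case in which $\lists$ is not $M$-purifying, where Algorithm~\ref{alg:main} first calls \LMPurify{\lists}{M}$(G,L)$ to obtain list functions $L_1,\dots,L_t$ and then sums the counts for each. Since there is no \lists{}-$M$-derectangularising sequence, in particular there is none of length~$2$, so Lemma~\ref{lem:reducetopure} tells us that this call runs in polynomial time; hence its output has polynomial size, $t$ is polynomially bounded, and each $L_i$ is computed efficiently. By the defining properties of the problem \LMPurify{\lists}{M}, each set $\{L_i(v)\mid v\in V(G)\}$ is $M$-purifying, each $L_i(v)\subseteq L(v)$, and every $M$-partition of $G$ respecting $L$ respects exactly one $L_i$. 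This last property yields correctness, since the number of $M$-partitions respecting $L$ equals the sum over $i\in[t]$ of the number respecting $L_i$, which is precisely the value $Z$ accumulated by the loop.

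The key step is to show that each inner count is computed efficiently by the $M$-purifying machinery, and this is where the main obstacle lies: propagating the non-derectangularising hypothesis from $\lists$ down to each $\lists_i$. I would argue that $\lists_i=\bigcup_{v\in V(G)}\powerset{L_i(v)}$ is subset-closed (a union of powersets) and $M$-purifying (the image of $L_i$ is $M$-purifying and purity is inherited by submatrices, hence by subsets), and that $L_i$ maps into $\lists_i$. Because $L_i(v)\subseteq L(v)\in\lists$ and $\lists$ is subset-closed, we have $\lists_i\subseteq\lists$; consequently any $\lists_i$-$M$-derectangularising sequence would also be an \lists{}-$M$-derectangularising sequence, so by hypothesis none exists. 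Thus Lemma~\ref{lem:quickarc} applies with $\lists_i$ in place of $\lists$, and the reduction of Proposition~\ref{prop:purifiediscsp} again produces a simple instance of $\nCSP(\Gamma_{\lists_i,M})$ on which $\mathrm{AC}$ correctly counts the $M$-partitions respecting $L_i$ in polynomial time. Once this propagation and the bound on $t$ are in hand, the overall running time is a polynomial number of polynomial-time steps and the summation returns the correct count.
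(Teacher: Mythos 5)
Your proposal is correct and follows essentially the same route as the paper's proof: dispatch the $M$-purifying case via Proposition~\ref{prop:purifiediscsp} and Lemma~\ref{lem:quickarc}, and in the general case invoke Lemma~\ref{lem:reducetopure}, use the ``respects exactly one $L_i$'' property for correctness, and observe that each $\lists_i$ is a subset-closed, $M$-purifying subset of $\lists$ so the non-derectangularising hypothesis propagates. The only differences are cosmetic (you spell out why $\lists_i\subseteq\lists$ and why the length-$2$ hypothesis of Lemma~\ref{lem:reducetopure} holds, which the paper leaves implicit).
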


\begin{proof}
If $\lists$ is $M$-purifying then 
the function~\LMPartitions{\lists}{M} 
is defined in Algorithm~\ref{alg:mainpurifying}.
Proposition~\ref{prop:purifiediscsp} shows that the   reduction 
in Algorithm~\ref{alg:mainpurifying} to a CSP instance is correct and takes polynomial time.
The CSP instance can be solved by the function AC in Algorithm~\ref{alg:AC}, whose running time 
is shown to be polynomial in  Lemma~\ref{lem:quickarc}. 

If $\lists$ is not $M$-purifying then 
the function \LMPartitions{\lists}{M} 
is defined in Algorithm~\ref{alg:main}.
Lemma~\ref{lem:reducetopure} guarantees
that the function \LMPurify{\lists}{M}
is a polynomial-time algorithm for the problem~\LMPurify{\lists}{M}. 
If the list $L_1,\ldots,L_t$ is empty
then there is no $M$-partition of~$G$ that respects~$L$
so it is correct that the function \LMPartitions{\lists}{M}
returns~$0$.
Otherwise, we know from the definition of the problem \LMPurify{\lists}{M} that 
\begin{itemize}
\item functions $L_1,\ldots,L_t$ are from~$V(G)$ to~$\lists$,
\item for each 
$i\in [t]$, 
the set 
$\{L_i(v) \mid  v\in V(G)\}$ is $M$-purifying,
\item for each $i\in [t]$ and $v \in V(G)$, $L_i(v) \subseteq L(v)$,
    and 
\item each
  $M$-partition of $G$ that respects $L$ respects exactly one of
  $L_1,\dots,L_t$.
\end{itemize}
The desired result is now
the sum, over all~$i\in[t]$,
of the number of $M$-partitions of~$G$ that respect~$L_i$.
Since the list $L_1, \dots, L_t$ is generated in polynomial time,
$t$~is bounded by some polynomial in $|V(G)|$.  

Now, for each $i\in[t]$, $\lists_i$ is a subset-closed
subset of $\lists$. Since there is no $\lists$-$M$-derectangularising sequence,
there is also no $\lists_i$-$M$-derectangularising sequence.
Also,
$\lists_i$ is $M$-purifying.
Thus, the argument that we gave for the purifying case shows that $Z_i$ is the desired quantity.
\end{proof}

We can now combine our results to 
establish our dichotomy for the problem \LMPartitions{\lists}{M}.

\begin{reptheorem}{thm:explicitdichotomy}\explicitdichotomy\end{reptheorem}
\begin{proof}

Suppose that there is an \lists{}-$M$-derectangularising sequence
$D_1,\dots,D_k$.
Recall (from Definition~\ref{def:closure}) the definition of
the subset-closure $\subclo{\lists''}$ of a set $\lists'' \subseteq \powerset{D}$.
Let 
$$\lists'= \subclo {\{D_1,\ldots,D_k\} }.$$
Since $\{D_1,\ldots,D_k\}$ is $M$-purifying, so is $\lists'\!$, which is
also subset-closed.
It follows that $\Gamma_{\!\lists'\!,M}$ is well defined 
(see Definition~\ref{defgammaprime})
and contains
the relations $H_{D_1,D_2}^M, \ldots,H_{D_{k-1},D_k}^M$ (and possibly others).
Since $H_{D_1,D_2}^M \circ H_{D_2,D_3}^M \circ \cdots \circ H_{D_{k-1},D_k}^M$ is
not rectangular,  $\nCSP(\Gamma_{\!\lists'\!,M})$ is $\numP$-complete 
\cite[Theorem~2 and Corollary~3]{BD} 
(see also \cite[Lemma~24]{DRfull}).
By Proposition~\ref{prop:purifiediscsp},  the problem
\LMPartitions{\lists'}{M} is $\numP$-complete
so the more general problem \LMPartitions{\lists}{M} is also $\numP$-complete.
On the other hand, if there is no $\lists$-$M$-derectangularising sequence,
then the result follows from Lemma~\ref{lem:positive}.\end{proof}

\section{Complexity of the dichotomy criterion}
\label{sec:meta}

The dichotomy established in Theorem~\ref{thm:explicitdichotomy} is that,
if there is an
\lists{}-$M$-derectangularising sequence, then the problem
\LMPartitions{\lists}{M} is $\numP$-complete; otherwise,
 it is in $\FP$.  
 This section addresses the 
computational problem of determining which is the case, given \lists{} and~$M$.

The following lemma 
will allow us to show
that the problem \ExistsDerectSeq\ 
(the problem of determining whether there is an $\subclo{\lists}$-$M$-derectangularising sequence,
given \lists{} and $M$)
and the related problem \MatrixHasDerectSeq\ 
(the problem of determining whether there is a $\powerset{D}$-$M$-derectangularising sequence,
given $M$)
are both in \NP{}.  Note that, for this ``meta-problem'', \lists{}
and~$M$ are the inputs whereas, previously, we have regarded them as
fixed parameters.

\begin{lemma}\label{lem:small_derect}
Let $M\in\{0,1,*\}^{D\times D}$ be symmetric, and let
$\lists\subseteq  \powerset{D}$ be subset-closed.
If there is an \lists{}-$M$-derectangularising sequence, then there
is one of length at most
$512(|D|^3+1)$.
\end{lemma}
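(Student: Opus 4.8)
The plan is to start from an arbitrary \lists{}-$M$-derectangularising sequence $D_1,\dots,D_k$ and repeatedly shorten it, preserving the derectangularising property, until its length is below the stated bound. First I would pass to a \emph{minimal} such sequence: since an $M$-purifying set of sets stays $M$-purifying under taking subsets, any contiguous sub-sequence $D_a,\dots,D_b$ is automatically $M$-purifying, so if $H^M_{D_a,D_{a+1}}\circ\cdots\circ H^M_{D_{b-1},D_b}$ is non-rectangular then $D_a,\dots,D_b$ is again derectangularising. Taking a shortest derectangularising sequence therefore guarantees that every proper prefix and every proper suffix composes to a \emph{rectangular} relation, while only the full composition $H=H^M_{D_1,D_2}\circ\cdots\circ H^M_{D_{k-1},D_k}$ is non-rectangular; I will bound $k$ for this minimal sequence.

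Next I would reformulate non-rectangularity in reachability terms. View the sequence as a layered digraph whose layer $j$ is $D_j$, with a $*$-edge from $a\in D_j$ to $b\in D_{j+1}$ whenever $M_{a,b}=*$; then $(a,b)\in H$ exactly when $b$ is reachable from $a$. Non-rectangularity of $H$ yields a witness $p,q\in D_1$ and $r,s\in D_k$ with directed paths $p\rightsquigarrow r$, $p\rightsquigarrow s$ and $q\rightsquigarrow r$ but no path $q\rightsquigarrow s$. I fix three such paths and record, for each layer $j$, the triple $(x_j,y_j,z_j)\in D_j^{3}$ of vertices they use (so $x_1=y_1=p$, $z_1=q$, $x_k=s$, $y_k=z_k=r$). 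Using minimality, the set $A_j$ of vertices reachable from $p$ and the set $B_j$ reachable from $q$ are \emph{disjoint} column-blocks of the rectangular prefix up to layer $j$, the set $B^{\mathrm{rev}}_j$ of vertices reaching $s$ is a row-block of the rectangular suffix, and $B_j\cap B^{\mathrm{rev}}_j=\emptyset$ for every $j$ --- which is exactly the assertion $q\not\rightsquigarrow s$.

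The shortening step is an excision: if two interior layers $j<j'$ agree on the tracked data, I delete layers $j+1,\dots,j'$ and glue layer $j$ directly to layer $j'+1$ via $H^M_{D_j,D_{j'+1}}$. For the three \emph{positive} paths to survive I only need the path-vertices to match, $(x_j,y_j,z_j)=(x_{j'},y_{j'},z_{j'})$: since $M_{x_{j'},x_{j'+1}}=*$ and $x_j=x_{j'}$, the glued edge $(x_j,x_{j'+1})$ lies in $H^M_{D_j,D_{j'+1}}$, and likewise for $y$ and $z$; the shortened index-set collection is a sub-collection of the original, hence still $M$-purifying. So the shortened sequence still realises $p\rightsquigarrow r$, $p\rightsquigarrow s$ and $q\rightsquigarrow r$, and it is derectangularising precisely when the gluing does not create a new path $q\rightsquigarrow s$.

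The main obstacle is exactly this last point. A new $q\rightsquigarrow s$ path must run through the glued edge, so it exists iff some vertex reachable from $q$ at layer $j$ has a $*$-edge into the $s$-reaching set $B^{\mathrm{rev}}_{j'+1}$; writing $N^{*}$ for the $*$-neighbourhood in $M$, safety of the excision is the \emph{global} condition $N^{*}(B_j)\cap D_{j'+1}\cap B^{\mathrm{rev}}_{j'+1}=\emptyset$, which matching the three path-vertices alone cannot guarantee. To control it I would attach to each triple a constant-size tag recording, for each of $x_j,y_j,z_j$, its membership in the three predicates ``reachable from $p$'', ``reachable from $q$'', ``reaches $s$'' --- at most $2^{3\cdot 3}=512$ possibilities. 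The block structure forced by minimality (reachable sets are blocks, blocks are pairwise nested or disjoint, and the induced partitions coarsen monotonically along the sequence) is what should let this bounded tag certify $B_j\subseteq B_{j'}$ at a matching pair, whence $N^{*}(B_j)\cap D_{j'+1}\subseteq B_{j'+1}$, which is disjoint from $B^{\mathrm{rev}}_{j'+1}$, so the excision is safe. Since there are at most $512\,|D|^{3}$ distinct (triple, tag) pairs, once $k>512(|D|^{3}+1)$ two interior layers coincide, a safe excision applies, and minimality is contradicted. Carrying out this final step --- verifying that the constant-size tag together with the prefix/suffix block structure really forces the nesting $B_j\subseteq B_{j'}$ needed for safety --- is the technical heart of the argument.
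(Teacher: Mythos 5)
Your setup matches the paper's: pass to a shortest derectangularising sequence, extract three witnessing $*$-paths for the pairs $(a,b)$, $(a'\!,b)$, $(a,b')$, and record the triple $(x_j,y_j,z_j)$ at each layer. But the excision step has a genuine gap, and you have correctly located it yourself: nothing in your argument establishes that gluing layer $j$ to layer $j'+1$ cannot create a new $q\rightsquigarrow s$ path. Your proposed fix --- a $512$-valued tag recording membership of the three tracked vertices in the predicates ``reachable from $p$'', ``reachable from $q$'', ``reaches $s$'' --- is insufficient as stated. The set $B_j$ of \emph{all} vertices of $D_j$ reachable from $q$ can be large, and the tag says nothing about its elements other than $x_j,y_j,z_j$; moreover your pigeonhole is on (triple, tag) pairs only, so the layers $D_j$ and $D_{j'}$ you match need not even be equal sets, and the claimed nesting $B_j\subseteq B_{j'}$ (and the monotone coarsening of block partitions it is supposed to follow from) is unsubstantiated. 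As written, the ``technical heart'' is not just unfinished; there is no evident way to complete it with a constant-size tag attached to three vertices.

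The paper's proof closes exactly this hole with one extra move that you are missing: before excising, \emph{shrink each layer} to $D'_i=\{x_i,y_i,z_i\}$. This is legitimate because $\lists$ is subset-closed and any submatrix of a pure matrix is pure, and it preserves derectangularisation (the three positive pairs survive, and removing domain elements cannot create a path realising $(a'\!,b')$). Now every layer has at most $3$ elements, so there are at most $|D|^3+1$ possible layers and, for a fixed repeated layer $D'_{i_0}$, at most $2^9-1=511$ possible non-empty composed relations $R_{i_0,i_j}$ on $D'_{i_0}$. Pigeonholing on the pair (layer, \emph{full composed relation}) rather than (triple, tag) yields indices $i_m<i_n$ with $D'_{i_m}=D'_{i_n}$ and $R_{i_0,i_m}=R_{i_0,i_n}$, whence $R_{1,k}=R_{1,i_m}\circ R_{i_n,k}$ \emph{exactly}: the excised sequence composes to the identical relation, so non-rectangularity is inherited and no safety analysis is needed. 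This is where the constants $512$ and $|D|^3+1$ actually come from; your tag count reproduces the number $512$ by coincidence but does not certify the excision. To repair your proof, replace the tag by the restriction-and-exact-relation-matching argument.
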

\begin{proof}
Pick an \lists{}-$M$-derectangularising sequence $D_1,\dots,D_k$
with $k$ minimal; we will show that $k\leq 512(|D|^3+1)$.
Define $$R=H^M_{D_1, D_2} \circ H^M_{D_2, D_3} \circ \dots \circ
H^M_{D_{k-1}, D_k}.$$ Note that $R\subseteq D_1\times D_k$.  By the definition
of derectangularising sequence, there
are 
$a,a'\in D_1$ and $b,b'\in D_k$  such that $(a,b)$, $(a'\!,b)$ and $(a,b')$
are all in~$R$ but
$(a'\!,b')\not\in R$.  So there exist
\begin{equation*}
    (x_1,\dots,x_k),(y_1,\dots,y_k),(z_1,\dots,z_k)\in D_1\times \dots
    \times D_k
\end{equation*}
with $(x_1,x_k)=(a,b)$, $(y_1,y_k)=(a'\!,b)$ and
$(z_1,z_k)=(a,b')$ such that $M_{x_i,x_{i+1}} = M_{y_i,y_{i+1}} =
M_{z_i,z_{i+1}}=*$ for every $i\in[k-1]$ but, for any $(w_1,\ldots,w_k)\in D_1 \times \dots \times D_k$
with $(w_1,w_k)=(a'\!,b')$, there is an~$i\in[k-1]$ such that
$M_{w_i,w_{i+1}}\neq *$.  

Setting $D'_i=\{x_i,y_i,z_i\}$
for each $i$ gives an \lists{}-$M$-derectangularising sequence
$D'_1,\dots,D'_k$ with $|D'_i|\leq 3$ for each $1\leq i\leq k$.  (Note
that any submatrix of a pure matrix is pure.)  For all $1\leq s <
t\leq k$ define
\[ R_{s,t}=H^M_{D'_s, D'_{s+1}} \circ H^M_{D'_{s+1}, D'_{s+2}} \circ \dots \circ
  H^M_{D'_{t-1}, D'_t}. \]
Since $D'_1,\ldots,D'_k$ is \lists{}-$M$-derectangularising,
$R_{1,k}$ is not rectangular but, by the minimality of $k$, every other
$R_{s,t}$ is rectangular.  Note also that no $R_{s,t}=\emptyset$
since, if that were the case, we would have $R_{1,k}=\emptyset$, which
is rectangular.

Suppose for a contradiction that $k> 512(|D|^3+1)$.  There are at most
$|D|^3+1$ subsets of $D$ with size at most three, so there are indices
$1\leq i_0<i_1<i_2<\dots<i_{512}\leq k$ such that
$D'_{i_0}=\dots=D'_{i_{512}}$.  There are at most 
$2^{|D'_{i_0}|^2}-1 \leq 2^9-1=511$ non-empty binary
relations on $D'_{i_0}$, so $R_{i_0,i_m}=R_{i_0,i_n}$ for some $1\leq 
m<n\leq 512$.  Since $R_{1,k}$ is not rectangular,
\begin{equation*}
     R_{1,k}= R_{1,i_0} \circ R_{i_0,i_n} \circ R_{i_n,k}= R_{1,i_0} \circ R_{i_0,i_m} \circ R_{i_n,k}= R_{1,i_m} \circ R_{i_n,k}
\end{equation*}
is not rectangular. Therefore,
$D'_1,D'_2,\dots,D'_{i_m},D'_{1+i_n},D'_{2+i_n},\dots,D'_k$ is an $\mathcal
L$-$M$-derectangularising sequence of length less than $k$,
contradicting the minimality of $k$.
\end{proof}

Now that we have membership in \NP{}, we can prove completeness.

 \begin{reptheorem}{thm:meta}\metathm\end{reptheorem}
\begin{proof}

We first show that \ExistsDerectSeq\ is in \NP. 
Given $D$, $M\in \{0,1,*\}^{D\times D}$ and
$\lists \subseteq \powerset{D}$, 
a non-deterministic polynomial time algorithm for \ExistsDerectSeq\
first ``guesses'' an $\subclo{\lists}$-$M$-derectangularising sequence $D_1,\ldots,D_k$
with $k\leq 512{(|D|^3+1)}$.
Lemma~\ref{lem:small_derect} 
guarantees that such a sequence exists 
if the output should be ``yes''.
The algorithm then verifies that 
each $D_i$ is a subset of a set in $\lists$,
that $\{D_1,\ldots,D_k\}$ is $M$-purifying,
and that the
relation $H^M_{D_1,D_2} \circ H^M_{D_2, D_3} \circ \dots \circ
H^M_{D_{k-1}, D_k}$ is not rectangular.
All of these can be checked in polynomial time without
explicitly constructing $\subclo{\lists}$.

To show that \ExistsDerectSeq\ is \NP{}-hard, we give
a polynomial-time reduction from the well-known \NP{}-hard problem of determining whether
a graph~$G$ has an independent set of size~$k$.
 
Let $G$ and $k$ be an input to the independent set problem.  Let
$V(G)= [n]$ and assume without loss of generality that
$k\in[n]$.  Setting $D=[n]\times[k]\times[3]$, we construct a $D\times
D$ matrix~$M$ and a set \lists{} of lists such that there is an
$\subclo{\lists}$-$M$-derectangularising sequence if and only if $G$~has an
independent set of size~$k$.

$M$ will be a block matrix, constructed using the following $3\times
3$ symmetric matrices.  Note that each is pure, apart from~$\Mid$.
\begin{gather*}
    \Mstart = \begin{pmatrix}*&*&0 \\ *&*&0 \\ 0&0&*\end{pmatrix} \qquad
    \Mend   = \begin{pmatrix}*&0&0 \\ 0&*&* \\ 0&*&*\end{pmatrix} \qquad
    \Mbij   = \begin{pmatrix}*&0&0 \\ 0&*&0 \\ 0&0&*\end{pmatrix} \\
    \Mzero  = \begin{pmatrix}0&0&0 \\ 0&0&0 \\ 0&0&0\end{pmatrix} \qquad
    \Mid    = \begin{pmatrix}1&0&0 \\ 0&1&0 \\ 0&0&1\end{pmatrix}\,.
\end{gather*}

For $v\in [n]$ and $j\in[k]$, let $D[v,j] = \{(v,j,c) \mid c\in[3]\}$.
Below, when we say that $M|_{D[v,j]\times D[v',j']}= N$ for some
$3\times 3$ matrix $N$, we mean more specifically that
$M_{(v,j,c),(v'\!,j'\!,c')} = N_{c,c'}$ for all $c,c'\in[3]$.  $M$~is
constructed as follows.
\begin{itemize}
\item For all $v\in[n]$, $M|_{D[v,1] \times D[v,1]}= \Mstart$ and 
$M|_{D[v,k]\times D[v,k]} =
    \Mend$.
\item For all $v\in[n]$ and all $j\in\{2,\dots,k-1\}$, $M|_{D[{v,j}]\times D[{v,j}]} =
    \Mbij$.
\item If $v\neq v'\!$, $(v,v')\notin E(G)$ and $j<k$, then
    \begin{itemize}
    \item $M|_{D[{v,j}]\times D[{v',j+1}]}   = M|_{D[v',j+1]\times D[{v,j}]}  = \Mbij$ and
    \item $M|_{D[{v,j}] \times D[{v',j'}]}  = M|_{D[{v',j'}]\times D[{v,j}]}   = \Mzero$ for all
$j'>j+1$.
     \end{itemize}
\item For all $v,v'\in[n]$ and $j,j'\in[k]$ not covered above,
  $M|_{D[v,j]\times D[v',j']}  = \Mid$.
\end{itemize}
To complete the construction, let 
$\lists=\{D[v,j]\mid v\in[n], j\in[k]\}$.  
We will show that 
$G$ has an independent set of size~$k$ if and only if
there is an $\subclo{\lists}$-$M$-derectangularising sequence.
 
For the forward direction of the proof, 
suppose that $G$ has an independent set $I = \{v_1, \dots, v_k\}$ of
size~$k$.  We will show that
\begin{equation*}
    D[v_1,1], D[v_1,1], D[v_2,2], D[v_3,3], \dots, D[v_{k-1}, k-1],
    D[v_k,k], D[v_k,k]
\end{equation*}
(where the first and last elements are repeated and the others are
not) is $\subclo{\lists}$-$M$-derectangularising.  Since there is no edge
$(v_i,v_{i'})\in E(G)$ for $i,i'\in[k]$, the matrix $M|_{D[v_i,i]\times D[v_{i'},i']}  $
is always one of $\Mstart$, $\Mend$, $\Mbij$ and $\Mzero$, so it is
always pure.  Therefore, $\{D[v_1,1], \dots, D[v_k,k]\}$ is
$M$-purifying.  It remains to show that the relation
\begin{equation*}
    R = H^M_{D[v_1,1],D[v_1,1]} \circ H^M_{D[v_1,1],D[v_2,2]} \circ \dots
            \circ H^M_{D[v_{k-1,k-1}],D[v_{k},k]} \circ H^M_{D[v_k,k],D[v_k,k]}
\end{equation*}
is not rectangular.

Consider $i\in[k-1]$. Since $(v_i,v_{i+1})\notin E(G)$, $M|_{ D[{v_i,i}] \times D[{ v_{i+1}, i+1 }]}  = \Mbij$ so
$H^M_{D[v_i,i], D[v_{i+1},i+1]}$ is the bijection that associates
$(v_i,i,c)$ with $(v_{i+1},i+1,c)$ for each $c\in[3]$.  Therefore,
$$H^M_{D[v_1,1],D[v_1,2]} \circ \dots \circ
H^M_{D[v_{k-1},k-1],D[v_k,k]}$$ is the bijection that associates
$(v_1,1,c)$ with $(v_k,k,c)$ for each $c\in[3]$.  We have $M|_{D[v_1,1]\times D[v_1,1]} 
= \Mstart$ and $M|_{D[v_k,k] \times D[v_k,k]}  = \Mend$ so
\begin{align*}
    H^M_{D[v_1,1], D[v_1,1]} &= \{((v_1,1,c),(v_1,1,c')) \mid c,c'\in[2]\}
                        \cup \{((v_1,1,3),(v_1,1,3))\} \\
    H^M_{D[v_k,k], D[v_k,k]} &= \{((v_k,k,1),(v_k,k,1))\}
                        \cup \{((v_k,k,c),(v_k,k,c')) \mid c,c'\in\{2,3\}\}\,,
\end{align*}
and, therefore,
\begin{align*}
    R = \{((v_1,1,c),(v_k,k,c')) \mid c,c'\in[3]\}
         \setminus \{((v_1,1,3),(v_k,k,1))\}\,,
\end{align*}
which is not rectangular, as required.

For the reverse direction of the proof, suppose that 
there is an $\subclo{\lists}$-$M$-derectangularising sequence
$D_1,\ldots,D_m$. The fact that the sequence is derectangularising  
implies that $|D_i|\geq 2$ for each $i\in[m]$ --- 
see the remarks following Definition~\ref{def:derect}.
Each set in the sequence is a subset of some $D[v,j]$ in $\lists$
so for every $i\in[m]$ 
let $v_i$ denote the vertex in $[n]$
and let $j_i$ denote the index in $[k]$
such that  
$D_i \subseteq D[v_i,j_i]$.
Clearly, it is possible 
to have $(v_i,j_i)=(v_{i'},j_{i'})$ for distinct $i$ and
$i'$ in~$[m]$.

We will finish the proof by showing that $G$ has a size-$k$ independent set.
Let
\begin{equation*}
     R = H^M_{D_1 ,  D_2} \circ \dots \circ H^M_{D_{m-1}, D_m  },
\end{equation*}
which is not rectangular because the sequence is $\subclo{\lists}$-$M$-derectangularising.
Since $\{D_1,\ldots,D_m\}$  is
$M$-purifying, 
and any submatrix of~$\Mid$ with at least two rows and at least two columns is impure,
every pair $(i,i')\in [m]^2$  satisfies 
$M|_{D[v_i,j_i]\times D[v_{i'},j_{i'}]} \neq \Mid$.  
This means that we cannot have $(v_i,v_{i'})\in E(G)$
for any pair $(i,i')\in [m]^2$ 
 so the set $I=\{v_1, \dots, v_m\}$ is independent
in~$G$.  It remains to show that $|I|\geq k$.

Observe that, if $v_i=v_{i'}$, we must have $j_i=j_{i'}$ since, otherwise,
the construction ensures that
$$M|_{D[v_i,j_i]\times D[v_{i'},j_{i'}]}
 = M|_{D[v_i,j_i]\times D[v_i,j_{i'}]}
  = \Mid,$$ which we already ruled out.   
Therefore, $|I| \geq |\{j_1, \dots, j_m\}|$.

We must have $|j_i-j_{i+1}|\leq 1$ for each $i\in[m-1]$ as, otherwise,
$M|_{D[v_i,j_i]\times D[v_{i+1},j_{i+1}]}
 = \Mzero$, which implies that
$R=\emptyset$, which is rectangular.  There must be at least one $i\in
[m-1]$ such that $v_i=v_{i+1}$ and $j_i=j_{i+1}=1$, so
$M|_{D[v_i,j_i]\times D[v_{i+1},j_{i+1}]}
 = \Mstart$.  If not, $R$ is a composition
of relations corresponding to $\Mbij$ and $\Mend$ and any such
relation is either 
a bijection, or of the form of $\Mend$,
so it is rectangular.
Similarly, there must be at least one~$i$ such that $v_i=v_{i+1}$ and
$j_i=j_{i+1}=k$, giving $M|_{D[v_i,j_i]\times D[v_{i+1},j_{i+1}]}
 = \Mend$.
Therefore, the sequence $j_1, \dots, j_m$ contains 1 and~$k$.  Since
$|j_i-j_{i+1}|\leq 1$ for all $i\in[m-1]$, it follows that 
$[k] \subseteq \{j_1,\dots, j_m\}$, 
so $|I|\geq k$, as required.
In fact, $\{j_1,\dots, j_m\} = [k]$ since each $j_i\in [k]$ by construction. \end{proof}

We defined the problem \ExistsDerectSeq{} using a concise input
representation: $\subclo{\lists}$ does not need to be written out in full.
Instead, the instance is a subset~$\lists$ containing the maximal elements of~$\subclo{\lists}$.
For example, when the instance is~$\lists=\{D\}$,
we have $\subclo{\lists}= \powerset{D}$.
It is important to note that the \NP{}-completeness
of \ExistsDerectSeq{} is not an artifact of this concise input coding.
The elements of the list~$\lists$ constructed in the NP-hardness proof 
have length at most three, so 
the list $\subclo{\lists}$ could also be constructed explicitly in polynomial time.

Lemma~\ref{lem:small_derect} has the following immediate
corollary for the complexity of the dichotomy criterion of the general
\nListPartitions{M} problem.  Recall that, in this version of the
meta-problem, the input is just the matrix~$M$.

\begin{repcorollary}{cor:meta}\metacor\end{repcorollary}
\begin{proof} 
    Take $\lists= \{D\}$ in Lemma~\ref{lem:small_derect}.
\end{proof}

\section{Cardinality constraints}
\label{sec:card}

Finally, we show how lists can be used to implement cardinality
constraints of the kind that often appear in counting problems in
combinatorics.

Feder, Hell, Klein and Motwani~\cite{FHKM} point out that lists can be
used to determine whether there are $M$-partitions that obey simple
cardinality constraints.  For example, it is natural to require some
or all of the parts to be non-empty or, more generally, to contain at
least some constant number of vertices.  Given a $D\times D$
matrix~$M$, we represent such cardinality constraints as a function
$C\colon D\to\nats$.  We say that an $M$-partition $\sigma$ of a
graph~$G$ \emph{satisfies} the constraint if, for each $d\in D$, $|\{v\in
V(G)\mid \sigma(v)=d\}| \geq C(d)$.  Given a cardinality constraint
$C$, we write $|C| = \sum_{d\in D} C(d)$. 

We can determine whether there is an $M$-partition of~$G=(V,E)$ that
satisfies the cardinality constraint~$C$ by making at most
${|V|}^{|C|}$
queries to an oracle for the list $M$-partitions
problem, as follows.  Let $L_C$ be the set of list functions $L\colon
V\to \powerset{D}$ such that:
\begin{itemize}
\item for all $v\in V\!$, either $L(v) = D$ or $|L(v)| = 1$, and
\item for all $d\in D$, there are exactly $C(d)$ vertices~$v$ with
    $L(v) = \{d\}$.
\end{itemize}
There are 
at most ${|V|}^{|C|}$
such list functions and it is clear that
$G$~has an $M$-partition satisfying~$C$ if, and only if, it has a list
$M$-partition that respects at least one $L\in L_C$.  
The number of queries 
is polynomial in $|V|$ as long as the cardinality
constraint~$C$ is independent of~$G$.

For counting, the situation is a little more complicated, as we must
avoid double-counting.  The solution is to count all $M$-partitions
of the input graph and subtract off those that fail to satisfy the
cardinality constraint.  We formally define the problem
\CMPartitions{C}{M} as follows, parameterized by a $D\times D$
matrix~$M$ and a cardinality constraint function~$C\colon D\to \nats$.

\prob{\CMPartitions{C}{M}.}
  {A graph $G$.}
  {The number of $M$-partitions of~$G$ that satisfy~$C$.}

\begin{proposition}
\label{prop:cardinality}
    \CMPartitions{C}{M} is polynomial-time Turing reducible to
    \nListPartitions{M}.
\end{proposition}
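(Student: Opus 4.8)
The plan is to use inclusion--exclusion to express the number of cardinality-constrained $M$-partitions as a signed sum of list $M$-partition counts, each of which is a single oracle query. For an $M$-partition $\sigma$ of $G=(V,E)$ and a part $d\in D$, write $n_d(\sigma)=|\{v\in V\mid \sigma(v)=d\}|$. The quantity to compute is $\sum_\sigma \prod_{d\in D}[n_d(\sigma)\ge C(d)]$, where the sum ranges over all $M$-partitions of $G$ and $[\cdot]$ denotes an Iverson bracket. Writing $[n_d\ge C(d)]=1-[n_d<C(d)]$ and expanding the product gives
\[
\sum_\sigma \prod_{d\in D}\bigl(1-[n_d(\sigma)<C(d)]\bigr)
 = \sum_{S\subseteq D}(-1)^{|S|}\sum_\sigma \prod_{d\in S}[n_d(\sigma)<C(d)],
\]
so it suffices to show that each inner sum --- the number of $M$-partitions in which every part $d\in S$ contains strictly fewer than $C(d)$ vertices --- can be computed with polynomially many oracle calls.

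First I would evaluate each inner sum by grouping the $M$-partitions according to the exact preimages of the parts in $S$. For a fixed $S$, a partition $\sigma$ with $n_d(\sigma)<C(d)$ for all $d\in S$ is determined on $S$ by the tuple $(T_d)_{d\in S}$ of pairwise-disjoint sets $T_d=\sigma^{-1}(d)$, each satisfying $|T_d|<C(d)$. Conversely, for any such tuple the number of $M$-partitions with $\sigma^{-1}(d)=T_d$ for every $d\in S$ equals the number of list $M$-partitions respecting the list function $L_{S,(T_d)}$ that assigns $L(v)=\{d\}$ to each $v\in T_d$ and $L(v)=D\setminus S$ to every remaining vertex. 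Indeed, respecting this list forces $T_d\subseteq\sigma^{-1}(d)$ for each $d\in S$ while forbidding any other vertex from entering a part of $S$, and these two requirements together force $\sigma^{-1}(d)=T_d$ for all $d\in S$. Hence each inner sum equals $\sum_{(T_d)}\nListPartitions{M}(G,L_{S,(T_d)})$, the sum ranging over all tuples of pairwise-disjoint sets $(T_d)_{d\in S}$ with $|T_d|<C(d)$.

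Next I would bound the number of queries. Since $M$ and $C$ are fixed parameters, $|D|$ and each value $C(d)$ are constants independent of the input; writing $n=|V|$, the number of admissible sets $T_d$ is at most $\sum_{i=0}^{C(d)-1}\binom{n}{i}=O(n^{C(d)-1})$, so for a fixed $S$ the number of tuples is $O\bigl(n^{\sum_{d\in S}(C(d)-1)}\bigr)=O(n^{|C|})$. Summing over the $2^{|D|}$ choices of $S$, the total number of oracle queries is $O(2^{|D|}\,n^{|C|})$, a polynomial in $n$. Each query reuses the same graph $G$ with a different list function, is answered by the oracle for $\nListPartitions{M}$, and the final output is assembled as the signed sum above in polynomial time, completing the Turing reduction.

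I do not expect a serious obstacle; the one point requiring care is the correctness of the list encoding --- verifying that the singleton lists on $\bigcup_{d\in S}T_d$ together with the list $D\setminus S$ elsewhere count exactly those partitions whose restriction to the parts of $S$ realises the prescribed tuple, so that each cardinality-violating partition contributes with the multiplicity demanded by the inclusion--exclusion expansion. The feature that keeps the reduction polynomial-time is precisely that the bounds $C(d)$ are constants, which makes the explicit enumeration of the ``small'' parts $T_d$ only polynomially large.
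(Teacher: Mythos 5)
Your proposal is correct and follows essentially the same route as the paper: an inclusion--exclusion over subsets of constrained parts (your expansion of $\prod_{d}(1-[n_d<C(d)])$ is the same computation as the paper's $|\Sigma|-|\bigcup_i A_i|$), with each ``failing'' count obtained by enumerating the polynomially many list functions that pin the small preimages to singletons and assign $D\setminus S$ elsewhere. The query bound $O(2^{|D|}\,n^{|C|})$ matches the paper's $|V|^{|C|}$ per subset, so there is nothing further to add.
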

\begin{proof}
    Given the cardinality constraint function~$C$, let $R = \{d\in
    D\mid C(d)>0\}$: that is, $R$ is the set of parts that have a
    non-trivial cardinality constraint.  For any set $P\subseteq R$,
    say that an $M$-partition~$\sigma$ of a graph~$G=(V,E)$
    \emph{fails on~$P$} if $|\{v\in V\mid \sigma(v) = d\}| < C(d)$ for
    all~$d\in P$. That is, if $\sigma$~violates the cardinality
    constraints on all parts in~$P$ (and possibly others, too).
      Let $\Sigma$ be the set of all $M$-partitions of our given input
    graph~$G$.  
For $i\in R$, let $A_i = \{\sigma \in \Sigma \mid \mbox{$\sigma$ fails on $\{i\}$}\}$ and
let $A=\bigcup_{i\in R} A_i$.
By inclusion-exclusion,
\begin{align*}
|A| &= -\!\!\sum_{\emptyset \subset P \subseteq R} {(-1)}^{|P|} \left|\bigcap_{i\in P} A_i\right|\\
&= -\!\!\sum_{\emptyset\subset P \subseteq R} {(-1)}^{|P|} \big|\{\sigma \in \Sigma \mid
\mbox{$\sigma$ fails on $P$}\}\big|\,.
\end{align*}

  We wish to compute
    \begin{align*}
        \big|\{\sigma\in \Sigma\mid \text{$\sigma$ satisfies $C$}\}\big|
            &= \big|\Sigma\big| - |A| \\
            &= \big|\Sigma\big|
               + \sum_{\emptyset \subset P\subseteq R}
                     (-1)^{|P|}
                     \big|\{\sigma\in\Sigma \mid \text{$\sigma$
                           fails on $P$}\}\big|\,.
    \end{align*}

    Therefore, it suffices to show that we can use lists to count the
    $M$-partitions that fail on each non-empty $P\subseteq R$.  For
    such a set~$P$, let $L_P$ be the set of list functions~$L$ such
    that
    \begin{itemize}
    \item for all $v\in V$, either $L(v) = D\setminus P$ or $L(v) =
        \{p\}$ for some $p\in P$, and
    \item for all $p\in P$, $\big|\big\{v\in V\mid L(v)=\{p\}\big\}\big| < C(p)$.
    \end{itemize}
    Thus, the set of 
$M$-partitions that respect some $L\in L_P$
    is precisely the set of 
$M$-partitions
that fail on~$P$.
Also, for distinct $L$ and~$L'$ in $L_P$, the set of $M$-partitions that respect~$L$ is disjoint
from the set of $M$-partitions that respect~$L'\!$. So we can compute 
$ \big|\{\sigma\in\Sigma \mid \text{$\sigma$
                           fails on $P$}\}\big|$
by making $|L_P|$ calls to 
\nListPartitions{M}, noting that $|L_P|\leq |V|^{|C|}\!$.
\end{proof}

As an example of a combinatorial structure that can be represented as
an $M$-partition problem with cardinality constraints, consider the
\emph{homogeneous pairs} introduced by Chv\'atal and
Sbihi~\cite{CS1987:Bull-free}.  A homogeneous pair in a graph
$G=(V,E)$ is a partition of $V$ into sets $U$, $W_1$ and $W_2$ such
that:
\begin{itemize}
\item $|U|\geq 2$;
\item $|W_1|\geq 2$ or $|W_2|\geq 2$ (or both);
\item for every vertex~$v\in U$, $v$~is either adjacent to every
    vertex in $W_1$ or to none of them; and
\item for every vertex~$v\in U$, $v$~is either adjacent to every
    vertex in $W_2$ or to none of them.
\end{itemize}

Feder et al.~\cite{FHKM} observe that
the problem 
of determining whether a graph has a homogeneous pair
can be represented as 
the problem of determining whether it has an
\Mhp{}-partition satisfying certain constraints,
where $D = \{1, \dots, 6\}$ and
\begin{equation*}
    \Mhp = \begin{pmatrix}
               * & * & 1 & 0 & 1 & 0 \\
               * & * & 1 & 1 & 0 & 0 \\
               1 & 1 & * & * & * & * \\
               0 & 1 & * & * & * & * \\
               1 & 0 & * & * & * & * \\
               0 & 0 & * & * & * & *
           \end{pmatrix}.
\end{equation*}  

$W_1$ corresponds to the set of vertices mapped to part~$1$ (row 1 of $\Mhp$),
$W_2$ corresponds to the set of vertices mapped to part~$2$ (row 2 of $\Mhp$),
and $U$ corresponds to the set of vertices mapped to parts~$3$--$6$. 

In fact, there is a one-to-one correspondence between the homogeneous pairs of~$G$
in which $W_1$ and~$W_2$ are non-empty and the $\Mhp$-partitions $\sigma$ of~$G$
that satisfy the following additional constraints.  For $d\in D$, let $N_\sigma(d) = |\{v\in V(G)\mid \sigma(v)=d\}|$ be the number of vertices that $\sigma$~maps to part~$d$.  We require that
\begin{itemize}
\item $N_\sigma(3) + N_\sigma(4) + N_\sigma(5) + N_\sigma(6)\geq 2$,
\item $N_\sigma(1) > 0$ and $N_\sigma(2) > 0$, and
\item at least one $N_\sigma(1)$ and $N_\sigma(2)$ is at least~$2$.
\end{itemize}
To see this, consider a homogeneous pair $(U,W_1,W_2)$
in which~$W_1$ and~$W_2$ are non-empty. Note that there is exactly one $\Mhp$-partition
of~$G$ in which vertices in~$W_1$ are mapped to part~$1$ and vertices in~$W_2$
are mapped to part~$2$ and vertices in~$U$ are mapped to parts~$3$--$6$.
There is exactly one part available to each $v\in U$ since $v$ has an edge or non-edge
to~$W_1$ (but not both!) ruling out exactly two parts
and $v$ has an edge or non-edge to~$W_2$ ruling out an additional part.
Going the other way, an $\Mhp$-partition that satisfies the constraints includes a homogeneous pair.

Now let 
\begin{equation*}
    \Mhs = \begin{pmatrix}
    * & 0 & 1\\
    0 & * & *\\
    1 & * & * 
           \end{pmatrix}.
\end{equation*}  
There is a one-to-one correspondence 
between the homogeneous pairs of~$G$
in which  $W_2$ is empty and the $\Mhs$-partitions of~$G$
that satisfy the following additional constraints.
\begin{itemize}
\item At least two vertices are mapped to parts~$2$--$3$ (vertices in these parts are in~$U$).
\item At least two vertices are mapped to part~$1$ (vertices in this part are in~$W_1$).
\end{itemize} 
Symmetrically, there is also a one-to-one correspondence 
between the homogeneous pairs of~$G$
in which  $W_1$ is empty and the $\Mhs$-partitions of~$G$
that satisfy the  above constraints.  (Partitions according to $\Mhs$
correspond to so-called ``homogeneous sets'' but we do not need the details of
these.)

It is known from \cite{EKR1997:Hom-pair} that, in deterministic
polynomial time, 
it is possible to determine whether a graph
contains a homogeneous pair and, if so, to find one.  We show that the
homogeneous pairs in a graph can also be counted 
in polynomial time.
We start by considering the relevant list-partition counting problems.
 
\begin{theorem}
\label{thm:hompair}
    There  are polynomial-time algorithms for \nListPartitions{\Mhp} and \nListPartitions{\Mhs}.
\end{theorem}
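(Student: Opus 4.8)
The plan is to obtain both algorithms as instances of the explicit dichotomy in Theorem~\ref{thm:explicitdichotomy}. Since $\nListPartitions{M}$ is exactly the special case of $\LMPartitions{\lists}{M}$ with $\lists=\powerset{D}$, and $\powerset{D}$ is trivially subset-closed, it suffices to prove that neither $\Mhp$ nor $\Mhs$ admits a $\powerset{D}$-$M$-derectangularising sequence. Theorem~\ref{thm:explicitdichotomy} would then place both counting problems in $\FP$, i.e.\ give polynomial-time algorithms.

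The key structural observation I would make is that, for each of these two matrices, the set of positions carrying a~$*$ is an equivalence relation on the index set $D$. Writing $E_M=\{(i,j)\mid M_{i,j}=*\}$, reflexivity holds because every diagonal entry is a~$*$, symmetry holds because $M$ is symmetric, and transitivity can be read off directly from the block-diagonal shape of the $*$-pattern: for $\Mhp$ the classes are $\{1,2\}$ and $\{3,4,5,6\}$ (the ``$W$-type'' and ``$U$-type'' parts), and for $\Mhs$ they are $\{1\}$ and $\{2,3\}$. Let $P_1,\dots,P_r$ denote the equivalence classes of $E_M$.

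Given this, the relations feeding into a putative derectangularising sequence are highly constrained. For any $X,Y\subseteq D$ we have
\[
  H^M_{X,Y}=E_M\cap(X\times Y)=\bigcup_{t=1}^{r}\bigl((X\cap P_t)\times(Y\cap P_t)\bigr),
\]
a union of blocks whose row-supports $X\cap P_t$ and column-supports $Y\cap P_t$ are pairwise disjoint across~$t$; in particular each $H^M_{X,Y}$ is already rectangular. The main step is then to show by induction on the length that every composition behaves the same way. Since a $*$-step never crosses between classes, a pair $(a,c)$ lies in $H^M_{D_1,D_2}\circ\cdots\circ H^M_{D_{k-1},D_k}$ only if $a$ and $c$ lie in a common class $P_t$ and every intermediate set $D_i$ meets $P_t$; concretely the composition equals
\[
  \bigcup_{t\,:\,D_i\cap P_t\neq\emptyset\ \forall i}\bigl((D_1\cap P_t)\times(D_k\cap P_t)\bigr),
\]
which is again a union of blocks supported on distinct classes and hence rectangular.

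Consequently no sequence $D_1,\dots,D_k$ of subsets of $D$ can meet the non-rectangularity requirement of Definition~\ref{def:derect}, so neither matrix admits a $\powerset{D}$-$M$-derectangularising sequence, and the theorem follows from Theorem~\ref{thm:explicitdichotomy}. I expect the only real obstacle to be spotting the equivalence-relation structure of the $*$-pattern; once that is in hand the argument needs neither a case analysis over $M$-purifying families nor the length bound of Lemma~\ref{lem:small_derect}, because rectangularity of every composition is forced independently of the $M$-purifying condition.
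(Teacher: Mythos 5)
Your proof is correct, but it takes a genuinely different route from the paper's. The paper argues non-constructively: it supposes a $\powerset{D}$-$\Mhp$-derectangularising sequence exists, observes that the same sequence would also be derectangularising for the pure matrix obtained from $\Mhp$ by replacing every $1$ with a $0$ (since $H^M_{X,Y}$ depends only on the positions of the $*$s, and a pure matrix makes every family of lists $M$-purifying), concludes from Theorem~\ref{thm:explicitdichotomy} that counting list partitions for that pure matrix is \numP{}-complete, and then notes that this latter problem is counting list homomorphisms to the disjoint union of a looped edge and a looped $4$-clique, which is in \FP{}; hence $\numP=\FP$ and the theorem would hold trivially in that branch. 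You instead show directly that no derectangularising sequence exists, via the observation that the $*$-pattern of each matrix is an equivalence relation (classes $\{1,2\},\{3,4,5,6\}$ for $\Mhp$ and $\{1\},\{2,3\}$ for $\Mhs$), so every $H^M_{X,Y}$, and every composition of such relations, is a union of blocks with pairwise disjoint row- and column-supports and is therefore rectangular; I have checked your block formula for compositions and the rectangularity of block-disjoint unions, and both are sound, so the $M$-purifying condition and Lemma~\ref{lem:small_derect} are indeed not needed. Your version buys an unconditional, constructive conclusion --- it shows that the algorithm of Lemma~\ref{lem:positive} itself solves these two problems --- and it isolates a reusable sufficient condition for tractability (a $*$-pattern that is an equivalence relation), which in retrospect is exactly why the paper's auxiliary homomorphism problem is easy: its target graph is a disjoint union of complete looped graphs. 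What the paper's trick buys is that it needs no structural analysis of the matrix at all and transfers to any matrix whose $*$-pattern is the adjacency relation of a graph with a tractable counting list homomorphism problem, even when that pattern is not an equivalence relation.
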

\begin{proof}
We first show that there is a polynomial-time algorithm for \nListPartitions{\Mhp}.
The most natural way to do this
would be to show that there is no $\powerset{D}$-\Mhp{}-derectangularising sequence
and then apply Theorem~\ref{thm:explicitdichotomy}.
In theory, we could show that there is no $\powerset{D}$-\Mhp{}-derectangularising sequence
  by brute force since $|D|=6$, but
the number of possibilities is too large to make this feasible. Instead, we argue non-constructively.

First,
    if there is no $\powerset{D}$-\Mhp{}-derectangularising sequence,
    the result follows from Theorem~\ref{thm:explicitdichotomy}.

    Conversely, suppose that $D_1, \dots, D_k$ is a
    $\powerset{D}$-\Mhp{}-derectangularising sequence.  Let $M$~be the
    matrix such that $M_{i,j} = 0$ if $(\Mhp)_{i,j} = 1$ and $M_{i,j}
    = (\Mhp)_{i,j}$, otherwise.  $D_1, \dots, D_k$ is also a
    $\powerset{D}$-$M$-derectangularising sequence, since $H^M_{X,Y} =
    H^{\Mhp}_{X,Y}$ for any $X,Y\subseteq D$ and any sequence $D_1,
    \dots, D_k$ is $M$-purifying because $M$~is already pure.
    Therefore, by Theorem~\ref{thm:explicitdichotomy}, counting list
    $M$-partitions is \numP{}-complete.

    However, counting the list $M$-partitions of a graph~$G$
    corresponds to counting list homomorphisms from~$G$ to the $6$-vertex
    graph~$H$ whose two components are an edge and a $4$-clique, and
    which has loops on all six vertices.  There is a very
    straightforward polynomial-time algorithm for this problem (a
    simple modification of the version without lists in~\cite{DG}).
    Thus, $\numP=\FP$ so, in particular, there is a polynomial-time
    algorithm for counting list \Mhp{}-partitions.

    The proof that there is a polynomial-time algorithm for
    \nListPartitions{\Mhs} is similar.
\end{proof}

\begin{corollary}
\label{cor:hompair}
    There is a polynomial-time algorithm for counting the homogeneous
    pairs in a graph.
\end{corollary}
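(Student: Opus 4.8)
The plan is to count homogeneous pairs by splitting them into three disjoint classes according to which of $W_1,W_2$ is empty, and to express the number of pairs in each class as a signed sum of polynomially many \nListPartitions{\Mhp} and \nListPartitions{\Mhs} counts, each computable in polynomial time by Theorem~\ref{thm:hompair}. Since a homogeneous pair requires $|W_1|\geq 2$ or $|W_2|\geq 2$, at least one of $W_1,W_2$ is non-empty, so every homogeneous pair lies in exactly one of the classes: (a) both $W_1$ and $W_2$ non-empty; (b) $W_2=\emptyset$ (hence $|W_1|\geq 2$); and (c) $W_1=\emptyset$ (hence $|W_2|\geq 2$). By the correspondences established above, the pairs in class~(a) are in bijection with the \Mhp{}-partitions $\sigma$ satisfying $N_\sigma(1)\geq 1$, $N_\sigma(2)\geq 1$, $N_\sigma(3)+N_\sigma(4)+N_\sigma(5)+N_\sigma(6)\geq 2$ and $N_\sigma(1)\geq 2 \vee N_\sigma(2)\geq 2$, while the pairs in each of classes~(b) and~(c) are in bijection with the \Mhs{}-partitions satisfying the same two constraints (at least two vertices in parts $2$--$3$ and at least two in part~$1$). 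In particular classes~(b) and~(c) have equal cardinality, so the total number of homogeneous pairs equals the count for class~(a) plus twice the number of constrained \Mhs{}-partitions.

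The constraints arising here are more general than the per-part lower bounds handled by Proposition~\ref{prop:cardinality}: they involve a lower bound on the \emph{total} size of a group of parts (for example $N_\sigma(3)+\dots+N_\sigma(6)\geq 2$) and a disjunction. I would first dispose of the disjunction by inclusion--exclusion, rewriting ``$N_\sigma(1)\geq 2 \vee N_\sigma(2)\geq 2$'' (together with $N_\sigma(1),N_\sigma(2)\geq 1$) as $\#\{N_\sigma(1)\geq 2,\,N_\sigma(2)\geq 1\}+\#\{N_\sigma(1)\geq 1,\,N_\sigma(2)\geq 2\}-\#\{N_\sigma(1)\geq 2,\,N_\sigma(2)\geq 2\}$. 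What then remains in every term is to count $M$-partitions satisfying a fixed collection of constraints each of the form ``at least $k$ vertices are mapped into $S$'', where the part-sets $S$ involved are pairwise disjoint (here $\{1\}$, $\{2\}$ and $\{3,4,5,6\}$ for \Mhp{}, and $\{1\}$ and $\{2,3\}$ for \Mhs{}) and the thresholds sum to a constant.

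The key step, and the main obstacle, is to show that such grouped cardinality counts reduce to \nListPartitions{M}. I would do this by the same inclusion--exclusion device used in the proof of Proposition~\ref{prop:cardinality}, generalised from single parts to disjoint groups: for a subset $P$ of the group constraints, the $M$-partitions that violate all of them are counted by summing, over all list functions that give some fewer-than-$k_S$ vertices the list $S$ for each violated group $S\in P$ and the list $D\setminus\bigcup_{S\in P}S$ to every other vertex, the number of respecting $M$-partitions. Disjointness of the groups makes this certificate unique (so there is no double counting), and the number of such list functions is polynomial because the thresholds sum to a constant. Each summand is an instance of \nListPartitions{M} with $M=\Mhp$ or $M=\Mhs$, solved in polynomial time by Theorem~\ref{thm:hompair}, and the $P=\emptyset$ term is simply the unconstrained count \nPartitions{M}. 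Combining the three classes then expresses the number of homogeneous pairs as a signed sum of polynomially many polynomial-time computations, yielding the desired polynomial-time algorithm.
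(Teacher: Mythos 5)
Your proof is correct and follows the same overall strategy as the paper's: split the homogeneous pairs into three classes according to which of $W_1,W_2$ is empty, translate each class into constrained \Mhp{}- or \Mhs{}-partition counts via the stated bijections, remove the disjunction on $N_\sigma(1),N_\sigma(2)$ by inclusion--exclusion, and reduce the resulting cardinality-constrained counts to \nListPartitions{\Mhp} and \nListPartitions{\Mhs}, which are polynomial-time by Theorem~\ref{thm:hompair}. The one place you genuinely diverge is the treatment of the grouped constraint $N_\sigma(3)+N_\sigma(4)+N_\sigma(5)+N_\sigma(6)\geq 2$. The paper stays entirely within the framework of Proposition~\ref{prop:cardinality} (per-part lower bounds only) and re-expresses the group constraint as a disjunction of per-part cardinality constraint functions --- ``some part among $3$--$6$ contains at least two vertices, or at least two of them contain at least one vertex'' --- followed by a further round of inclusion--exclusion to eliminate double-counting. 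You instead generalise Proposition~\ref{prop:cardinality} itself from single parts to pairwise disjoint groups of parts, counting the partitions that fail on every group in $P$ via list functions that assign the list $S$ to the fewer-than-$k_S$ vertices certifying failure on each $S\in P$ and the list $D\setminus\bigcup_{S\in P}S$ to all other vertices; disjointness of the groups makes the certificate unique (so there is no double-counting within a fixed $P$), and the constant thresholds keep the number of list functions polynomial, so the argument goes through exactly as in the single-part case. Your route is slightly cleaner --- one fewer layer of inclusion--exclusion and no ad hoc re-encoding of the group constraint --- and produces a reusable strengthening of Proposition~\ref{prop:cardinality}, while the paper's route avoids proving any new reduction. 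Both are correct; the only point to be explicit about (which the paper also leaves implicit) is the convention that the two classes with $W_1=\emptyset$ and $W_2=\emptyset$ are counted as distinct ordered triples, which is what justifies adding twice the constrained \Mhs{}-count.
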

\begin{proof}
    We are given a graph $G=(V,E)$ and we wish to compute the number
    of homogeneous pairs that it contains.
By the one-to-one correspondence given earlier, it suffices to show how to count
$\Mhp$-partitions and $\Mhs$-partitions of~$G$ satisfying additional constraints.
We start with the first of these.
Recall the constraints on the $\Mhp$-partitions $\sigma$ that we wish to count:

\begin{itemize}
\item $N_\sigma(3) + N_\sigma(4) + N_\sigma(5) + N_\sigma(6)\geq 2$,
\item $N_\sigma(1) > 0$ and $N_\sigma(2) > 0$, and
\item at least one $N_\sigma(1)$ and $N_\sigma(2)$ is at least~$2$.
\end{itemize}

    Define three subsets $\Sigma_1$, $\Sigma_2$ and $\Sigma_{1,2}$
    of the set of 
\Mhp{}-partitions of~$G$ that satisfy the constraints.  
In the definition of each of $\Sigma_1$, $\Sigma_2$ and 
$\Sigma_{1,2}$,
we will require that
parts~$1$ and~$2$ are non-empty and
    parts~$3$--$6$ contain a total of at least two
    vertices.  In~$\Sigma_1$, part~$1$ must contain at least two
    vertices; in~$\Sigma_2$, part~$2$ must contain at least two
    vertices; in~$\Sigma_{1,2}$, both parts $1$ and~$2$ must contain
    at least two vertices.  The number of 
suitable $\Mhp$-partitions of~$G$
is
    $|\Sigma_1| + |\Sigma_2| - |\Sigma_{1,2}|$.

    Each of $|\Sigma_1|$, $|\Sigma_2|$ and $|\Sigma_{1,2}|$ can be
    computed by counting the $\Mhp$-partitions of~$G$ that satisfy
    appropriate cardinality constraints.  Parts $1$ and~$2$ are
    trivially dealt with.  The requirement that parts $3$--$6$ must
    contain at least two vertices between them is equivalent to saying
    that at least one of them must contain at least two vertices or at
    least two must contain at least one vertex.  This can be expressed
    with a sequence of cardinality constraint functions and using
    inclusion--exclusion to eliminate double-counting.

Counting  constrained $\Mhs$-partitions of~$G$ is similar (but simpler).
\end{proof}

\bibliographystyle{plain}
\bibliography{\jobname}{} 

\end{document}